\newcommand{\PP}{\mathbb{P}}
\newcommand{\Prob}[1]{\mathbb{P}\left(#1\right)}
\newcommand{\E}{\mathbb{E}}
\newcommand{\one}[1]{\mathbb{I}\left\{#1\right\}}
\DeclareMathOperator{\sgn}{sgn} 
\DeclareMathOperator{\argmin}{argmin}
\DeclareMathOperator{\argmax}{argmax}
\newcommand{\R}{\mathbb{R}}
\newcommand{\N}{\mathbb{N}}
\newcommand{\pms}{\{\pm 1\}}
\newcommand{\oR}{\overline{R}}
\newcommand{\Loss}{\mathcal{L}}
\newcommand{\defeq}{\doteq}
\newcommand{\norm}[1]{\left\|#1\right\|}
\newcommand{\Fnorm}[1]{\norm{#1}_{\mathrm{F}}}
\def\ao#1{{{\color{black}#1}}}
\def\aor#1{{{\color{black}#1}}}
\begin{document}

\title{Gradient Descent for Sparse Rank-One Matrix Completion for Crowd-Sourced Aggregation of Sparsely Interacting Workers\thanks{This work was supported partly by the National Science Foundation Grant 1527618, the Office of Naval Research Grant N0014-18-1-2257, the Hariri Institute's Data Science Fellowship and by a gift from the ARM corporation.}}

\author{
\name Yao Ma \email yaoma@bu.edu \\ 
\addr Division of Systems Engineering \\ 
Boston University \\ 
\AND
\name Alex Olshevsky \email alexols@bu.edu \\
       \addr Department Electrical and Computer Engineering\\
             Division of Systems Engineering \\ 
       Boston University\\
       \AND
       \name Venkatesh Saligrama \email srv@bu.edu \\ 
       \addr Department Electrical and Computer Engineering\\
       Boston University\\
   	   \AND
       \name Csaba Szepesvari \email szepi@google.com \\
       \addr Google Deepmind}
   
\editor{Inderjit Dhillon}
\maketitle

\begin{abstract}%
We consider worker skill estimation for the single-coin Dawid-Skene crowdsourcing model. 
In practice, skill-estimation is challenging because worker assignments are sparse and irregular due to the arbitrary and uncontrolled availability of workers. We formulate skill estimation as a rank-one correlation-matrix completion problem, where the observed components correspond to \emph{observed} label correlation between workers. We show that the correlation matrix can be successfully recovered and skills are identifiable if and only if the sampling matrix (observed components) does not have a bipartite connected component. We then propose a projected gradient descent scheme and show that skill estimates converge to the desired global optima for such sampling matrices. Our proof is original and the results are surprising in light of the fact that even the weighted rank-one matrix factorization problem is NP-hard in general. Next, we derive sample complexity bounds in terms of spectral properties of the {\it signless} Laplacian of the sampling matrix. Our proposed scheme achieves state-of-art performance on a number of real-world datasets.

\end{abstract}

\begin{keywords}
  distributed optimization, stochastic gradient descent
\end{keywords}

\section{Introduction}
Crowdsourcing can be a scalable approach to collecting data for tasks that require human knowledge such as image recognition and natural language processing. Through crowdsourcing platforms such as  Amazon Mechanical Turk, a large number of data tasks can be assigned to workers who are asked to give binary or multi-class labels. The goal of much of crowdsourcing research is to estimate the unknown ground truth, given that the quality of the workers can be variable. Indeed, due to the high variability of worker skills, aggregating true labels becomes a challenging problem.

One straightforward approach is to directly estimate the unknown labels by majority voting from the information provided by workers. In this approach, an implicit assumption is that all workers have identical skills on each task; on the other hand, one might expect the answers from reliable workers are more likely to be accurate. In practice, the crowd is often highly heterogeneous in terms of skill levels, and downweighting unskilled workers and upweighting skilled workers can have  a significant impact on the performance. Many aggregation methods ranging from the weighted majority vote to more complex schemes that incorporate worker quality and accuracy have been proposed. Theoretically, recent works \citep{BeKo14:NIPS,Sze15:MSc} have investigated the importance of having precise knowledge of skill quality for accurate prediction of ground-truth labels. Moreover,  accurate skill estimation can also be useful for other purposes like worker training, task assignment, or for use in worker-compensation schemes. 

There are two challenges in estimating skills of workers given that the problem setup is unsupervised. The first challenge is to construct a  skill model for each worker. Many papers achieve empirical success by applying Dawid \& Skene (DS) model \citep{DaSke79}, which is a simple model that parameterized by the probability of a worker answers the true label. In this paper, the basis of our works is the homogeneous DS model where each worker is assumed to have the same skill level on each class. More specifically, we focus on the single-coin (DS) model for binary crowdsourcing problem in this paper (though in Section~\ref{sec:WLS}, we extend our algorithm to multiclass problems).

The second challenge is that, in practice, workers are only available for a short period of time which means only a small subset of data is labeled by each worker. This introduces a  sparse worker-task assignment \citep{Karger2013,dalvi_aggregating_2013}. An additional subtle issue is the lack of diversity in terms of interactions between the workers: a worker is often grouped with a limited subset of workers across all tasks. This situation is remarkably evident on benchmark datasets: The 'Web' dataset has $177$ workers, with $3$ to $20$ workers/task and each worker on average interacting with about another $2.7$ workers only, while the standard deviation of how many workers a worker is interacting with is $15$. The 'RTE' dataset has $164$ workers, has only $10$ workers/task on the average and each worker interacts with fewer than $2.5$ other workers, while the standard deviation of the interaction degree is $20$. This is in contrast to most existing crowdsourcing research which only considered estimate skills with nearly complete data. We are therefore motivated by the need to make spectral methods suitable for non-regular worker-task data often seen in practice. 

In this paper, we suppose that the input comes in the form of a \emph{sparsely filled} 
$W\times T$ worker-task label matrix.
The workers possess unique unknown skills, and tasks assume unique unknown labels. The worker-task label matrix collects the random labels provided by the workers for individual tasks. The skill level of a worker is the (scaled) probability of the worker's label matching the true unknown label for any of the tasks.
The observed labels are independent of each other. 

Given the workers' skill levels, the optimal way \citep{NiPa81:WM,shapley_optimizing_1984} to reconstruct the unknown labels is to use weighted majority voting where the weights assigned to the label provided by a worker is equal to the log-odds underlying the worker's skill. Since skill levels are unknown, we follow prior works
\citep{dalvi_aggregating_2013,BeKo14:NIPS,Sze15:MSc,BoCo16} and adopt a two-step approach, whereby worker skills are first estimated and then these skills are used with the optimal weighting method to recover labels. Our main contributions are as follows:

\begin{enumerate}
    \item We construct a skill estimator under single-coin model as a weighted least-squares rank-one matrix completion/factorization problem. The matrix being factored is the correlation matrix among the workers, with the weights compensating for the varying accuracy in the inter-worker correlations. 
    
    \item We show that skills can be recovered from the observation data matrix whenever the worker-worker interaction graph does not contain a bipartite connected component. In particular, for any crowdsourcing problem that has non-bipartite worker-worker interaction graph, there always exists a method to estimate true skills. 
    
    \item In the context of minimizing the objective function, we propose to use projected gradient descent which is theoretically verified to converge to the true skills. We give natural and mild conditions on the weighting matrix under which we prove that projected gradient descent,  despite the objective being non-convex, is guaranteed to find the rank-one decomposition of the true moment matrix. 
    
    \item We extend our algorithm to multiclass case by applying the homogeneous DS model. Under this model, we prove that any multiclass problem can be formulated as a weighted least-squares rank-one problem  where the unknown variable is a linear function of true skills.

\end{enumerate}

{\em Our approach is also of independent interest, as we derive a fundamental result about  symmetric rank-one matrix completion: the unobserved entries can be recovered by gradient descent in polynomial time whenever the sampling matrix is irreducible and non-bipartite}.  Our results for convergence of the proposed gradient descent scheme should be somewhat surprising given that the related weighted low-rank factorization problem is known to be NP-hard even for the rank-one case \citep{GiGli11:NPhardness}. 
In contrast to our approach, existing results in low-rank matrix completion require strong assumptions on the weighting matrix, typically some form of incoherence, e.g., \citep{Rong}.

\if0
By treating the binarized weighting matrix as the incidence matrix of a graph whose nodes corresponds to workers, the condition on the special structure of the weight matrix can be stated as that this graph should have a single component and should be non-bipartite. 
Coincidentally, we show that these are the sufficient and necessary conditions for the skill-identifiability for a fixed number of workers as the number of tasks assigned to workers grows without limit and when workers not connected work together only finitely many times. 

\fi

\if0
Just like \cite{dalvi_aggregating_2013} and \cite{BoCo16}, we work with worker-worker agreement data. This data naturally leads to the worker-worker interaction graph whose nodes are workers and two workers are connected if there are tasks that they both provided labels for. The graph edges can also be labeled by the number of such tasks, as well as the agreement between the workers.
The first question that we consider is what worker-task assignment patterns can provide sufficient information to accurately and reliably estimate the worker skills, as the number of tasks grows without limit.%
\footnote{Without any further information, due to the symmetry of skills and labels, skills and labels can only be determined up to an unknown sign: flipping all labels and skills leads to the same likelihood for any fixed observed data. To break symmetry, following \cite{BoCo16}, 
we assume that the total skill level is positive (skills are normalized to take values in $[-1,1]$). 
Other symmetry breaking techniques could also be easily incorporated.
}
The question is answered in terms of the properties of the underlying worker-worker interaction graph: Identifiability holds if and only if this graph has a single component and is non-bipartite.

We make no further assumptions either on task priors or independence of ground-truth labels across tasks (unlike \cite{DaSke79}), nor do we assume anything about the task assignments (unlike \cite{Karger2013}).  \todoc{I replaced `as in' with `unlike'. I think `unlike' works better here.. Or we should rephrase..}

It is convenient to frame an instance of crowdsourcing data in terms of an attributed interaction graph. In this view, nodes denote workers, an edge between nodes denotes participation of corresponding workers in a common task (interacting workers). The average number of edges-per-worker characterizes the sparsity (or lack of diversity) of interactions. Weights on each edge denote the number of tasks (interactions) common to the two workers incident on the edge. We also associate additional attributes on each edge such as the average agreement between workers incident on that edge.

In the absence of ground truth, under the \cite{DaSke79} model, intuitively, we can only learn skill-levels through agreements/disagreements between different workers. Sparse interaction graphs (low-weights and low-degree) pose fundamental difficulties in skill estimation. For instance, if no two workers interact (degree zero interaction graph) it is not possible to infer skill-levels for different workers.

In this paper, we characterize fundamental necessary and sufficient conditions on interaction graphs required for consistent skill identification. In particular, when the agreement between any two workers (cross-correlations) can be estimated without error, a necessary and sufficient condition for consistent skill estimation is that the limiting interaction graph be irreducible and non-bipartite. Naturally, this result can also be viewed asymptotically as the limiting case where we let the number of tasks approach infinity and the interaction graph is populated by edges with non-vanishing interactions among the workers\todoy{why limiting interaction graph}.



We next propose a projected gradient descent (PGD) method for consistent skill estimation based   
minimizing a non-convex weighted rank-one minimization function. 
Our objective is to minimize the difference between expected correlation (product of the skills) and the observed correlations. 

This problem can be viewed as a weighted rank-one matrix completion problem where the observed entries are observed correlations among the workers. An important aspect of the absence of ground-truth (lack of supervision) is that we do not have direct measurements about each worker's skill level.  Consequently, the problem cannot in general be reduced to conventional rank-one approximation and must be dealt with in its full generality. 

In the noiseless case, we prove that, whenever the skills are identifiable (irreducible and non-bipartite interaction graph), the gradient descent scheme converges to the unique skill vector. Our proof technique is of independent interest for rank-one matrix completion in that we develop necessary and sufficient conditions in terms of graph-theoretic properties. In contrast, conventional techniques in this context require assumptions such as incoherent or random sampling of matrix entries\cite{Rong}.

We then consider the noisy case, which arises on account of finite number of tasks, and present estimation error bounds in terms of spectral properties of the interaction graph.

We then test our proposed approach on both synthetic and real-world data. For estimating prediction accuracy we use a plug-in estimator based on estimated skills. 
We demonstrate on several real-world experiments that our proposed approach achieves state-of-art performance on both binary and multi-class datasets. 
\fi

\section{Related Work}

{\it Discriminative Approach:} In contrast to our two-step approach, several works adopt a discriminative method for label prediction. Specifically, \citet{Li2011,tian2015} directly identify true labels by various aggregation rules that incorporate worker reliability. 

{\it Skill Estimation:} As mentioned earlier, we work in the problem of estimating skills under the single-coin model. Past approaches to skill estimation are based on \emph{maximum likelihood/maximum posteriori} (ML/MAP) estimation, or \emph{moment matching}, or a combination of these. 
In particular, various versions of the EM algorithm have been proposed
to implement ML/MAP estimation, starting with the work of
\citet{DaSke79}. 
Variants and extensions of this method, tested in various problems, include
\citet{hui1980estimating,smyth1995inferring,albert2004cautionary,raykar2010learning,liu2012variational}.
A number of recent works were concerned with performance guarantees for Expectation Maximization (EM) and some of its variants
\citep{gao2013minimax,Zhang2014,GaoLZ16}.
Another popular direction is to add priors over worker skills, labels or worker-task assignments.
To properly deal with the extra information, various Bayesian methods (belief propagation, mean-field and variational methods) have been considered 
\citep{raykar2010learning,karger_iterative_2011,liu2012variational,Karger2013,karger_budget-optimal_2014}. 
Moment matching is also widely used 
\citep{ghosh2011moderates,dalvi_aggregating_2013,Zhang2014,GaoLZ16,BoCo16,zhang_spectral_2016}. 
With the exception of \citet{BoCo16}, who propose an ad-hoc method,
the algorithms in these works use matrix or tensor factorization.%
\footnote{While \citet{ghosh2011moderates} pioneered the matrix factorization approach, their work is less relevant to this discussion as they estimate the labels directly.}

In theory, an ML/MAP method which is \emph{guaranteed} to maximize the likelihood/posterior, is the ideal method to accommodate irregular worker-task assignments.
However, as far as we know, none of the existing algorithms, unless initialized with a moment-matching-based spectral method, is proven to indeed find a satisfactory approximate maximizer of the objective that it is maximizing \citep{zhang_spectral_2016}.
At the same time, moment matching methods that use spectral (and in general algebraic) algorithms implicitly assume the regularity of worker-task assignments, too.
Indeed, the approach of \citet{ghosh2011moderates} crucially relies on the regularity of the worker-task assignment (as the method proposed uses unnormalized statistics). In particular, this method is not expected to work at all on non-regular data.
Other spectral methods, being purely algebraic, implicitly treat all entries in the estimated matrices and tensors as if they had the same accuracy, which,
in the case of irregular worker-task assignments, is far from the truth. In particular, the need to explicitly deal with data with unequal accuracy is a widely recognized issue that has a long history in the low-rank factorization community, going back to the work of \citet{gabriel_lower_1979}.
Starting with this work, 
the standard recommendation is to reformulate the low-rank estimation problem as a weighted least-squares problem \citep{gabriel_lower_1979,srebro_weighted_2003}.
In this paper, we will also follow this recommendation.

While \citet{dalvi_aggregating_2013} also use a weighted least-squares objective, this is not by choice, but rather as a consequence of the need to normalize the data rather than to correct for the inaccuracy of the data. Furthermore, rather than considering the direct minimization of the resulting objective, they use two heuristic approaches that also use an unweighted spectral method.

In this light, our goal is to make spectral methods suitable for non-regular worker-task data often seen in practice. 

{\it Matrix Factorization/Completion:} Unlike the general matrix factorization problem arising in recommender systems \citep{koren}, we are primarily concerned with a rank-one estimation of square symmetric matrices. Existing results on matrix completion \citep{Rong} for square symmetric matrices are more general but require stronger assumptions on the matrix such as incoherence and random sampling.

\bigskip

\if0
The log-odds based

The weighted-majority 

 the unknown worker skills
are 
In terms of the high-level goals, 
skill estimation and label estimation have been widely studied

Our goal is to estimate the unknown labels, given an arbitrary worker-worker interaction

Our approach is unique in that we focus on the case when the worker-worker interaction graph is arbitrary and in particular
may be sparse. This is in contrast

As mentioned earlier, we work in the problem setting first consider by \cite{DaSke79}.

\cite{DaSke79}: expectation-maximization (EM) algorithm. 
Variant and extensions; tested in various settings 
\cite{hui1980estimating,smyth1995inferring,albert2004cautionary,raykar2010learning,liu2012variational}.
Performance guarantees for improved versions of the algorithm:
\cite{gao2013minimax,Zhang2014,GaoLZ16}

Bayesian techniques:
\cite{raykar2010learning,karger_iterative_2011,liu2012variational,karger_budget-optimal_2014,Karger2013}.

Spectral analysis of some correlation matrix
\cite{ghosh2011moderates}:
task-task
matrix with entries giving the number of workers that labeled two tasks in the
same manner.
 \cite{dalvi_aggregating_2013} consider
 worker-worker matrices, entries giving
 the number of tasks labeled in the same manner by two workers. 
Performance 
guarantees by perturbation analysis of the top eigenvector of the corresponding expected
matrix. The BP algorithm of \citeauthor{karger_iterative_2011} is in fact closely related to these spectral
algorithms: their message-passing scheme is very similar to the power-iteration method
applied to the task-worker matrix, as observed by the authors.

Theory:
\cite{gao2013minimax} provides performance guarantees
for two versions of EM, and derives lower bounds on the attainable prediction error (the
probability of estimating labels incorrectly). 
\cite{Zhang2014}
 provides lower bounds on the estimation
error of the workers reliability as well as performance guarantees for an improved version
of EM relying on spectral methods in the initialization phase.  

Algorithm based on the notion of minimax conditional entropy
\cite{zhou2015regularized}, 
based on some probabilistic model jointly parameterized by the workers reliability and
the task difficulty. The algorithm is evaluated through numerical experiments on real datasets
only; no theoretical results are provided on the performance and the complexity of the
algorithm.

Variations: 
\cite{ho2012online,ho2013adaptive}
 assume
that the ground truth is known for some tasks and use it to learn the reliability of workers in
the exploration phase and to assign tasks optimally in the exploitation phase. 
\cite{liu2017online}
 also look for the optimal task assignment but without the knowledge of any true label:
an iterative algorithm similar to EM is used to infer the reliability of each worker, yielding
a cumulative regret in
$O(\ln^2 t)$
for
$t$
tasks compared to the optimal decision. 

\cite{chen2013pairwise,parisi2014ranking}
rank the workers with respect to their reliability.
Good for task assignment but not so great for data classification.

\cite{BoCo16}: Skill estimation;
``Triangular Estimation'';
correlations between triplets of workers.
\fi

\textbf{Notation and conventions}: 
The set of reals is denoted by $\R$, the set of natural numbers which does not include zero is denoted by $\N$. 
For $k\in \N$, $[k] \defeq \{1, \dots, k \}$.
Empty sums are defined as zero. We will use $\PP$ to denote the probability measure over the measure space
holding our random variables, while $\mathbb{E}$ will be used to denote the corresponding expectation operator.
For $p\in [1,\infty]$, we use $\norm{v}_p$ to denote the $p$-norm of vectors. Further, $\norm{\cdot}$ stands for the $2$-norm, $\Fnorm{\cdot}$ is the Frobenius-norm.
The cardinality of a set $S$ is denoted by $|S|$. 
For a real-valued vector $x$, $|x|$ denotes the vector whose $i$th component is $|x_i|$.
Proofs of new results, missing from the main text are given in the appendix.


\if0

What, why?

Crowdsourcing: Why should we care? 

Main novelty: Sparse interaction patterns.

We adopt the model by \cite{DaSke79}, but focus is on binary-valued labels.\footnote{
In our experiments, we will also consider the multiclass case.
Here, we will use a simple reduction of the multiclass to the binary case using a one-vs-all encoding of multiclass labels and by assuming a common skill level for any worker across the labels.
While this reduction is arguably limited,
the limitations of this reduction will be seen not to degrade performance (as compared
to the performance of competing methods).
It remains for future work to consider further alternatives to our reduction.
}

Main contributions: 
\begin{enumerate}
\item Conditions on the interaction patterns sufficient and necessary for reliably estimating worker quality and labels;
\item Formulating the worker-skill estimation problem as a weighted least-squares error minimization problem;
\item Guarantees for gradient descent to find a global optima of the resulting (non-convex) objective;
\item Experimental results on synthetic data that illustrate and strengthen the theoretical results;
\item Experimental results on real-world data that validate the approach.
\end{enumerate}
\fi
\vspace{-0.1in}
\section{Formal problem statement\label{sec:prob}}
\vspace{-0.05in}
We first consider binary crowdsourcing tasks where a set of workers provide $\{-1,1\}$ labels for a large number of items. Let $W\in \N$ be a positive integer denoting the number of workers.  A problem instance $\theta \doteq (s,A,g)$ 
is given by: a skill vector 
$s = (s_1,\dots,s_W)\in [-1,1]^W$ associating the skill level $s_w$ with worker $w$; 
the worker-task assignment set $A\subset [W]\times \N$, which captures which workers provide labels on which tasks; and the vector of ``ground truth labels'' $g\in\pms^{\N}$, which are unknown and which we would like to estimate.

When $A\subset [W]\times [T]$ for some $T\in \N$, we say that $\theta$ is a finite instance with $T$ tasks; 
otherwise we will say $\theta$ is an infinite instance. 
We allow infinite tasks to be able to discuss asymptotic identifiability.

It will be convenient to use $\Theta_W$ to denote the set of all possible problem instances, defined as above. For any instance $\theta \in \Theta_W$, the worker-task assignment set provides important information about worker interaction structure. Indeed, we can think of two workers as ``interacting'' if they provide a label for the same task. Formally, we define the interaction graph as follows. 

\begin{definition}[Interaction Graph] \label{intergraph}
Let $A$ be a worker-task assignment set. 
The (worker) interaction graph underlying $A$
is an undirected graph $G = G_A$ with vertex set $[W]$ 
such that $G= ([W],E)$ with $(i,j)\in E$ 
if there exists some task $t\in \N$ such that both $(i,t)$ and $(j,t)$ are elements of $A$.
\end{definition}
In the case of infinite instance, the interaction graph is a unweighted graph where an edge does not have any weight associated with it. For finite instances, it will make sense to assign a weight on the edge $(i,j)$ which is  the number of tasks shared by workers $i$ and $j$.
\if0
A fundamental property of the worker-task assignment set $A$ is  \todoc{Should this be here, or move this later?}
that it induces a partitioning over the set of workers. 
The partitioning is defined by whether two workers are simultaneously active. \todoc{Infinitely often?}
This defines a simple undirected graph over $[W]$, 
which we call the \emph{(worker) interaction graph}.
The connected components of this graph create a partition of $[W]$, which we denote by $\mathcal{P}$.
\fi

Our goal 
is to recover the ground truth labels $(g_t)_t$ 
given observations $(Y_{w,t})_{(w,t)\in A}$, where $Y_{w,t}$ is a $\pm 1$ random variable associated with worker $w$ and task $t$. According to the single-coin model, 
the observations are generated as
$Y_{w,t} = Z_{w,t} g_t$, 
where $(Z_{w,t})_{(w,t)\in A}$ is a collection of 
mutually independent random variables that satisfy $\E[Z_{w,t}]=s_{w}$. Note that this is the same as assuming that worker $w$ returns $g_t$ with probability $(1+s_w)/2$ and $-g_t$ with probability $(1-s_w)/2$.

Thus a worker $w$ with $s_w=1$ always returns the ground truth, while a worker with $s_w=-1$ always return the opposite label of the ground truth; and a worker with $s_w=0$ will always provide a random variable $Y_{w,t}$ with zero expectation regardless of the ground truth, i.e., a uniformly random label. 

\bigskip

\noindent {\bf Remark:} As will be discussed in detail later, some additional assumptions on the skill vector $s$ will be needed for accurate estimation of ground truth labels; obviously, little can be done if $s$ is the zero vector, i.e., if every worker returns uniformly random labels irrespective of the ground truth. Another obvious observation is, in the event that all workers agree, we cannot distinguish the possibility that $s$ is proportional to the all ones vector (and every worker provides the right label) from the possibility that $s$ is proportional to the negative of the all ones vector (and every worker provides the wrong label). One way to get around this problem is to assume that $s>0$, i.e., all workers have at least some skill; we will make this assumption when analyzing our projected gradient descent method. A weaker approach is to assume that $\sum_{w} s_w > 0$, i.e., that, on the net, the workers are collectively more prone to return correct rather than incorrect labels; as we discuss later, this is sufficient for identifiability. 

\bigskip


A (deterministic) \emph{inference method} underlying an assignment set $A$
takes the observations 
$(Y_{w,t})_{(w,t)\in A}$ and 
returns a real-valued score for each task in $A$;
the signs of the scores give the label-estimates.
Inference methods are aimed at working with finite assignment sets.
To process an infinite assignment set, we define the notion of \emph{inference schema}.
In particular, an \emph{inference schema} underlying an infinite assignment set $A$ is defined
as the infinite sequence of inference methods $\gamma^{(1)},\gamma^{(2)},\dots$ such that 
$\gamma^{(t)}$ is an inference method for the first $t$ tasks.

When important, we will use the subindex $\theta$ in $\PP_\theta$ to denote the probability when the problem instance is $\theta$.  We will use $\E_\theta$ to denote
the corresponding expectation operator.
With this notation,
the expected loss suffered by an inference schema $\gamma=(\gamma^{(1)},\gamma^{(2)},\dots)$ 
on the first $T$ tasks of an instance $\theta$ is
\[
\Loss_T(\gamma;\theta) = 
	\frac1T\, \E_{\theta}\Bigl[ 
     \textstyle	\sum_{t=1}^T\one{\gamma^{(T)}_t(Y) g_t \leq 0} \Bigr]\,.
\]
The optimal inference schema for an assignment set $A$ \emph{given} the knowledge 
of the skill vector $s\in [-1,1]^W$ is denoted by $\gamma^*_{s,A}$. The next section gives a
simple explicit form for this optimal schema.
The \emph{average regret} 
of an inference schema $\gamma = (\gamma^{(1)},\gamma^{(2)},\dots)$ for an instance $\theta\in \Theta$ is its excess loss on the instance $\theta$ as compared to the loss of the optimal schema:
\[
\oR_T(\gamma;\theta) = \Loss_T(\gamma;\theta) - \Loss_T(\gamma^*_{s,A};\theta)\,.
\]
If the average regret converges to zero, then the loss suffered by $\gamma$ asymptotically converges to the loss of the optimal inference.  Based on this, we define asymptotic consistency and learnability:
\begin{definition}[Consistency and Learnability]
An inference schema is said to be (asymptotically) \emph{consistent} for an instance set $\Theta\subset \Theta_W$
if, for any $\theta\in \Theta$,  $\limsup_{T\to\infty} \oR_T(\gamma)=0$.
An instance set $\Theta\subset \Theta_W$ is (asymptotically) \emph{learnable} if there is a consistent 
inference schema for it.
\end{definition}

\subsection{Two-Step Plug-in Approach}
\label{Sec:twostep}
In this work we will pursue a two-step approach based on first estimating the skill vector $s$ and then utilizing a plug-in classifier to predict the ground-truth labels. The motivation for a two-step approach stems from existing results that characterize accuracy in terms of skill estimation errors. 
For the sake of exposition, we recall some of these results now.

It has been shown in \citet{Li2011}, the optimal classifier is log-odds weighted majority voting given by the MAP rule. Suppose the prior distribution of true labels is the uniform distribution over $\{-1,1\}$; then the Bayes classifier is well-known to be the optimal classifier \citep{Duda2012},i.e.,
\begin{align}
\gamma_{s,A}^*(Y_t)=&\argmax_{l\in\{+1,-1\}}\PP(g_t=l|Y_t,s,A) \nonumber \\
=&\argmax_{l\in\{+1,-1\}}\PP(g_t=l)\PP(Y_t|g_t,s,A) \nonumber \\
=&\argmax_{l\in\{+1,-1\}}\log\PP(Y_t|g_t,s,A)  \nonumber \\ 
=& \argmax_{l\in\{+1,-1\}}\sum_{i=1}^W\log{\frac{1+s_i}{1-s_i}}\one{Y_{i,t}=l}, \label{eq:plugin}
\end{align}
where the third equation follows the assumption that $\PP(g_t=+1)=\PP(g_t=-1)=1/2$, and the fourth equation, after some algebra, is compact write to write the MAP estimator. Notice that $\gamma^*_{s,A}$ is a function of only one parameter, namely the skill vector $s$.  

Regarding the loss of the optimal schema, we start with introducing result when skills are known in advance. In this case, \citet{BeKo14:NIPS} provides an upper error bound, as well as an asymptotically matching lower  bound, which are stated as follows:
\begin{lemma}
For any task $t\in\N$, the optimal decision rule $\gamma^*$ satisfies
\begin{align*}
\Prob{\gamma^*_t(Y)\neq g_t}\leq&\exp{\left(-\frac{1}{2}\Phi\right)},\\
\Prob{\gamma^*_t(Y)\neq g_t}\geq&\frac{3}{4[1+\exp{(2\Phi+4\sqrt{\Phi})}]},
\end{align*}
where $\Phi=\sum_{i\in W}s_{i}\log{\left(\frac{1+s_i}{1-s_i}\right)}$ is called \emph{committee potential}.
\end{lemma}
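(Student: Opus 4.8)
Both inequalities are statements about the single weighted sum $f \defeq \sum_{i=1}^{W} w_i Y_{i,t}$ with $w_i \defeq \log\frac{1+s_i}{1-s_i}$: the derivation of the MAP rule \eqref{eq:plugin} shows, via $\one{Y_{i,t}=+1}-\one{Y_{i,t}=-1}=Y_{i,t}$, that $\gamma^*_t(Y)=\sgn(f)$. The plan is to first reduce to $g_t=+1$ using the model's $g_t\mapsto -g_t$ symmetry, so that $Y_{i,t}=Z_{i,t}$ are independent, $\pms$-valued, with $\E[Z_{i,t}]=s_i$; writing $p_i=\tfrac{1+s_i}{2}$ and $q_i=\tfrac{1-s_i}{2}$ (hence $e^{w_i}=p_i/q_i$), the error event becomes $\{f\le 0\}$, where I count the tie $\{f=0\}$ as an error (this only strengthens the upper bound and keeps the lower bound meaningful when $s\equiv 0$). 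Throughout, $\E[f]=\sum_i w_i s_i = \Phi$.

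\emph{Upper bound.} I would apply the exponential Markov inequality at the self-dual exponent $\tfrac12$ — the value minimizing $\EE{e^{-\lambda f}}=\prod_i\big(p_i^{1-\lambda}q_i^{\lambda}+q_i^{1-\lambda}p_i^{\lambda}\big)$, each factor being symmetric under $\lambda\leftrightarrow 1-\lambda$ —
\[
\Prob{\gamma^*_t(Y)\neq g_t}\ \le\ \EE{e^{-f/2}}\ =\ \prod_{i=1}^{W}\big(p_i e^{-w_i/2}+q_i e^{w_i/2}\big)\ =\ \prod_{i=1}^{W}2\sqrt{p_iq_i}\ =\ \prod_{i=1}^{W}\sqrt{1-s_i^{2}}\,,
\]
and then pass from $\prod_i\sqrt{1-s_i^2}$ to the exponential form $e^{-\Phi/2}$: this is a short per-coordinate comparison of $\log(1-s^2)$ with $s\log\frac{1+s}{1-s}$ on $(-1,1)$, where both sides and their first derivatives vanish at $0$, so only the sign of one further derivative is at issue. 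That is essentially the whole of the upper half.

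\emph{Lower bound.} Here the engine is an exact reflection identity. Writing $\one{z_i=\pm1}=\tfrac{1\pm z_i}{2}$ gives $\PP(Z_{\cdot,t}=z)=\big(\prod_i\sqrt{p_iq_i}\big)\,e^{f(z)/2}$ on $\pms^{W}$; since $z\mapsto -z$ is an involution with $f(-z)=-f(z)$, matching level sets yields $\PP(f=-v)=e^{-v}\PP(f=v)$ for every value $v$ attained by $f$. Hence $\Prob{f<0}=\sum_{v>0}e^{-v}\PP(f=v)$, which gives, for every $T>0$,
\[
\Prob{\gamma^*_t(Y)\neq g_t}\cdot\big(1+e^{T}\big)\ \ge\ \PP(f=0)+\sum_{0<v\le T}\PP(f=v)\ =\ \Prob{0\le f\le T}\,.
\]
It remains to choose $T$ so that $\Prob{0\le f\le T}$ is bounded below by a constant. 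For this I would use $\E[f]=\Phi$, the variance bound $\mathrm{Var}(f)=\sum_i w_i^2(1-s_i^2)\le 2\Phi$ (which reduces per-coordinate to $w^2(1-s^2)\le 2 s w$, equivalently $|w|\le\sinh|w|$), and the consequence $\Prob{f\ge 0}\ge\tfrac12$ of the reflection identity. Taking $T=2\Phi+4\sqrt{\Phi}$, Chebyshev gives $\Prob{f>T}\le\frac{2\Phi}{(\Phi+4\sqrt{\Phi})^{2}}\le\tfrac18$, hence $\Prob{0\le f\le T}\ge\tfrac12-\tfrac18=\tfrac38$ for every $\Phi\ge 0$; tightening the constants — and separately treating $\Phi$ near $0$, where $\Prob{f\le 0}$ is already close to $\tfrac12$ — produces the stated numerator $\tfrac34$ and the bound $\Prob{\mathrm{err}}\ge\dfrac{3}{4\,(1+e^{2\Phi+4\sqrt{\Phi}})}$.

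\emph{Main obstacle.} The upper bound is immediate once the reduction to $\Prob{f\le 0}$ is made. The substantive part is the lower bound: the reflection identity is short, but converting it into the stated estimate needs the variance control $\mathrm{Var}(f)\le 2\Phi$, a careful truncation level $T$, and — most delicately — honest bookkeeping in the two degenerate regimes, namely exact ties (the atom $\{f=0\}$ together with the $\argmax$ tie-breaking convention) and the small-committee-potential limit $\Phi\to 0$, where a direct argument that the error probability stays bounded away from zero is required. I expect the precise constants ($\tfrac34$, the $4\sqrt\Phi$, the ``$1+$'' in the denominator) to be artifacts of optimizing exactly this step.
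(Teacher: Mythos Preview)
The paper does not actually prove this lemma; it is quoted verbatim from \citet{BeKo14:NIPS} and used as background. So there is no ``paper's own proof'' to compare against, and the relevant question is simply whether your argument is correct.

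Your lower-bound argument is essentially the Berend--Kontorovich one: the reflection identity $\PP(f=-v)=e^{-v}\PP(f=v)$, the variance control $\mathrm{Var}(f)\le 2\Phi$ via $|w|\le\sinh|w|$, and Chebyshev at the truncation level $T=2\Phi+4\sqrt\Phi$. You correctly flag that your bookkeeping lands at $3/8$ rather than $3/4$ and that the last constant requires a separate small-$\Phi$ argument; this is exactly the situation in the original.

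The upper-bound argument, however, has a genuine gap. Chernoff at $\lambda=\tfrac12$ indeed gives
\[
\Prob{f\le 0}\ \le\ \prod_{i}2\sqrt{p_iq_i}\ =\ \prod_i\sqrt{1-s_i^2},
\]
but the step ``pass from $\prod_i\sqrt{1-s_i^2}$ to $e^{-\Phi/2}$'' fails: the per-coordinate comparison goes the \emph{other} way. Setting $h(s)=\log(1-s^2)+s\log\frac{1+s}{1-s}$ one has $h(0)=0$, $h'(s)=\log\frac{1+s}{1-s}$, so $h'(0)=0$ and $h''(s)=\tfrac{2}{1-s^2}>0$. Hence $h(s)\ge 0$ on $(-1,1)$, i.e.\ $\sqrt{1-s^2}\ge e^{-s w/2}$, and therefore
\[
\prod_i\sqrt{1-s_i^2}\ \ge\ e^{-\Phi/2}.
\]
(For a quick numeric check: $s=0.5$ gives $\sqrt{0.75}\approx 0.866$ versus $e^{-\Phi/2}=3^{-1/4}\approx 0.760$.) So the plain Chernoff bound is \emph{weaker} than the claimed $e^{-\Phi/2}$; you cannot tighten it to $e^{-\Phi/2}$ by a one-line derivative comparison --- the ``sign of one further derivative'' is against you. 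Berend--Kontorovich do not use the naive Chernoff step here; their upper bound goes through a sharper sub-Gaussian control for biased coins (the Kearns--Saul inequality), which yields the correct exponent. If you want to repair the argument, that is the tool to reach for, not the symmetric $\lambda=\tfrac12$ Chernoff bound.
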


However, we do not assume that we know the skills of workers in reality, and thus the true optimal inference
classifier is unknown to us. One natural way is to construct a true label inference that approximates the optimal Bayes classifier via estimating workers' skills. Fortunately, in addition to the case of known skills, 
 \citet{Sze15:MSc,BeKo14:NIPS} also provide an error bound when skills are only estimated:
\begin{lemma}
\label{Lem:predictionerror}
For any $\epsilon > 0$, the loss with estimated weights $\hat{v}_i = v(\hat s_i)$ satisfies 
\begin{align*}
&\frac1T\, \E_{\theta}\Bigl[ 
    	\sum_{t=1}^T\one{\gamma_{t,\hat s}(Y) g_t \leq 0} \Bigr]\\ 
        &\leq \frac1T\, \E_{\theta}\Bigl[ 
    	\sum_{t=1}^T\one{\gamma^*(Y) g_t \leq \epsilon} \Bigr] + \PP_{\theta}(\|v^*-\hat{v}\|_1 \geq \epsilon)\,.
\vspace{-0.15in}
\end{align*}
\end{lemma}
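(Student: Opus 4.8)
The plan is to prove Lemma~\ref{Lem:predictionerror} by a direct pointwise decomposition of the event $\{\gamma_{t,\hat s}(Y)g_t \le 0\}$, splitting according to whether the estimated skills are close to the true ones. First I would recall the explicit form of the decision rules: from \eqref{eq:plugin}, $\gamma^*_t(Y) = \sgn\bigl(\sum_i v^*_i Y_{i,t}\bigr)$ with $v^*_i = v(s_i) = \frac12\log\frac{1+s_i}{1-s_i}$, and similarly $\gamma_{t,\hat s}(Y) = \sgn\bigl(\sum_i \hat v_i Y_{i,t}\bigr)$ with $\hat v_i = v(\hat s_i)$. Writing $f^*(Y_t) = \sum_i v^*_i Y_{i,t}$ and $\hat f(Y_t) = \sum_i \hat v_i Y_{i,t}$, the key observation is that $|\hat f(Y_t) - f^*(Y_t)| = |\sum_i (\hat v_i - v^*_i) Y_{i,t}| \le \sum_i |\hat v_i - v^*_i| = \|v^* - \hat v\|_1$, since each $Y_{i,t} \in \{\pm 1\}$.

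The core step is then the implication: if $\|v^*-\hat v\|_1 < \epsilon$ and $\gamma_{t,\hat s}(Y)g_t \le 0$, then $\gamma^*_t(Y) g_t \le \epsilon$. Indeed, $\gamma_{t,\hat s}(Y)g_t \le 0$ means $\hat f(Y_t) g_t \le 0$ (using $g_t \in \{\pm1\}$ so multiplying by $g_t$ preserves the sign relationship, modulo the ties which are handled by the non-strict inequality), hence $f^*(Y_t) g_t = \hat f(Y_t) g_t + (f^*(Y_t) - \hat f(Y_t)) g_t \le 0 + |f^*(Y_t) - \hat f(Y_t)| \le \|v^*-\hat v\|_1 < \epsilon$. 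Therefore
\[
\one{\gamma_{t,\hat s}(Y)g_t \le 0} \le \one{\gamma^*_t(Y)g_t \le \epsilon} + \one{\|v^*-\hat v\|_1 \ge \epsilon},
\]
pointwise on the probability space (the second indicator absorbs the complementary event). Summing over $t = 1,\dots,T$, dividing by $T$, and taking $\E_\theta$ gives exactly the claimed bound, after noting $\frac1T\E_\theta[\sum_{t=1}^T \one{\|v^*-\hat v\|_1\ge\epsilon}] = \PP_\theta(\|v^*-\hat v\|_1\ge\epsilon)$ since that indicator does not depend on $t$.

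I do not anticipate a serious obstacle here — the result is essentially a robustness/stability statement for the thresholded linear classifier, and the only mild subtlety is bookkeeping around ties (the case $\hat f(Y_t) = 0$ or $f^*(Y_t) = 0$) and the precise tie-breaking convention implicit in $\argmax$/$\sgn$; I would handle this by consistently using the non-strict inequalities $\le 0$ and $\le \epsilon$ as in the statement, which makes the argument go through regardless of how ties are resolved. The one place where care is genuinely needed is making sure the bound $|\hat f(Y_t) - f^*(Y_t)| \le \|v^*-\hat v\|_1$ is applied with the right norm (the $\ell_1$ norm on the weight differences paired with the $\ell_\infty$ bound $|Y_{i,t}| \le 1$ via Hölder), which is exactly why the $\ell_1$ norm appears in the statement; this is the natural and tight choice, so no alternative is needed.
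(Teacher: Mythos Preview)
The paper does not supply its own proof of this lemma; it is quoted from \citet{Sze15:MSc,BeKo14:NIPS} without argument. Your approach---the pointwise decomposition into the events $\{\|v^*-\hat v\|_1<\epsilon\}$ and its complement, together with the H\"older bound $|f^*(Y_t)-\hat f(Y_t)|\le \|v^*-\hat v\|_1$---is the standard and correct one, and is exactly what underlies the cited references.

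One small notational slip: you open by writing $\gamma^*_t(Y)=\sgn\bigl(\sum_i v^*_i Y_{i,t}\bigr)$, but for the indicator $\one{\gamma^*_t(Y)g_t\le\epsilon}$ to depend nontrivially on $\epsilon$ (rather than collapsing to $\one{\gamma^*_t(Y)g_t\le 0}$ for all $\epsilon\in(0,1)$), $\gamma^*_t(Y)$ must be the real-valued score $f^*(Y_t)=\sum_i v^*_i Y_{i,t}$ itself, which is consistent with the paper's convention that inference methods return real-valued scores whose signs give the label estimates. Your derivation in fact proves $f^*(Y_t)g_t<\epsilon$ on the good event, so the argument is already correct once $\gamma^*_t$ is read as the score rather than its sign; just drop the $\sgn$ in the opening line.
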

In turn, the error $\norm{v^*-\hat{v}}_1$ can be bounded in terms of the
multiplicative norm-differences in the skill estimates (see \citet{BeKo14:NIPS}):
\begin{lemma} \label{lem:errorV}
Suppose $\frac{1+ \hat s_i}{1+s_i}, \frac{1- \hat s_i}{1-s_i} \in [1-\delta_i, 1+\delta_i]$ then 
$|v(s_i) - v(\hat s_i)| \leq 2|\delta_i|$.%
\end{lemma}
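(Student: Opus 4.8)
The plan is to recall the explicit form of the weight function $v$, express the difference $v(s_i) - v(\hat s_i)$ in terms of the two hypothesized ratios, and then control each piece by a first-order (logarithm) estimate. From the MAP rule \eqref{eq:plugin}, the weight attached to worker $i$ is the log-odds $v(s) = \log\frac{1+s}{1-s}$. Hence
\[
v(s_i) - v(\hat s_i) = \log\frac{1+s_i}{1-s_i} - \log\frac{1+\hat s_i}{1-\hat s_i}
 = \log\frac{1+s_i}{1+\hat s_i} + \log\frac{1-\hat s_i}{1-s_i}\,.
\]
The hypotheses say precisely that $\frac{1+\hat s_i}{1+s_i} \in [1-\delta_i, 1+\delta_i]$ and $\frac{1-\hat s_i}{1-s_i}\in[1-\delta_i,1+\delta_i]$, so the two reciprocal ratios $\frac{1+s_i}{1+\hat s_i}$ and $\frac{1-\hat s_i}{1-s_i}$ each lie in $[\frac{1}{1+\delta_i}, \frac{1}{1-\delta_i}]$.

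Next I would apply the elementary bound $|\log x| \le |x-1|/\min(x,1)$, or more simply the two-sided estimate: for $x \in [1-\delta, 1+\delta]$ with $\delta < 1$ one has $-\log(1+\delta) \le \log x \le -\log(1-\delta)$ for the reciprocal, and since $|\log(1+t)| \le \frac{|t|}{1-|t|}$ (or, staying cruder, $|\log(1\pm\delta)| \le \frac{\delta}{1-\delta}$), each of the two logarithms above is bounded in absolute value by something like $\frac{\delta_i}{1-\delta_i}$. Summing the two contributions and using the triangle inequality gives $|v(s_i) - v(\hat s_i)| \le \frac{2\delta_i}{1-\delta_i}$. To reach the clean statement $|v(s_i)-v(\hat s_i)| \le 2|\delta_i|$ one presumably works in the regime where $\delta_i$ is small (the intended regime, since $\hat s_i \to s_i$), or uses a slightly sharper logarithm inequality valid on the relevant range; in any case the constant $2$ comes from the two additive log-terms, one for the $(1+\cdot)$ factor and one for the $(1-\cdot)$ factor.

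The only real subtlety — and the step I would be most careful about — is the passage from the $\frac{\delta_i}{1-\delta_i}$-type bound to the advertised $2|\delta_i|$: this requires either restricting to $\delta_i \le 1/2$ (say) so that $\frac{1}{1-\delta_i}\le 2$ would give $4\delta_i$, not $2\delta_i$, or invoking the tighter one-sided fact that $\log(1+\delta) \le \delta$ together with $-\log(1-\delta) \le \frac{\delta}{1-\delta}$ and a symmetrization argument, or simply absorbing the discrepancy because $\delta_i$ can be taken arbitrarily small in the asymptotic application. I would check which of the two log-terms can be bounded by exactly $\delta_i$ (the $\log(1+\cdot)$ side, via $\log(1+t)\le t$ and $\log(1-t)\ge -t$ on a sign-matched arrangement) so that the two together yield the stated $2|\delta_i|$; this bookkeeping of signs of the ratios relative to $1$ is the entirety of the "hard part." Everything else is a direct substitution into the log-odds formula and the triangle inequality.
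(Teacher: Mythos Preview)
The paper does not supply its own proof of this lemma; it is quoted from \citet{BeKo14:NIPS} without argument. So there is nothing in the present paper to compare your proposal against.

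Your approach is the natural one and is essentially correct. Writing the difference as
\[
v(\hat s_i)-v(s_i)=\log\frac{1+\hat s_i}{1+s_i}-\log\frac{1-\hat s_i}{1-s_i},
\]
both ratios lie in $[1-\delta_i,1+\delta_i]$ by hypothesis, so each logarithm lies in $[\log(1-\delta_i),\log(1+\delta_i)]$ and the difference is bounded in absolute value by
\[
\log\frac{1+\delta_i}{1-\delta_i}=2\tanh^{-1}(\delta_i)=2\delta_i+\tfrac{2}{3}\delta_i^3+\cdots.
\]
You are right to flag that this quantity strictly exceeds $2\delta_i$ for $\delta_i>0$, so the constant $2$ in the stated bound holds only to first order (the intended asymptotic regime $\hat s_i\to s_i$). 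No sharper elementary inequality will close that gap exactly; the lemma as written is a first-order statement, and your honest accounting of this is appropriate.

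One small slip in your write-up: you say that both $\frac{1+s_i}{1+\hat s_i}$ and $\frac{1-\hat s_i}{1-s_i}$ lie in the reciprocal interval $[\frac{1}{1+\delta_i},\frac{1}{1-\delta_i}]$. Only the first of these is a reciprocal of a hypothesized ratio; the second ratio $\frac{1-\hat s_i}{1-s_i}$ is itself one of the hypotheses and lies directly in $[1-\delta_i,1+\delta_i]$. This does not affect the final bound, but writing the expression as a \emph{difference} of two logs (as above) rather than a sum avoids the bookkeeping of reciprocals altogether.
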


These results together imply that a plug-in estimator with a guaranteed accuracy on the skill levels, in turn, leads to a bound on the error probability of predicting ground-truth labels. This motivates the skill estimation problem, which we consider in the remainder of this paper. 
\section{Weighted Least-Squares Estimation} \label{sec:WLS}
In this section, we propose an asymptotically consistent skill estimator for potentially sparse worker-task assignments. We are motivated by the scenario when, for most workers, only a very small portion of tasks are assigned to them. This induces  not only an extremely sparse worker-task assignment graph, but more importantly a sparse worker-worker interaction graph.

Recall that given a problem instance $\theta=(s,A,g)$, 
the data of the learner is given by the  matrix
$(Y_{i,t})_{(i,t)\in A}$ which is a collection of independent binary random variables such that $Y_{i,t}=g_tZ_{i,t}$ and $s_i=\mathbb{E}(Z_{i,t})$. When $A$ is finite, we define $N\in \N^{W\times W}$ to be the matrix whose $(i,j)$th entry with $i\ne j$ gives the number of times the workers $i$ and $j$ labeled the same task:
\begin{align*}
N_{ij}=|\{t\in \N:(i,t),(j,t)\in A\}|\,
\end{align*}
and we also let $N_{ii} = 0,\forall i=1,\ldots,W$. Note that there is an edge between workers $i$ and $j$ in the interaction graph exactly when $N_{ij}>0$. 

When $A$ is infinite, $N_{ij}$ may be infinite.
In this case, for $i\ne j$ we also define
$N_{ij}(T)=|\{t\in [T]:(i,t),(j,t)\in A\}|$ to denote the number of times workers $i$ and $j$ provide a label for the same task in the first $T$ tasks, and similarly we let  $N_{ii}(T)=0$ for all $i$.

The starting point of our approach is the following observation about the single coin model: the expected correlation between each pair of workers is ground-truth independent. Indeed,
\begin{eqnarray*} \mathbb{E}[Y_{i,t} Y_{j,t}]  & = &  \mathbb{E}[ g_t Z_{i,t} g_t Z_{j,t} ] \\ 
 & = &  \mathbb{E}[ Z_{i,t} Z_{j,t} ] \\ 
 & = &  \left( \frac{1+s_i}{2} \frac{1+s_j}{2} + \frac{1-s_i}{2} \frac{1-s_j}{2} \right) \cdot 1 \\
 &   &  + \left( \frac{1+s_i}{2} \frac{1-s_j}{2} + \frac{1-s_i}{2} \frac{1+s_j}{2} \right) \cdot (-1) \\ 
 & = &  s_i s_j,
\end{eqnarray*} where the second equation used that $g_t^2=1$. 

This observation motivates estimating the skills using 
{\small
\begin{equation}
\label{Eq:Objective}
\tilde{s}=\argmin_{x\in[-1,+1]^W}\frac{1}{2}\sum_{i, j, t ~|~ (i,t),(j,t)\in A}(Y_{i,t}Y_{j,t}-x_{i}x_{j})^2
\end{equation}}

Note that the number of terms containing the skill estimate $x_i$ of particular worker $i$ in this objective
scales with how many other workers this worker $i$ works with.
Intuitively, this should feel ``right'': the more a worker works with others, the more information we should have about its skill level. 

As it turns out, there is an alternative form for this objective, which is also very instrumental and which will form the basis of our algorithm and also of our analysis.
To introduce this form, define
$C_{ij} \defeq s_i s_j$ 
and let its empirical estimation be
\begin{equation} \label{samp_corr}
\tilde{C}_{ij}=\frac{1}{N_{ij}}\textstyle \sum_{t ~|~ (i,t),(j,t)\in A}\, Y_{i,t}Y_{j,t}\,.
\end{equation}
An alternative form of the objective in Eq.~\eqref{Eq:Objective} is given by the following result:
\begin{lemma}\label{lem:equiv}
Let $L:[-1,1]^W \to [0,\infty)$ be defined by 
\[
L(x) = \frac{1}{2}\sum_{(i,j)\in E}N_{ij}(\tilde{C}_{ij}-x_ix_j)^2.
\]
The optimization problem of Eq.~\eqref{Eq:Objective} is equivalent to the optimization problem
\[\argmin_{x\in[-1,+1]^W}L(x).\]
\end{lemma}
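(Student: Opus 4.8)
The plan is to treat the objective of \eqref{Eq:Objective} pair-by-pair: for each pair of workers the inner sum over the shared tasks is a quadratic in the single number $x_ix_j$, and the classical sum-of-squares-about-the-mean decomposition splits it into a term that does not depend on $x$ plus exactly the term that appears in $L$. Summing over pairs then shows that the two objectives differ only by an additive constant, which is enough for the two minimization problems to have the same solution set.

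Concretely, first I would fix an (ordered) pair $i\neq j$ and consider $S_{ij}(x)\defeq\sum_{t\,:\,(i,t),(j,t)\in A}(Y_{i,t}Y_{j,t}-x_ix_j)^2$. Writing $a_t\defeq Y_{i,t}Y_{j,t}\in\{-1,1\}$ and $c\defeq x_ix_j$, this is $\sum_t (a_t-c)^2$ over the $N_{ij}$ tasks common to $i$ and $j$. The elementary identity $\sum_t(a_t-c)^2=\sum_t(a_t-\bar a)^2+N_{ij}(\bar a-c)^2$, valid for every real $c$ with $\bar a$ the sample mean of the $a_t$'s, combined with the fact that by the definition \eqref{samp_corr} we have $\bar a=\tilde C_{ij}$, gives
\[
S_{ij}(x)=K_{ij}+N_{ij}\,(\tilde C_{ij}-x_ix_j)^2,\qquad K_{ij}\defeq\sum_{t\,:\,(i,t),(j,t)\in A}(Y_{i,t}Y_{j,t}-\tilde C_{ij})^2 ,
\]
and $K_{ij}$ does not depend on $x$ (indeed $K_{ij}=N_{ij}(1-\tilde C_{ij}^2)$, since $Y_{i,t}^2=Y_{j,t}^2=1$). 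Pairs with $N_{ij}=0$ have an empty inner sum and are precisely the non-edges of $G$, so summing the display over all distinct pairs and inserting the $\tfrac12$ prefactor yields
\[
\tfrac12\sum_{i,j,t\,|\,(i,t),(j,t)\in A}(Y_{i,t}Y_{j,t}-x_ix_j)^2=\tfrac12\sum_{i\neq j}K_{ij}+\tfrac12\sum_{(i,j)\,:\,N_{ij}>0}N_{ij}\,(\tilde C_{ij}-x_ix_j)^2 ,
\]
and the last sum is exactly $L(x)$ up to the usual factor coming from whether each edge of $E$ is listed once or twice, which only rescales $L$ by a fixed positive constant.

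Finally I would conclude: the objective of \eqref{Eq:Objective} equals $L(x)$ (times a positive constant) plus the $x$-independent quantity $\tfrac12\sum_{i\neq j}K_{ij}$, which is finite because $A$ is finite. Adding a constant and multiplying by a positive scalar do not change the set of minimizers over the compact set $[-1,1]^W$, so the two problems are equivalent. There is no genuine obstacle here; the only points requiring a little care are (i) reading the index set of \eqref{Eq:Objective} as ranging over \emph{distinct} workers $i\neq j$ — the diagonal contributions $(1-x_i^2)^2$ are $x$-dependent and so cannot be present if the lemma is to hold, consistently with $L$ summing over edges only; (ii) the ordered-versus-unordered bookkeeping of pairs, which at worst changes $L$ by a global positive factor and hence is irrelevant for the $\argmin$; and (iii) observing that it is precisely the definition \eqref{samp_corr} of $\tilde C_{ij}$ as the task-average of $Y_{i,t}Y_{j,t}$ that makes $\tilde C_{ij}$, rather than any other constant, appear after the decomposition.
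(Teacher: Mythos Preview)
Your proof is correct and follows essentially the same route as the paper's. The paper expands each summand directly using $(Y_{i,t}Y_{j,t})^2=1$ and then completes the square to obtain $(\tilde C_{ij}-x_ix_j)^2+1-\tilde C_{ij}^2$; your sum-of-squares-about-the-mean identity is the same computation in slightly more abstract packaging, and you land on the identical constant $K_{ij}=N_{ij}(1-\tilde C_{ij}^2)$. Your caveats about the diagonal $i=j$ terms and the ordered-versus-unordered bookkeeping are well taken---the paper is itself silent on both points in its proof---but they affect only presentation, not the argument.
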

The proof, which is just simple algebra to show the two objective functions are equal up to a constant shift, is given in \cref{sec:equiv}.

The objective function from Lemma \ref{lem:equiv} can be seen as a weighted low-rank objective,
first proposed by \citet{gabriel_lower_1979}.
Clearly, the objective prescribes to approximate $\tilde{C}$ using $x x^\top$, with the
error in the $(i,j)$th entry scaled by $N_{ij}$. Note that this weighting is reasonable
as the variance of $\tilde{C}_{ij}$ is proportional to $1/N_{ij}$ and we expect from the theory of least-squares
that an objective combining multiple terms where the data is heteroscedastic (has unequal variance), the terms should be weighted with the inverse of the data variances.
Since $N_{ii}=0$, the weighting function $N$ can in general be full-rank, and in this case the general weighted rank-one optimization approximation known to be NP-hard \citep{GiGli11:NPhardness}. 

However, our data has special structure, 
which will allow one to avoid the existing hardness results. Indeed, on the one hand, as the number of data points increases, $\tilde{C}_{ij}$ will be near rank-one itself; and, on the other hand, we will put natural restrictions on the weighting matrix which are in fact necessary for identifiability. 
These conditions will allow us to avoid the NP-hardness results of \citet{GiGli11:NPhardness}.



\subsection{Plug-in  Gradient Descent}

To solve the weighted least-squares objective, the simplest algorithm is the gradient descent algorithm. 
We propose a \emph{Plug-in Gradient Descent} (PGD) algorithm that sequentially updates the skill level based on following the (negative) gradient of the loss $L$ at each time step:
\begin{align*}
\tilde{x}^{t+1}_i=&x^{t}_i+\eta\sum_{(i,j)\in E} N_{ij} (\tilde{C}_{ij}-x^{t}_ix^{t}_j)x^t_{j}
\end{align*}
%
where $N_i=|\{t:(i,t)\in A\}|=\sum_{j=1}^WN_{ij}$ is the number of tasks labeled by worker $i$ and $\tau>0$ is a tuning parameter. We do not necessarily need to explicitly enforce the constraint that $x \in [-1,1]^n$, though we have found that it helps in terms of the practical performance of the method, as we'll remark in Section \ref{sec:exp} where we discuss experimental results.




\if0
\noindent {\bf Remark:} Note that we could employ a number of different weighting functions. Our theoretical analysis shows that any weighting function satisfying positivity and $B(0)=0$ leads to convergence in the noiseless setting. 
\todoc{What weighting function do we recommend? I think we recommend identity in which case this seems like a theoretical exercise with some, but not much merit.} 
\fi

\subsection{An Extension to Multi-class Classification}

We now briefly describe how our approach may be extended to the case when the labels are not binary. Above, we have shown how the binary case may be reduced to a noisy rank-one matrix completion problem as in Lemma \ref{lem:equiv}. Here we show how the same approach can be used for the multiclass case. 

As before, we suppose that $W\in\mathbb{N}$ workers are asked to provide labels to a series of $M$-class classification tasks whose ground truths $g_t,t=1,\ldots,T$ are unknown. We will use a one-hot encoding of the ground truths, i.e.,  $g_t\in\mathbb{R}^M$ will be expressed as $g_t\in\{[1,0,0,\ldots,0]^T,[0,1,0\ldots,0]^T,\ldots,[0,0,\ldots,0,1]^T\}\in\mathbb{R}^M$.


We will associate a skill level with every worker using a homogeneous Dawid-Skene model, where each worker is assumed to have the same accuracy and error probabilities on each class. Formally, worker $i$ provides label $l\in\mathbb{R}^M$ with probability 
\[\left\{
\begin{array}{cc}
 \mathbb{P}(Y_{i,t}=l)) = p_i   & \text{if~}l=g_t \\
 \mathbb{P}(Y_{i,t}=l)) =\frac{1-p_i}{M-1}     & \text{if~}l\neq g_t. 
\end{array}
\right.\]

Similar to binary tasks, \citet{Li2011} showed that the optimal prediction  method under homogeneous Dawid-Skene model is  weighted majority voting. More specifically, when $p_i,\forall i=1,\ldots,W$ are known, the oracle MAP rule is
\[
\gamma_{s,A}^*(Y)=\argmax_{l\in[M]}\sum_{i:(i,t)\in A}v_i^*\one{Y_{i,t}=l},
\]
where $v_i^*=\log\frac{(M-1)p_i}{1-p_i},\forall i\in[W]$. The proof of this can be obtained by following the same line  as in Section~\ref{Sec:twostep}.

In order to construct the weighted majority voting model, we extend PGD algorithm to handle multi-class tasks by showing the skill estimation problem is still a rank one matrix completion problem as follows.
\begin{lemma}
Let us define skill levels $$s_i= \frac{M}{M-1}p_i-\frac{1}{M-1},$$ and noisy covariances
$$\tilde{C}_{ij}=\frac{1}{N_{ij}}\sum_{ t ~|~ (i,t),(j,t)\in A}\langle Y_{i,t},Y_{j,t}\rangle.$$ Then 
$$ E \left[ \frac{M-1}{M} \tilde{C} - \frac{1}{M-1} \right] = s s^T.$$
\end{lemma}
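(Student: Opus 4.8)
The plan is to reduce the claimed matrix identity to a single scalar computation, namely $\E[\langle Y_{i,t}, Y_{j,t}\rangle]$ for one task $t$ shared by workers $i\neq j$, and then to verify it by inverting the definition of $s_i$ in terms of $p_i$. Since the assignment set $A$ is deterministic, $\tilde C_{ij}$ is just the average of $N_{ij}$ terms $\langle Y_{i,t},Y_{j,t}\rangle$ over a fixed collection of tasks, so by linearity of expectation it suffices to compute the common expectation of one such term; the asserted identity is then read entrywise for each pair $(i,j)$ with $N_{ij}>0$ (the diagonal entries are irrelevant here, just as in Lemma~\ref{lem:equiv}).

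First I would use that $Y_{i,t}$ and $Y_{j,t}$ are one-hot vectors, so $\langle Y_{i,t},Y_{j,t}\rangle = \one{Y_{i,t}=Y_{j,t}}$ and hence $\E[\langle Y_{i,t},Y_{j,t}\rangle] = \PP(Y_{i,t}=Y_{j,t})$. Conditioning on the ground truth $g_t$ (say $g_t$ equals class $c$) and using that the two workers respond independently, I would expand
\[
\PP(Y_{i,t}=Y_{j,t}\mid g_t=c) = \sum_{l\in[M]} \PP(Y_{i,t}=l\mid g_t=c)\,\PP(Y_{j,t}=l\mid g_t=c) = p_i p_j + (M-1)\cdot\frac{1-p_i}{M-1}\cdot\frac{1-p_j}{M-1},
\]
which simplifies to $p_i p_j + \tfrac{(1-p_i)(1-p_j)}{M-1}$ and does not depend on $c$. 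Consequently $\E[\tilde C_{ij}] = p_i p_j + \tfrac{(1-p_i)(1-p_j)}{M-1}$ unconditionally.

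The remaining step is pure algebra. I would invert the defining relation to get $p_i = \tfrac{(M-1)s_i+1}{M}$, hence $1-p_i = \tfrac{(M-1)(1-s_i)}{M}$, substitute both into the expression for $\E[\tilde C_{ij}]$, and collect terms over the common denominator $M^2$; the numerator collapses to $M\big((M-1)s_i s_j + 1\big)$, so $\E[\tilde C_{ij}] = \tfrac{(M-1)s_i s_j + 1}{M}$. Rearranging gives $s_i s_j = \tfrac{M}{M-1}\E[\tilde C_{ij}] - \tfrac{1}{M-1}$, i.e. $\E\big[\tfrac{M}{M-1}\tilde C - \tfrac{1}{M-1}\big] = s s^\top$ entrywise. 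As a sanity check, at $M=2$ this reads $\E[2\tilde C_{ij}-1] = s_i s_j$, which agrees with the binary identity $\E[Y_{i,t}Y_{j,t}] = s_i s_j$ derived earlier under the $\{\pm 1\}$ encoding.

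There is no genuine obstacle in this argument; it is essentially bookkeeping. The only points demanding a little care are (i) making the conditioning on $g_t$ explicit and observing that the resulting agreement probability is ground-truth independent (this is the multiclass analogue of the $g_t^2=1$ cancellation in the binary derivation), and (ii) keeping the $s$–$p$ substitution straight — in particular, the computation produces the factor $\tfrac{M}{M-1}$ in front of $\tilde C$, so the scaling constants in the statement should be read accordingly.
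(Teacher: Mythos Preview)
Your proposal is correct and follows essentially the same route as the paper: compute $\E[\langle Y_{i,t},Y_{j,t}\rangle]=p_ip_j+\tfrac{(1-p_i)(1-p_j)}{M-1}$ via independence, rewrite it in terms of $s_i,s_j$, and pass to $\tilde C_{ij}$ by linearity. You also correctly flag the constant: the computation (and the paper's own proof and subsequent definition of $\hat C$) yields $\tfrac{M}{M-1}$ in front of $\tilde C$, so the $\tfrac{M-1}{M}$ in the displayed statement is a typo.
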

\begin{proof}
Since the random vectors $Y_{i,t}$ and $Y_{j,t}$ are independent, we can write the expectation of the inner product of $Y_{i,t}$ and $Y_{j,t}$ as
\[
\mathbb{E}[\langle Y_{i,t},Y_{j,t}\rangle]=p_ip_j+\frac{1}{M-1}(1-p_i)(1-p_j).
\]
This follows because the inner product is one only if $Y_{i,t}=Y_{j,t}=l$, for some label $l$, and the probability of this is either $p_ip_j$ or $\frac{1-p_i}{M-1}\frac{1-p_j}{M-1}$ depending on whether $l=g_t$ or $l\neq g_t$. 

A simple algebraic manipulation gives the following 
\[
\mathbb{E}[\langle Y_{i,t},Y_{j,t}\rangle]=\frac{M-1}{M}\left(\frac{M}{M-1}p_i-\frac{1}{M-1}\right)\left(\frac{M}{M-1}p_j-\frac{1}{M-1}\right)+\frac{1}{M}.
\] which implies 
\[ E \left[ \frac{M}{M-1} \langle Y_{i,t},Y_{j,t}\rangle  - \frac{1}{M-1} \right] = s s^T. \] We thus have
\begin{eqnarray*}E \left[ \frac{M-1}{M} \tilde{C} - \frac{1}{M-1} \right] & = & \frac{M}{M-1} \frac{1}{N_{ij}} \sum_{t ~|~ (i,t), (j,t) \in A} E \langle Y_{i,t},Y_{j,t}\rangle - \frac{1}{M-1} \\ 
& = & \frac{1}{N_{ij}} \sum_{t ~|~ (i,t), (j,t) \in A} \left( \frac{M}{M-1} E \langle Y_{i,t},Y_{j,t}\rangle - \frac{1}{M-1} \right) \\ 
& = & \frac{1}{N_{ij}} \sum_{t ~|~ (i,t), (j,t) \in A} s s^T \\ 
& = & s s^T.
\end{eqnarray*} \end{proof}

As a consequence of this lemma, if we define 
\[ {\hat C}_{ij} = \frac{M}{M-1} \frac{1}{N_{ij}}  ~\sum_{t ~|~ (i,t), (j,t) \in A}  \langle Y_{i,t},Y_{j,t}\rangle - \frac{1}{M-1}, \] then $s$ can be estimated by solving a rank one matrix completion problem with objective function
\[
\tilde{s}=\argmin_{x\in[-\frac{1}{M-1},1]^W} \sum_{(i,j) \in E}(\hat{C}_{ij}-x_{i}x_j)^2.
\]

As previously, in the limit as $t \rightarrow \infty$, the rank-one problem is an exact match for the problem of recovering skills. In the case where $t$ is finite, we will be in the ``noisy'' regime where $\hat C$ can be thought of as a noise-corrupted version of the true rank-one matrix $s s^T$, with the amount of noise ill decaying to zero as $t \rightarrow \infty$.

\section{Theoretical Results} \label{sec:theory}

Up to now, we have shown how label inference can be reduced to the problem of skill estimation, and addressed the skill estimation problem as a sparse rank-one matrix factorization problem (with noise).
In this section, we analyze which properties of the interaction graph ensure learnability as the number of tasks approaches infinity. Subsequently, we analyze the convergence properties of the PGD algorithm for finite tasks. 


\subsection{Learnability}


We start with the analysis for the infinite instance where $C_{ij}=\tilde{C}_{ij}=s_is_j$ (see Eq. (\ref{samp_corr}) for a definition). There are different ways to let the number of tasks approach infinity while keeping an interaction graph fixed.

\medskip

\noindent {\bf Case A:} For a fixed interaction graph $G=([W],E)$ we can consider assignment sets such that the minimum number of shared tasks, $T_{\min}(T) = \min_{(i,j)\in E} N_{ij}(T)$ approaches infinity. Learnability in this context is a property of the interaction graph.

\medskip

\noindent {\bf Case B:}
We can also consider an  infinite assignment set $A$ and define $G_A^\infty = ([W],E)$ as the graph where two workers are connected by an edge if $N_{ij}=\infty$. In other words, we define connectivity based on whether two workers interact finitely or infinitely many times. 

\medskip

We will follow the second approach as it is slightly more general than the first (the second approach allows assignment sets $A$ where some workers interact only finitely many times, while the first approach does not allow such assignment sets).
Thus, we fix an assignment set $A$, we let $\Theta_A$ be the set of instances sharing assignment set $A$, and we will consider the learnability of subsets $\Theta \subset \Theta_A$. 

To express complete ignorance towards the true unknown labels assigned to tasks, we will consider $\Theta$ which are \emph{truth-complete}: informally, this means that $\Theta$ places no constraints on what the ground truth could be. Formally, truth completeness means that, for any $\theta = (s,A,g)\in \Theta$, we require $\Theta_{s,A}\subset \Theta$ where
$\Theta_{s,A} = \{ (s,A,g) \,: g\in \{-1,+1\}^{\mathbb{N}} \}$. Truth-completeness expresses that there is no prior information about the unknown labels.

As discussed before, the inference problem is inherently symmetric:
the likelihood assigned to some observed data $Y$ under an instance $\theta = (s,A,g)$ is the same as under the instance $(-s,A,-g)$. Thus, an instance set cannot be learnable unless somehow these symmetric solutions are ruled out. 

To express the condition this forces us to adopt will require a few more definitions.
In particular, given $\Theta$ we let $S(\Theta) =  \{ s\in [-1,1]^W\,:\, (s,A,g)\in \Theta \}$ be 
the set of skill vectors that are present in at least one instance in $\Theta$.
For a skill vector $s\in [-1,1]^W$ we let $P(s) = \{ i\in [W]\,:\, s_i>0 \}$ be the set of workers
whose skills are positive and we let 
$\mathcal{P}(s) = \{ P(s), P(-s) \}$ be the (incomplete) partitioning of workers into workers with positive and negative skills; note that workers with zero skill are left out.

With these definitions in place, we will say that $\Theta$ is \emph{rich} if there exists $s\in [-1,1]^W$ and $\alpha>1$ such that
$\times_{i\in [W]} \{ \alpha s_i, s_i/\alpha \}\subset S(\Theta)$ (in other words, there must exist $s \in S(\Theta)$ and $\alpha > 1$ such that we can scale each component of $s$ by either $\alpha$ or $1/\alpha$ and remain in $S(\Theta)$). This is a fairly mild condition; it is satisfied if, for instance, there is some point in $s \in S(\Theta)$ such that a small open-set around that point that is fully contained in $\Theta$.

Richness is required so that there is sufficient ambiguity about skills. Indeed, if richness is not satisfied, then  either every skill vector in $S(\Theta)$ is a spammer or hammer (i.e., $s_i \in \{-1,1\}$ for all $i$) or $S(\Theta) \cap (-1,1)^W$ has a specific structure.  This structure could potentially be exploited by an algorithm. One might say that assuming richness requires an algorithm to be agnostic to any specific structural knowledge of skill vectors. 

We are now ready to state our first main result, which characterizes when rich, truth-complete sets are learnable. 

\begin{theorem}[Characterization of learnability] \label{thm:learnability}
Fix an infinite assignment set $A$
and assume that $G = G_A^\infty$ is connected.
Then, a rich, truth-complete set of instances $\Theta \subset \Theta_A$ over $A$ is learnable 
if and only if the following hold:
\begin{enumerate}[(i)]
\item For any $s,s'\in S(\Theta)$ such that $|s| = |s'|$ and $\mathcal{P}(s) = \mathcal{P}(s')$, it follows that $s=s'$;

\label{thm:c1}
\item The graph $G$ is non-bipartite, i.e, it has an odd-cycle.%
\label{thm:c2}
\end{enumerate}
\end{theorem}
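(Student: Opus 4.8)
The plan is to prove the two implications separately, both organized around one conceptual point: as $T\to\infty$ the only feature of the data that stabilizes uniformly over all admissible ground-truth sequences $g$ is the empirical correlation matrix, since $\tilde C_{ij}(T)\to s_is_j$ almost surely on every edge of $G=G_A^\infty$ (by the strong law of large numbers, using $N_{ij}(T)\to\infty$ there), whereas any other empirical average picks up a factor $g_t$ or $g_t^3=g_t$ and so has a limit depending on the adversarial $g$. Thus learnability should be equivalent to: the optimal Bayes rule $\gamma^*_{s,A}$ — equivalently, by the explicit form in Eq.~\eqref{eq:plugin}, the vector $s$ up to asymptotically invisible symmetries — is recoverable from $\{s_is_j:(i,j)\in E\}$ together with the side information $s\in S(\Theta)$. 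I would first set up this reduction carefully, then check both directions against it.

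For necessity, assume $\Theta$ learnable. If condition~(i) fails there is a nonzero $s$ with $s,-s\in S(\Theta)$; by truth-completeness $\theta_1=(s,A,g)$ and $\theta_2=(-s,A,-g)$ are in $\Theta$ for every $g$, and they induce the same law on the observations (the sign flip of $s$ is absorbed by that of $g$) while carrying opposite ground truths, so $\Loss_T(\gamma;\theta_1)+\Loss_T(\gamma;\theta_2)\ge 1$ for every schema $\gamma$; since the common optimal loss $\Loss_T(\gamma^*_{s,A};\theta_1)=\Loss_T(\gamma^*_{-s,A};\theta_2)$ can be taken strictly below $1/2$ (a worker with nonzero skill already gives per-task Bayes error below $1/2$; richness lets us pick a witness with positive committee potential and invoke the optimal-rule error bound), consistency for both is impossible. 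If condition~(ii) fails, $G$ is connected and bipartite with parts $V_1,V_2$; using richness, produce $u\ne u'$ in $S(\Theta)$ that differ only by the bipartite rescaling (scale the $V_1$-coordinates of a witness by $\alpha$ and the $V_2$-coordinates by $1/\alpha$, versus the reverse). Then $u_iu_j=u'_iu'_j$ on every edge, so by the reduction no consistent schema can separate $u$ from $u'$, yet their log-odds weights differ so $\gamma^*_{u,A}\ne\gamma^*_{u',A}$ — a contradiction.

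For sufficiency, assume (i) and (ii) and construct a schema. From the first $T$ tasks form $\tilde C(T)$ on the edges seen so far, then produce $\hat s(T)$ by fitting a rank-one model: a worker all of whose incident (positive-weight) correlations vanish gets skill $0$; on the remaining ``informative'' subgraph the magnitudes are read off from the products $|\tilde C_{ij}(T)|$ along a spanning tree up to one free scalar, which is fixed by an extra edge, relative signs are propagated along edges, and an odd cycle (present since $G$ is non-bipartite) removes the last degree of freedom except for a global sign; the global sign is then chosen to make $\hat s(T)$ closest to $S(\Theta)$. Output $\gamma^{(T)}_t(Y)=\gamma^*_{\hat s(T),A}(Y_t)$ via Eq.~\eqref{eq:plugin}. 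Since $\tilde C_{ij}(T)\to s_is_j$ a.s.\ on $E$, this recovers $s$ up to global sign; condition~(i) guarantees at most one of $\{s,-s\}$ lies in $S(\Theta)$, so the sign choice is eventually correct and $\hat s(T)\to s$. Hence $\norm{v(\hat s(T))-v(s)}_1\to 0$, and \cref{Lem:predictionerror} together with \cref{lem:errorV} gives $\oR_T(\gamma;\theta)\to 0$, i.e.\ consistency.

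I expect the main obstacle to be twofold. On the necessity side, the hardest part is making rigorous the claim that the correlation matrix is the only asymptotically usable statistic — concretely, showing that for the bipartite-scaled pair $u,u'$ there is genuinely no consistent schema, which amounts to arguing that any schema's asymptotic behavior on these instances must factor through the correlations (e.g.\ by exhibiting, for an arbitrary schema, a ground-truth sequence on which it cannot tell the two instances apart). On the sufficiency side, the recovery map from $\{s_is_j\}$ to $s$ must be shown to be well defined and continuous precisely under connectivity plus non-bipartiteness; the awkward configurations are those with zero-skill workers, where the informative subgraph may be disconnected or a forest, and one must check that in exactly those cases condition~(i) still pins $s$ down through $S(\Theta)$, or else that such configurations violate the standing hypotheses.
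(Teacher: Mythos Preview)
Your plan is essentially the paper's own route. Sufficiency is obtained exactly as in the paper's Lemma~\ref{lem:asylearn}: use an odd cycle to pin down $|s_1|$ from the limiting correlations $C_{ij}=s_is_j$, propagate magnitudes along paths by connectivity, read off relative signs along edges to get $\mathcal{P}(s)$, and then use condition~(i) to fix the global sign. Necessity of~(i) is the $(s,A,g)\leftrightarrow(-s,A,-g)$ symmetry, and necessity of~(ii) is the bipartite rescaling built from the richness witness~$\alpha$ (the paper's Lemma~\ref{lem:asylearn1}). Your explicit closing of the sufficiency direction through \cref{Lem:predictionerror} and \cref{lem:errorV} is in fact more complete than the paper's, which simply cites Lemma~\ref{lem:asylearn} as ``the forward direction.''

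The gap you flag on the necessity side of~(ii) is genuine and is not resolved in the paper either. The paper disposes of it in one clause, asserting that the two bipartitely rescaled skill vectors ``give the same likelihood to any data''; but for fixed $g$ (or any $g'$) the product likelihoods $\prod_{(i,t)\in A}\tfrac{1+s_ig_tY_{i,t}}{2}$ and $\prod_{(i,t)\in A}\tfrac{1+s_i'g_t'Y_{i,t}}{2}$ do \emph{not} coincide when $s_i'=\alpha^{\sigma_i}s_i$ with $\alpha\ne 1$, so this claim is not literally correct and no further justification is offered. Your formulation---that the correlation matrix is the only statistic whose limit is uniform over $g$, so a schema cannot separate $u$ from $u'$ on \emph{every} truth sequence in a truth-complete $\Theta$---is the right way to state what is needed, and the obstacle you anticipate (exhibiting, for an arbitrary schema, a ground-truth sequence on which the two instances are asymptotically indistinguishable) is exactly the missing piece, here and in the paper.
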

Condition \eqref{thm:c1} requires that any $s\in \Theta$ should be uniquely identified by $|s|$ and knowing which components of $s$ have the same sign and which components are zero.
For example, this condition will be met if $\Theta$ is restricted so that it only contains skill vectors that have a positive sum. For an explanation of why such an assumption is needed, see the (boldfaced) remark in Section \ref{sec:prob}. 

We remark that if the graph $G$ is not connected, we can simply apply this theorem to each of its connected components. For example, in the situation where none of the workers $1, \ldots, k$ have shared a task with any of the workers $k+1, \ldots, n$, one could try to simply recover the skills of workers $1, \ldots, k$ from their common tasks and then the skills of $k+1, \ldots, n$  from their common tasks. This allows us to drop the condition in the theorem that $G$ be connected, at the expense of changing (ii) to the assertion that none of the connected components of $G$ should be bipartite.

The forward direction of the theorem statement hinges upon the following result which is proved in Appendix:  
\begin{lemma}\label{lem:asylearn}
For any $g\in \pms$, $s\in [-1,1]^W$ and an assignment set with a connected, non-bipartite interaction graph $G_A^\infty$,
there exists a method to recover $|s|$ and $\mathcal{P}(s)$.
\end{lemma}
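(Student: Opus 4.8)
The plan is to split the task into two parts: (a) show that, asymptotically, an inference schema has access to the exact pairwise correlations $C_{ij}=s_is_j$ on every edge of $G=G_A^\infty$; and (b) show that these correlations already determine $|s|$ and $\mathcal P(s)$ once $G$ is connected and non-bipartite. For (a): if $(i,j)$ is an edge of $G_A^\infty$ then $N_{ij}=\infty$, hence $N_{ij}(T)\to\infty$; since $\E[Y_{i,t}Y_{j,t}]=s_is_j$ (shown in Section~\ref{sec:WLS}) and the products $Y_{i,t}Y_{j,t}$ over distinct tasks are independent and bounded, the strong law gives $\frac{1}{N_{ij}(T)}\sum_{t\le T:\,(i,t),(j,t)\in A}Y_{i,t}Y_{j,t}\to s_is_j$ almost surely. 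So it suffices to exhibit a map from $\{C_{ij}=s_is_j:(i,j)\in E\}$ to $(|s|,\mathcal P(s))$.

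\textbf{Recovering $|s|$.} Assume first $s_i\ne 0$ for every $i$. Since $G$ is non-bipartite it contains an odd cycle $v_0,v_1,\dots,v_{2k},v_0$. Forming the alternating product of correlations around it, each $s_{v_\ell}$ with $1\le\ell\le 2k$ appears once in the numerator and once in the denominator while $v_0$ appears twice in the numerator, so $\prod_{\ell\text{ even}}C_{v_\ell v_{\ell+1}}\big/\prod_{\ell\text{ odd}}C_{v_\ell v_{\ell+1}}=s_{v_0}^2$ (the denominator is nonzero because all $s_{v_\ell}\ne0$), which recovers $|s_{v_0}|$; cyclic relabelling recovers $|s_u|$ for every $u$ on the cycle. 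For an arbitrary vertex $u$, walk along any path in $G$ from $u$ to a cycle vertex $w$ and propagate $|s_{u'}|=|C_{u'u''}|/|s_{u''}|$ edge by edge (each $|s_{u''}|$ already known and nonzero). Equivalently, with $x_i\defeq\log|s_i|$ the equations $x_i+x_j=\log|C_{ij}|$, $(i,j)\in E$, have coefficient matrix the unsigned incidence matrix $M$ of $G$, and $M^{\top}M=D+A$ is the signless Laplacian, which is nonsingular exactly when $G$ has no bipartite connected component; hence on a connected non-bipartite $G$ the (consistent) system has the unique solution $(\log|s_i|)_i$. Either way $|s|$ is determined, and this is precisely where non-bipartiteness enters --- on a bipartite $G$ one could rescale the two colour classes by $\lambda$ and $1/\lambda$ without changing any $C_{ij}$.

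\textbf{Recovering $\mathcal P(s)$.} With $|s|$ in hand, $\sgn(s_i)\sgn(s_j)=\sgn(C_{ij})$ is known for every edge. Fix a spanning tree of $G$ rooted at $r$; a choice of $\sgn(s_r)$ then determines $\sgn(s_i)$ for all $i$ by propagating these sign-product constraints down the tree, consistency around the remaining cycles being automatic since the data come from a genuine $s$. The two choices of $\sgn(s_r)$ produce the sets $P(s)$ and $P(-s)$, i.e.\ exactly the unordered pair $\mathcal P(s)$. If some $s_i=0$, the argument is run after identifying the zero-skill workers (a nonzero worker with a nonzero neighbour yields a nonvanishing correlation on that edge) and restricting to the support; I omit this bookkeeping.

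\textbf{Main obstacle.} The only genuinely non-routine step is recovering $|s|$: one needs that a connected non-bipartite graph makes the system $x_i+x_j=b_{ij}$ uniquely solvable --- equivalently that its signless Laplacian $D+A$ is positive definite --- the odd cycle being exactly the obstruction to a nonzero null vector $x$ with $x_i=-x_j$ along every edge. This is also what lets the overall approach sidestep the NP-hardness of general weighted rank-one approximation; the treatment of zero skills is the other, minor, nuisance.
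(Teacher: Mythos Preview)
Your proof is correct and follows essentially the same route as the paper's: extract $C_{ij}=s_is_j$ on each edge via the law of large numbers, recover $|s_{v_0}|$ from the alternating product around an odd cycle, propagate $|s|$ along paths, and read off the sign partition from the edge sign-products. The paper's proof is slightly more bare-bones---it writes out the telescoping identity $s_1=C_{1,2k+1}C_{2k+1,2k}^{-1}\cdots C_{2,1}s_1^{-1}$ explicitly and then propagates, and, like you, it simply assumes the relevant $C_{ij}$ are nonzero without working through the zero-skill case.

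Your additional signless-Laplacian reformulation (writing $x_i=\log|s_i|$ and invoking invertibility of $D+A$ on a connected non-bipartite graph) is not in the paper's proof of this lemma, but it is a clean linear-algebraic certificate of uniqueness that nicely foreshadows the paper's later use of $L_{\rm s}$ in the convergence analysis of Theorem~\ref{thm:pert}. It buys you a one-line identifiability argument in place of the explicit cycle/path telescoping, at the cost of needing the spectral fact about $D+A$; the paper's argument is more constructive but otherwise equivalent.
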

The reverse implication in the theorem statement follows from the following result:
\begin{lemma}\label{lem:asylearn1}
Assume that the lengths of all cycles in $G$ are even. Then there exists $s,s'\in [-1,1]^W$, $s\not\in \{-s',s'\}$ such that $C_{ij}=s_is_j = s_i's_j'$.
\end{lemma}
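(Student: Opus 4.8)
The statement is: if $G$ is bipartite (all cycles even), then there exist $s, s' \in [-1,1]^W$ with $s' \notin \{s, -s\}$ such that $s_i s_j = s_i' s_j'$ for every edge $(i,j) \in E$. The plan is to exploit the bipartition directly: let $(U, V)$ be a bipartition of the vertex set $[W]$, so that every edge of $G$ goes between $U$ and $V$. Given any candidate skill vector $s$ (say one with all entries nonzero, which exists since $\Theta$ is rich — though in fact any $s$ works), define $s'$ by flipping the sign on one side of the bipartition and not the other, but in an \emph{asymmetric} way so that $s'$ is not just $\pm s$. Concretely, pick a scalar $\lambda \neq 0, \pm 1$ and set $s_i' = \lambda s_i$ for $i \in U$ and $s_j' = \lambda^{-1} s_j$ for $j \in V$. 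Then for any edge $(i,j)$ with $i \in U$, $j \in V$ we get $s_i' s_j' = \lambda s_i \cdot \lambda^{-1} s_j = s_i s_j$, which is exactly the required identity $C_{ij} = s_i s_j = s_i' s_j'$.

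The remaining thing to check is that this $s'$ is a genuinely different solution, i.e.\ $s' \notin \{s, -s\}$, and that it can be taken inside $[-1,1]^W$. For the first point: choose $\lambda$ close to $1$ but not equal to $1$; then on $U$ we have $s_i' = \lambda s_i$, which differs from both $s_i$ and $-s_i$ as long as some $s_i \neq 0$ for $i \in U$ (and if all entries of $s$ on $U$ were zero, one instead rescales the $V$ side, or picks a different starting $s$ — richness guarantees an $s$ with all entries nonzero, which handles this cleanly). For the boundedness: since $|s_i| \le 1$, taking $\lambda \in (0,1)$ on the side we shrink and noting we are \emph{enlarging} on the other side, we instead take $\lambda$ slightly less than $1$ so that the $U$-side entries $\lambda s_i$ stay in $[-1,1]$ trivially, and choose the starting $s$ to have $|s_j| < 1$ strictly on $V$ (again available by richness) so that $\lambda^{-1} s_j \in [-1,1]$ for $\lambda$ close enough to $1$. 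Alternatively one can simply observe the lemma only needs \emph{existence} of such $s, s'$ in $[-1,1]^W$, so starting from any interior point and applying a small enough perturbation $\lambda = 1 + \epsilon$ suffices.

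There is essentially no hard step here; the content is the observation that bipartiteness is exactly the obstruction, because a bipartition lets one apply the ``gauge transformation'' $s_i \mapsto \lambda^{\pm 1} s_i$ that leaves all edge-products invariant. This is the natural converse to the identifiability direction (Lemma \ref{lem:asylearn}), where non-bipartiteness, via an odd cycle, forces $\lambda^2 = 1$ and hence $\lambda = \pm 1$, collapsing the ambiguity to the unavoidable global sign flip. If anything requires care, it is only the bookkeeping to ensure $s'$ lands in the cube $[-1,1]^W$ and is distinct from $\pm s$ simultaneously; both are arranged by starting from a strictly interior skill vector with no zero coordinates and taking $\lambda = 1+\epsilon$ for sufficiently small $\epsilon > 0$.
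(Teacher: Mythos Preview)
Your proposal is correct and takes essentially the same approach as the paper: the paper also picks a scalar $\alpha\ne 1$ and scales by $\alpha$ on one side of the bipartition and by $1/\alpha$ on the other (phrased there as ``even vs.\ odd distance from worker~1'' rather than an explicit bipartition $(U,V)$), then checks that edge products are preserved. The only cosmetic difference is that the paper pulls the starting $s$ and the scalar $\alpha$ directly from the richness assumption on $\Theta$ (which automatically keeps the scaled vectors in $[-1,1]^W$ and in $S(\Theta)$, the latter being what is actually needed downstream for Theorem~\ref{thm:learnability}), whereas you argue more generically by taking an interior $s$ and $\lambda$ close to~$1$.
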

{\bf Learnability for Finite Tasks:} We mention in passing that asymptotic learnability is a fundamental requirement, which if not met precludes any reasonable finite time result. In consequence there is no inference schema $\gamma$ achieves zero regret in this case.


%
%
\subsection{Convergence of the PGD Algorithm}
The previous section established that for learnability 
 the limiting interaction graph $G_A^\infty$ must be a non-bipartite connected graph. 
 We will now show that PGD under these assumptions 
 converges to a unique minimum for both the noisy and noiseless cases. 
 
By the noiseless case, we mean that in the loss $L$ of \cref{lem:equiv}, we set $\tilde C_{ij} = C_{ij}=s_is_j$ for $(i,j) \in E$. That is, we have infinite number of common tasks to estimate $\tilde{C}_{ij}$ which will then \emph{equal} the expected value $C_{ij}$. However, in reality, we always suffer from the estimation error (i.e., $|C_{ij}-\tilde{C}_{ij}|,\forall (i,j)\in E$) which leads to a more troublesome problem than a rank one matrix completion. We also provide an analysis of our PGD algorithm in this "noisy" case.

Our first step is to show that, under the condition $G$ is connected and non-bipartite, the loss has a unique minimum and the PGD algorithm recovers the skill vector. For technical convenience, our theorem below considers recovering the absolute values of skills $|s|$. \ao{This is the same as recovering the vector $s$, as we discuss next.} 

\ao{Indeed, observe that if $A=s s^T$ is rank-$1$, then $|A| = |s| |s|^T$ is also rank-$1$, where the absolute value is taken elementwise. Then we can simply take the absolute value of all-revealed entries; the theorem below will ensure that the PGD method recovers $|s|$. Once $|s|$ is recovered, we need to do some post-processing to recover the sign of each entry. }

\ao{It should be natural that, because we do not assume access to any true labels, recovering $s$ once $|s|$ is available will require some assumption on the vector $s$.  Indeed, even in the simplest scenario of a complete worker-task interaction graph with $W-1$ agents always agreeing with each other and disagreeing with agent $W$, we cannot distinguish between the possibilities that $(s_1=s_2 = \cdots = s_{W-1}=1, s_W = -1)$ and $(s_1 = s_2 = \cdots = s_{W-1} = -1, s_W=1)$. In other words, we fundamentally cannot know if the $W-1$ agreeing agents are lying or telling the truth. Therefore we will be assuming as before that $\sum_{i=1}^W s_i > 0$; this is just saying that there is more truth-telling that lying in the entire system. Under this condition, a post-processing step becomes possible.}

\bigskip

\ao{\noindent {\bf Post-processing for sign recovery.} If $|s|$ is recovered, we recover the signs of each entry as follows. We assign a positive sign to the first worker ($s_1>0$). We then inspect all the elements $s_1 s_j$ over $j$  neighbors of the first worker (i.e., workers that have a joint task with the first worker) and assign a sign to them by inspecting the sign of $s_1 s_j$. We then repeat this, assigning signs to all the neighbors of workers whose sign was just assigned, until the sign of every worker is assigned. Finally, we check if for the resulting vector satisfies $\sum_i s_i > 0$; if not, we flip the sign of every worker. It is immediate that this process always recovers the signs correctly, provided the underlying graph is connected and non-bipartite and $\sum_i s_i > 0$. } 

\bigskip

\ao{\begin{theorem} \label{thm:noiselessPGD}
The PGD Algorithm with $s>0$ and $x(0)>0$ converges to the global minimum $x=s$ under conditions for learnability  of Theorem~\ref{thm:learnability} and small enough stepsize $\eta$. 
\end{theorem}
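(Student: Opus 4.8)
The plan is to decompose the argument into three ingredients. First, show that for small enough $\eta$ the iterates stay in a fixed compact box $\mathcal K\subset(0,\infty)^W$ that contains both $s$ and $x(0)$. Second, run a routine descent-lemma argument on $\mathcal K$ to conclude that every accumulation point of the trajectory is a (constrained) stationary point of $L$. Third — and this is the heart of the matter — prove that $x=s$ is the \emph{only} stationary point of $L$ inside $\mathcal K$, so the bounded trajectory, having a unique accumulation point, converges to it. Only the third ingredient uses the hypotheses that $G$ is connected and non-bipartite; the first two are soft.

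For the first ingredient, write the update coordinatewise as $x_i^{t+1}=\Pi_{[-1,1]}\bigl(x_i^t+\eta\sum_{j\sim i}N_{ij}(s_is_j-x_i^tx_j^t)x_j^t\bigr)$. The upper bound $x_i^t\le 1$ is automatic from the projection, and the crude estimate $x_i^{t+1}\ge x_i^t(1-\eta N_i)$ (using $0<x_j^t\le 1$) preserves positivity once $\eta<1/\max_i N_i$. To obtain a \emph{uniform} positive lower bound, set $\sigma_0:=\min_{(i,j)\in E}s_is_j>0$ and observe that whenever $x_i^t<\sigma_0$ one has $s_is_j-x_i^tx_j^t>0$ for every neighbour $j$, so the pre-projection value is $\ge x_i^t$; combining the two cases shows that $\mathcal K=[\delta,1]^W$ with $\delta:=\min(\min_i x_i(0),\ \sigma_0/2)$ is forward invariant as soon as $\eta\le 1/(2\max_iN_i)$, and since $s>0$ one checks $s\in\mathcal K$. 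For the second ingredient, on a slight enlargement of $\mathcal K$ the gradient $\nabla L$ is Lipschitz with some constant $\beta$; taking $\eta\le 1/\beta$, the standard projected-gradient descent lemma gives $L(x^{t+1})\le L(x^t)-\tfrac{\eta}{2}\|G_\eta(x^t)\|^2$, where $G_\eta$ is the gradient mapping, so $L(x^t)$ is non-increasing and $\|x^{t+1}-x^t\|\to 0$; consequently every accumulation point $x^*$ satisfies $x^*=\Pi_{[-1,1]^W}(x^*-\eta\nabla L(x^*))$, the first-order condition for $\min_{[-1,1]^W}L$.

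The crux is the third ingredient. Let $x\in\mathcal K$ be a constrained stationary point and reparametrize by $r_i:=x_i/s_i>0$ (legitimate since $s>0$). The stationarity condition in coordinate $i$ reads $\sum_{j\sim i}a_{ij}(1-r_ir_j)=0$ with $a_{ij}:=N_{ij}s_is_jx_j>0$, where the equality relaxes to $\ge 0$ on any coordinate $i$ where the constraint $x_i\le 1$ is active (and there $r_i=1/s_i\ge 1$). Let $R=\max_ir_i$ and $\mu=\min_ir_i$. Evaluating the relation at an argmax vertex and using $r_j\le R$ for its neighbours shows some neighbour has $r_j\le 1/R$, hence $R\mu\le 1$; evaluating it at an argmin vertex — where the constraint must be inactive unless $\mu\ge 1$ — shows either $\mu\ge 1$ or some neighbour has $r_j\ge 1/\mu$, hence $R\mu\ge 1$. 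So $R\mu=1$. If $\mu\ge 1$ this forces $R=\mu=1$, i.e. $x=s$. If $\mu<1$ then $R=1/\mu>1$, and now in the relation at \emph{any} max-ratio vertex every term is $\le 0$ (since $Rr_j\ge R\mu=1$) while in the relation at any min-ratio vertex every term is $\ge 0$; in either case all terms must vanish, forcing every neighbour of a max-ratio vertex to be a min-ratio vertex and vice versa. By connectedness, $\{i:r_i=R\}$ and $\{i:r_i=\mu\}$ then partition $[W]$ with every edge crossing — i.e. $G$ is bipartite, contradicting the hypothesis. Hence $x=s$ is the unique stationary point in $\mathcal K$, and since $L\ge 0=L(s)$ it is the global minimum.

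Putting the pieces together: the trajectory stays in the compact set $\mathcal K$, its accumulation points are constrained stationary points of $L$ lying in $\mathcal K$, and the only such point is $s$; a bounded sequence with a unique accumulation point converges to it, so $x^t\to s$. I expect the third ingredient — the ``pinching'' identity $R\mu=1$ followed by the extraction of a bipartition of $G$ — to be the main obstacle, since this is precisely where the graph-theoretic hypotheses do their work and where the NP-hardness of the general problem is sidestepped; a secondary technical nuisance is making the invariant-region construction quantitatively compatible with the step-size restriction needed for the descent lemma. (If one runs the method \emph{without} the projection onto $[-1,1]^W$, boundedness of the iterates is no longer free — indeed $L$ is not coercive even on the positive orthant — and the first ingredient would instead need a more delicate argument exploiting that $\sum_{j\sim i}N_{ij}(s_is_j-x_ix_j)x_j$ becomes small, and eventually negative, as $x_i$ grows.)
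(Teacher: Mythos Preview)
Your proof is correct, and its heart --- the ratio reparametrization $r_i=x_i/s_i$, the ``pinching'' identity $R\mu=1$, and the extraction of a bipartition of $G$ from the extremal vertices --- is the same mechanism the paper uses. The paper phrases it in matrix form: after writing the stationarity condition as $\mathrm{diag}[Z(uu^\top-\mathbf 1\mathbf 1^\top)]=0$ for an irreducible aperiodic nonnegative $Z$ and $u_i=x_i/s_i$, it splits into the three cases $u_1u_W\gtreqless 1$ and, in the equality case, reads off a block sign-pattern of $uu^\top-\mathbf 1\mathbf 1^\top$ that forces $Z$ to be reducible or periodic. Your argmax/argmin propagation is a coordinatewise rendering of exactly this.

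The substantive difference is your first ingredient. You lean on the projection onto $[-1,1]^W$ to get the upper bound for free and then carve out an invariant slab $[\delta,1]^W$. The paper instead treats the \emph{unprojected} iteration and shows directly that $V(x)=\max_i\max(x_i/s_i,\,s_i/x_i)$ is non-increasing along the trajectory for small enough step-size; this single Lyapunov function simultaneously gives the upper bound and the uniform positive lower bound, with no projection needed. The payoff is that the paper's stationary-point analysis can be carried out purely unconstrained, whereas you must track the KKT-type relaxation at the active face $x_i=1$ --- which you do correctly. Your closing parenthetical anticipates precisely this: the Lyapunov function $V$ is the ``more delicate argument'' that replaces projection-based boundedness.
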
}

As above, if the underlying graph $G$ is not connected, we can, as remarked earlier,  simply apply this theorem to each of the connected components.  The key requirement of Theorem ~\ref{thm:learnability} -- that the underlying graph is not bipartite -- then becomes the requirement that none of the connected components of $G$ are bipartite.

\ao{We next consider the problem of obtaining a polynomial-time convergence rate for the problem, and moreover doing so in the noisy case. Thus we now consider the case when only the perturbed entries $s_i s_j + \Delta_{ij}$ are revealed. We will now make the slightly stronger assumption (discussed at more length below) that the correct answer $s$ lies in the cube ${\cal C} = [\kappa, K]^W$ where $0<\kappa\leq K$. Without loss of generality, we can therefore assume that all revealed entries lie in this set, i.e., 
\[ s_i s_j + \Delta_{ij} \in [\kappa, K] \mbox{ for all } (i,j) \in \Omega,\] because otherwise we can simply threshold the revealed entries over $[\kappa, K]$ while simultaneously reducing the disturbances $\Delta_{ij}$.} 

\ao{It is natural to attempt to generalize our earlier PGD approach to this setting, in particular by doing gradient descent on the ``perturbed'' function 
\[   f_{\Delta}(x) := \frac{1}{2} \sum_{i,j=1}^W N_{ij} (x_i x_j - s_i s_j - \Delta_{ij})^2. \] While this is possible, we pursue a shortcut naturally adapted to this setting, by using a re-scaling of so-called  exponentiated gradient method.} 

\ao{Specifically, defining $\nabla_t$ by 
\[ [\nabla_t]_i = \sum_{j=1}^W N_{ij} (x_i x_j - s_i s_j - \Delta_{ij}), ~~~~~ i = 1, \ldots, W, \] we update as
\begin{equation} \label{expgrad} x(t+1) = P_{\cal C} \left[ x(t) e^{- \alpha \nabla_t} \right]. 
\end{equation}}

\ao{We may think of $\nabla_t$ as related to, but not identical, to the gradient of the perturbed function $\nabla f_{\Delta}(x(t))$. Indeed, observe that the latter quantity will weigh each term in the definition of $\nabla_t$ slightly differently.  Exponentiated gradient methods of this type are common when optimizing over the simplex, where they come from regularization with the KL divergence (see \cite{hazan2016introduction}). They are somewhat less common when optimizing over a cube, as we do here. All the same, the following theorem shows that this method is able to achieve polynomial-time convergence for the perturbed problem.}

\ao{Before we state our main result on the performance of this scheme, we need to introduce some notation}. Note that the condition that $G$ is connected and non-bipartite implies that  the worker-interaction count matrix $N$ is irreducible and aperiodic. The {\em signless Laplacian} matrix is then  defined as  \[ [L_{\rm s}]_{ij} = \begin{cases} N_{ij} & j \neq i \\ 
\sum_{k=1}^n N_{ik} & j=i  \end{cases} .\] 
By contrast, we will use $N$ to denote the matrix whose $i,j$'th entry is $N_{ij}$; the matrix $N$ will thus have zero diagonal. 

The matrix $L_{\rm s}$ contrasts with the usual Laplacian because the off-diagonal elements have positive signs. 
It can be shown that if the graph $G$  is not bipartite, the matrix $L_{\rm s}$ is positive definite \citep{desai1994characterization}.  In fact, the following stronger assertion is true.  We will use $\lambda$ to denote the smallest eigenvalue of the signless Laplacian matrix of a non-bipartite graph with unit weights; we remark that it as consequence of the results of \citet{desai1994characterization} that $\lambda \geq 1/W^3$ (where, recall, $W$ is the number of workers, so that the matrices $N$ and $L_{\rm s}$ are $W \times W$). Finally, we let $N_{\rm min}$ be the smallest positive weight among $\{N_{ij}\}$. \ao{Our final main result, which obtains a polynomial-time convergence rate in both the unperturbed and perturbed cases, is given in the following theorem.}

\ao{\begin{theorem} Suppose $s$ is located in the interior of $[\kappa, K]^W$ where $0 < \kappa \leq K$. Provided $\max_{i,j} |\Delta_{ij}|$ is small enough and $\alpha = (2 \sqrt{W} ||N||_2 K^2)^{-1}$, we have that:
\begin{enumerate} \item  Eq. (\ref{expgrad}) has a limit, which we will denote by $x_{\Delta}^*$. \item Convergence to any neighborhood of $x_{\Delta^*}$ occurs in polynomial-time. \item $x_{\Delta}^*$ is close to $s$ in the following sense:  
\[ ||x_{\Delta}^* - s||_2 \leq  K \frac{\sqrt{W} ||N||_{\infty}}{\mu} \max_{i,j} |\Delta_{ij}|, \] with $$\mu = \kappa^2 \lambda_{\rm min}(L_s) N_{\rm min}.$$  In particular, in the noiseless case when $\Delta=0$, we have that $x_0^*=s$.
\end{enumerate} \label{thm:pert}
\end{theorem}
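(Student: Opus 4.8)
The plan is to show that, after the change of variables $x_i = e^{\xi_i}$, the exponentiated update \eqref{expgrad} is \emph{exactly} ordinary projected gradient descent with a constant step on a convex function, and then to read off all three claims from the standard theory of projected gradient descent together with one structural fact about signless Laplacians. Define
\[
h_\Delta(\xi) \;=\; \tfrac12\sum_{i,j} N_{ij}\,e^{\xi_i+\xi_j} \;-\; \sum_i \xi_i \sum_j N_{ij}\bigl(s_i s_j + \Delta_{ij}\bigr),\qquad \xi\in\R^W .
\]
Differentiating gives $[\nabla h_\Delta(\xi)]_i = \sum_j N_{ij}(e^{\xi_i+\xi_j}-s_is_j-\Delta_{ij}) = [\nabla_t]_i$ whenever $x=e^{\xi}$. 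Since the Euclidean projection onto the cube $\mathcal{C}=[\kappa,K]^W$ is coordinatewise clamping and $\exp$ is increasing, the update $x(t+1)=P_{\mathcal{C}}[x(t)e^{-\alpha\nabla_t}]$ reads, in $\xi$-coordinates, $\xi(t+1)=P_{\mathcal{B}}[\xi(t)-\alpha\nabla h_\Delta(\xi(t))]$ over the box $\mathcal{B}=[\log\kappa,\log K]^W$. Each $e^{\xi_i+\xi_j}$ is convex, hence $h_\Delta$ is convex, so we are squarely in the setting of projected gradient descent on a convex function over a box.

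The second ingredient is a uniform strong-convexity / smoothness estimate, and this is where connectivity and non-bipartiteness of $G$ enter. Writing $x=e^{\xi}$, the Hessian has entries $H(\xi)_{ij}=N_{ij}x_ix_j$ for $i\ne j$ and $H(\xi)_{ii}=\sum_j N_{ij}x_ix_j$; equivalently $H(\xi)=\sum_{(i,j)\in E} w_{ij}(e_i+e_j)(e_i+e_j)^\top$ with weights $w_{ij}=N_{ij}x_ix_j\ge 0$. Thus $H(\xi)$ is precisely the signless Laplacian of the weighted interaction graph: it is PSD as a sum of rank-one PSD terms, and any $v$ in its kernel would satisfy $v_i=-v_j$ along every edge, which is impossible around an odd cycle of the connected graph $G$, so $H(\xi)\succ 0$ -- this is the quantitative content of \citet{desai1994characterization}. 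On $\mathcal{B}$ we have $\kappa^2 N_{\rm min}\le w_{ij}\le K^2 N_{ij}$; factoring the per-edge lower bound $\kappa^2 N_{\rm min}$ out of $\sum_{(i,j)\in E} w_{ij}(e_i+e_j)(e_i+e_j)^\top$ leaves the unit-weight signless Laplacian, whose smallest eigenvalue is $\lambda$, giving $\mu I\preceq H(\xi)$ with $\mu=\kappa^2 N_{\rm min}\lambda$ (the $\mu$ of the statement); and the crude bound $\|L_s\|_2\le 2\sqrt{W}\|N\|_2$ gives $H(\xi)\preceq L I$ with $L=2\sqrt{W}\|N\|_2K^2$, so $\alpha=1/L$ is exactly the prescribed step size.

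Parts 1 and 2 then follow from the textbook analysis of projected gradient descent on an $L$-smooth, $\mu$-strongly convex objective over a convex set with step $1/L$: the iterates $\xi(t)$ converge geometrically, $\|\xi(t)-\xi^*\|^2\le(1-\mu/L)^t\|\xi(0)-\xi^*\|^2$, to the unique minimizer $\xi^*$ of $h_\Delta$ over $\mathcal{B}$; hence $x(t)=e^{\xi(t)}\to x_\Delta^*:=e^{\xi^*}$, and an $\epsilon$-neighborhood is reached in $O\bigl((L/\mu)\log(1/\epsilon)\bigr)$ steps, which is polynomial since $\lambda\ge 1/W^3$ by \citet{desai1994characterization}. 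For part 3, the disturbance enters $h_\Delta$ only through its linear term: $h_\Delta=h_0-\langle d,\cdot\rangle$ with $d_i=\sum_j N_{ij}\Delta_{ij}$ and $\|d\|_2\le\sqrt{W}\|N\|_\infty\max_{ij}|\Delta_{ij}|$. Since $\nabla h_0(\log s)=0$, $\log s$ is the interior minimizer of $h_0$, which already gives $x_0^*=s$ when $\Delta=0$. For $\Delta\ne0$, strong convexity together with $h_0(\log s)\le h_0(\xi^*)$ yields $\tfrac{\mu}{2}\|\xi^*-\log s\|^2\le h_\Delta(\log s)-h_\Delta(\xi^*)\le\|d\|_2\|\xi^*-\log s\|_2$, so $\|\xi^*-\log s\|_2\le 2\|d\|_2/\mu$; as $s$ is interior to $\mathcal{C}$, for $\max_{ij}|\Delta_{ij}|$ small enough this forces $\xi^*$ into the interior of $\mathcal{B}$, whence $\nabla h_\Delta(\xi^*)=0$, i.e.\ $\nabla h_0(\xi^*)=d$. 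Strong monotonicity of $\nabla h_0$ then sharpens this to $\mu\|\xi^*-\log s\|_2^2\le\langle d,\xi^*-\log s\rangle\le\|d\|_2\|\xi^*-\log s\|_2$, and the coordinatewise $K$-Lipschitzness of $\exp$ on $\mathcal{B}$ gives $\|x_\Delta^*-s\|_2\le K\|\xi^*-\log s\|_2\le K\sqrt{W}\|N\|_\infty\max_{ij}|\Delta_{ij}|/\mu$, as claimed.

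The genuinely problem-specific step -- and the main obstacle -- is the middle one: recognizing the log-reparametrized objective $h_\Delta$, identifying its Hessian as a weighted signless Laplacian, and extracting from \citet{desai1994characterization} a \emph{quantitative} eigenvalue lower bound $\mu$ valid uniformly over the cube. This is exactly where non-bipartiteness is used, and it simultaneously delivers the unique limit, the explicit polynomial rate, and the perturbation constant. A secondary technical point is the bookkeeping around the projection: verifying that box projection commutes with the exponential change of variables, and ruling out a boundary minimizer when $\Delta$ is small so that the stationarity condition $\nabla h_\Delta(\xi^*)=0$ can be invoked in part 3.
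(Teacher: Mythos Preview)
Your proposal is correct and follows essentially the same route as the paper: the logarithmic change of variables $x=e^{\xi}$ turning \eqref{expgrad} into projected gradient descent on the convex function $h_\Delta$, the identification of the Hessian as a weighted signless Laplacian yielding the strong-convexity constant $\mu=\kappa^2 N_{\rm min}\lambda$, the smoothness constant $L=2\sqrt{W}\|N\|_2K^2$, and the final translation back via the $K$-Lipschitzness of $\exp$ on $\mathcal{B}$ are all exactly as in the paper. One small difference worth noting: your argument for the interior location of $\xi^*$ is quantitative---you first obtain the rough bound $\|\xi^*-\log s\|_2\le 2\|d\|_2/\mu$ directly from strong convexity over the box (using the first-order optimality condition for the constrained minimizer) and then sharpen it---whereas the paper establishes the analogous fact via a separate qualitative continuity lemma; your version is slightly cleaner but the content is the same.
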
} 

\bigskip

\ao{We note that the assumptions of Theorem \ref{thm:noiselessPGD} are slightly weaker than the assumptions of Theorem \ref{thm:pert}. While the former assumed that $s>0$, the latter assumed the slightly stronger statement that $\min_i s_i > \kappa > 0$. This can always be accomplished by throwing out nodes with $s_i \approx 0$ from the data set; such modes are making random guesses and do note contribute to the accuracy of the Bayes classifier of Eq. (\ref{eq:plugin}) which should assign them zero weight. A natural way to do this is to simply set to zero any $N_{ij}$ corresponding to correlations  $\widetilde{C}_{ij}$ whose absolute values are smaller than $\delta + O(\sqrt{(\log W)/T})$ for some small $\delta>0$. The advantage of this threshold is that all agents with $s_i=0$ will, with high probability, have all their interactions $N_{ij}$ set to zero and thus automatically ignored by both the PGD method and the variant of exponentiated gradient proposed here.  On the other hand, any pair of workers $i,j$ with $|s_i|$ and $|s_j|$  strictly larger than $\sqrt{\delta}$ will have their correlation above this threshold with high probability .} 

\aor{Finally, we discuss the key ``trick'' underlying the proof of this theorem. The main idea is to interpret the update of Eq. (\ref{expgrad}) as a projected gradient descent on the function 
\[ g_{\Delta}(z) = \frac{1}{2} \sum_{i,j=1}^W N_{ij} e^{z_i + z_j} - \sum_{i=1}^W  z_i \sum_{j=1}^n N_{ij} (s_i s_j + \Delta_{ij}),\] after a change of variable. The construction of this function is what allows us to bypass a lot of the technical difficulties in the analysis. }

\bigskip

{\bf \textsc{Finite-Task Bound:}} Note that we can directly apply this result to obtain a finite task characterization as well. In particular consider a connected and non-bipartite interaction graph. Define $d_{\max}$ as the maximum degree and $D$ as the sum of the degrees. It follows by standard Hoeffding bounds that with probability greater than $(1-\delta)$ we have $\max_{(i,j)\in E} |C_{ij} - \hat C_{ij}| \leq \frac{\log(D/\delta)}{\sqrt{N_{\min}}}$. \ao{We can then set $\Delta =C_{ij} - \hat C_{ij}$ and plug this bound into the above theorem to obtain 
\[ ||x_{\Delta}^* - s||_2 \leq  K \frac{\sqrt{W} ||N||_{\infty}}{\mu} \frac{\log(D/\delta)}{\sqrt{N_{\min}}}. \]}

\section{Experimental Results\label{sec:exp}}

In this section, we will be showing the experimental results to the PGD scheme. We will make one minor modification to the algorithm by adding a projection away from the boundary of the cube $s_i=1$ by projecting $x_i$ onto  $[-1+t/\sqrt{N_i},1-\tau/\sqrt{N_i}]$ at every step, where recall $N_i$ is the number of tasks assigned to agent $i$ and $\tau$ is a parameter. The justification is that skills close to one or negative one have an overwhelming impact on the plug-in rule of Eq. (\ref{eq:plugin}). According to Hoeffding inequality, the skill estimates are expected to have an uncertainty proportional to $\tau/\sqrt{N_i}$ with probability $\mathrm{const}\times e^{-\tau^2}$.
 There is little loss in accuracy in confining the parameter estimates to the appropriately reduced hypercube,
 while in principle one could tune this parameter, we use $\tau=1$ in this paper.

\begin{figure*}
\centering
\subfigure{\label{subFig:clique}}\addtocounter{subfigure}{-1}
\subfigure[Clique.]{
\includegraphics[width=0.31\textwidth]{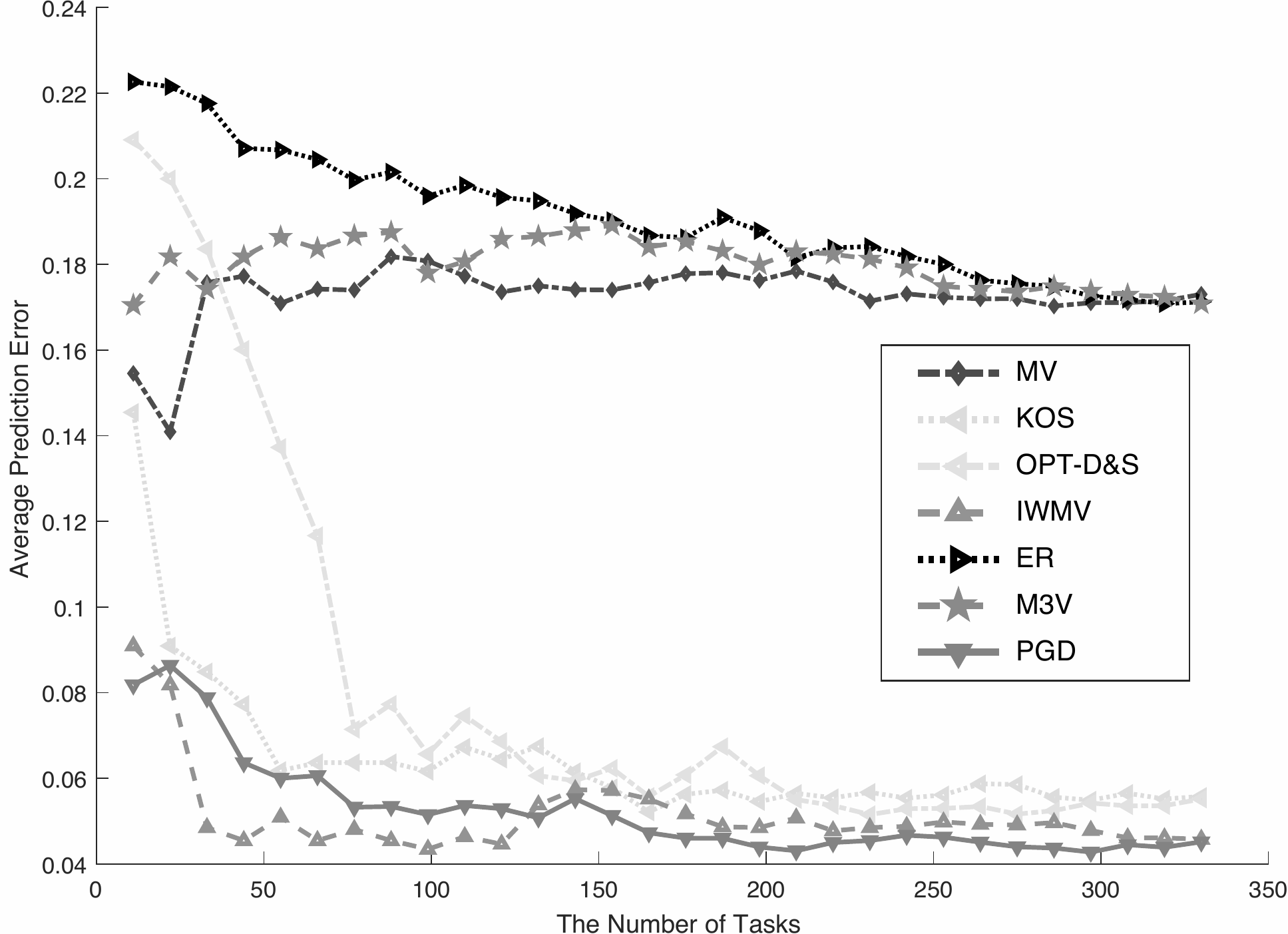}}
\subfigure{\label{subFig:star}}\addtocounter{subfigure}{-1}
\subfigure[Star graph with $3$-cycle.]{
\includegraphics[width=0.31\textwidth]{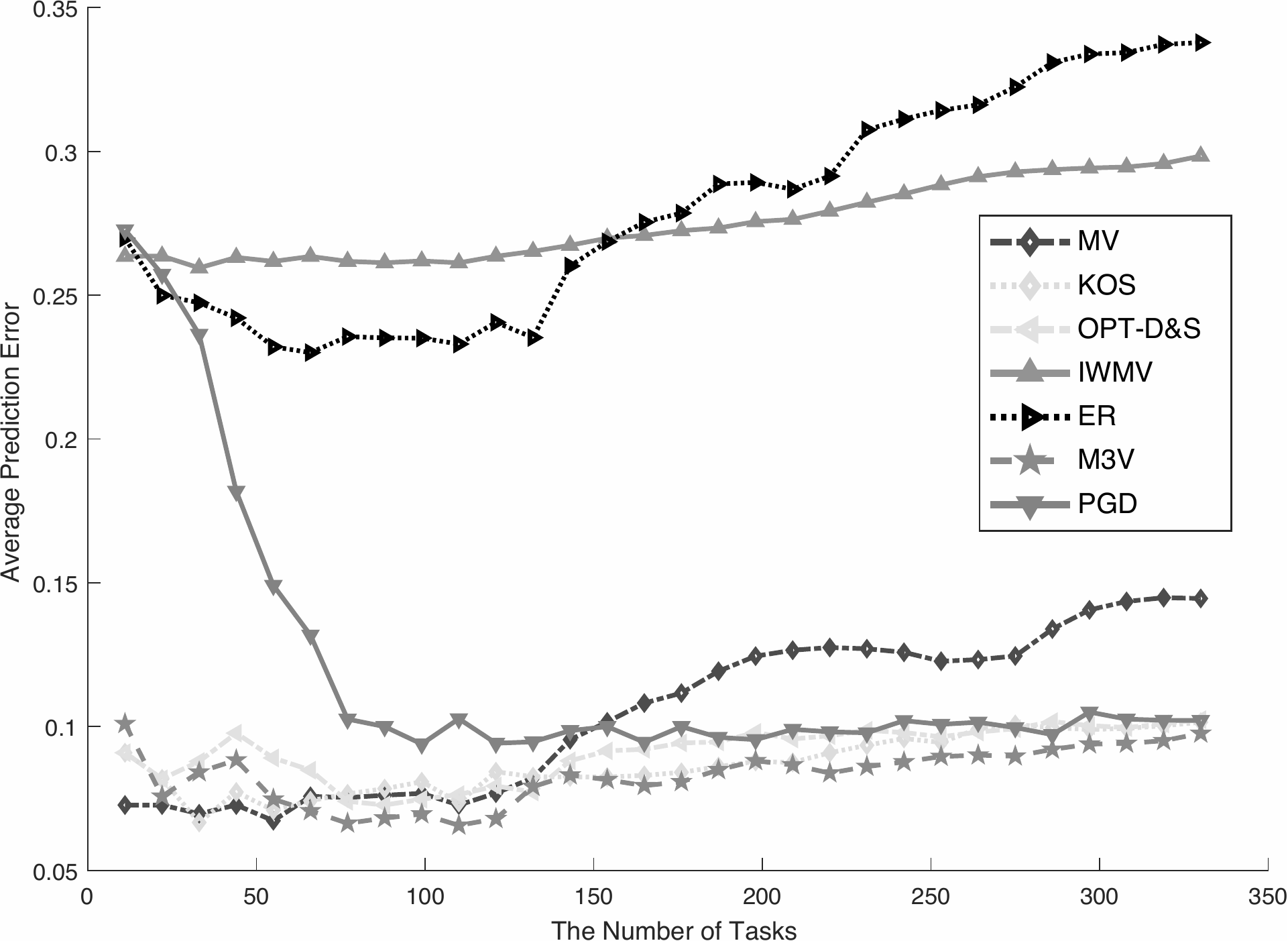}}
\subfigure{\label{subFig:ring}}\addtocounter{subfigure}{-1}
\subfigure[Ring.]{
\includegraphics[width=0.31\textwidth]{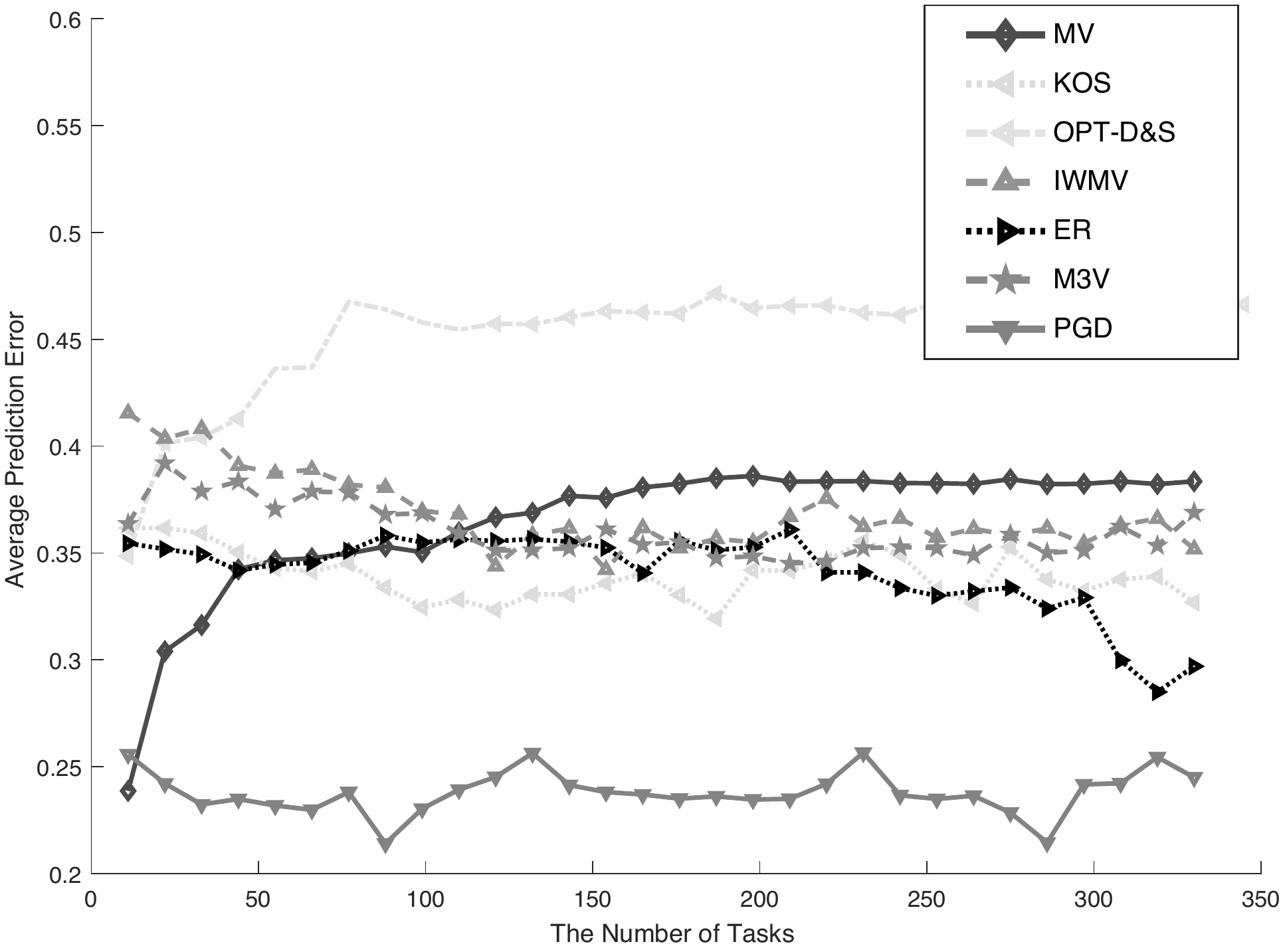}}
\label{Fig:comparison}
\vspace{-0.15in}
\caption{\small Illustrative comparisons of prediction performance for three graph types. Only mean values are plotted for exposition. For the clique, standard deviation values with 11 tasks were $0.09$, $0.10$, $0.14$, $0.14$, $0.19$, $0.09$, and $0.09$ for MV, KOS, OPT-D\&S, PGD, ER, IWMV, and M3V respectively; and with 330 tasks they were $0.02$, $0.01$, $0.017$, $0.014$, $0.012$, $0.013$, and $0.018$ respectively. For the star-graph the standard deviations for 11 tasks were $0.09$, $0.13$, $0.13$, $0.06$, $0.13$ $0.09$, and $0.07$ for MV, KOS, OPT-D\&S, PGD, ER, IWMV, M3V respectively and for 330 tasks they were $0.016$, $0.015$, $0.013$, $0.012$, $0.04$, $0.03$, and $0.013$. For the ring the standard deviation for 11 tasks were $0.096$, $0.05$, $0.08$, $0.1$, $0.11$, $0.09$, $0.08$ and for 330 tasks they were $0.017$, $0.05$, $0.02$, $0.043$, $0.05$, $0.086$, $0.05$. Standard deviations decrease with growing number of tasks.}
\end{figure*}
{\bf \textsc{Synthetic Experiments:}}
We will experiment with different graph types, increasing levels of label noise, graph-size, skill distribution, and different weighting functions on synthetic data. 

{\it Impact of Graph Type:}
We consider three 11-node (\# workers) irreducible, non-bipartite graphs, namely, a Clique ($G_1$), Star with augmented odd cycle ($G_2$), and a Ring ($G_3$) to illustrate the impact of sparsity (Clique has dense worker interactions while Star/Ring have fewer than 3 worker interactions) and graph-type (Ring vs. Star). An illustration of the different graph types is shown in Figure~\ref{Fig:GraphType}. These graphs satisfy condition (ii) of Thm~\ref{thm:learnability}. 

\begin{figure*}
\centering
\subfigure[Clique.]{
\includegraphics[width=0.27\textwidth]{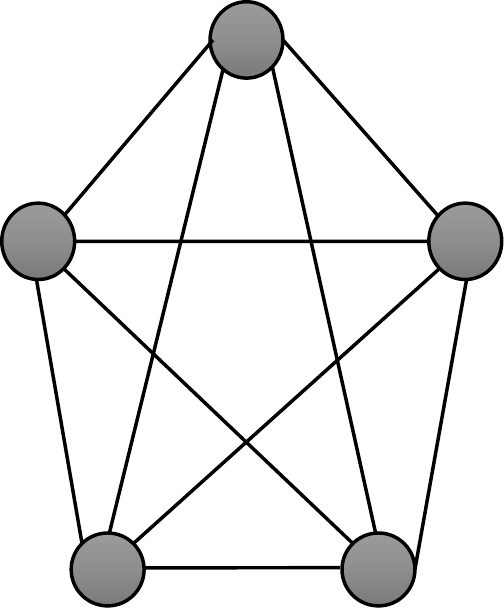}}
\subfigure[Star graph with $3$-cycle.]{
\includegraphics[width=0.27\textwidth]{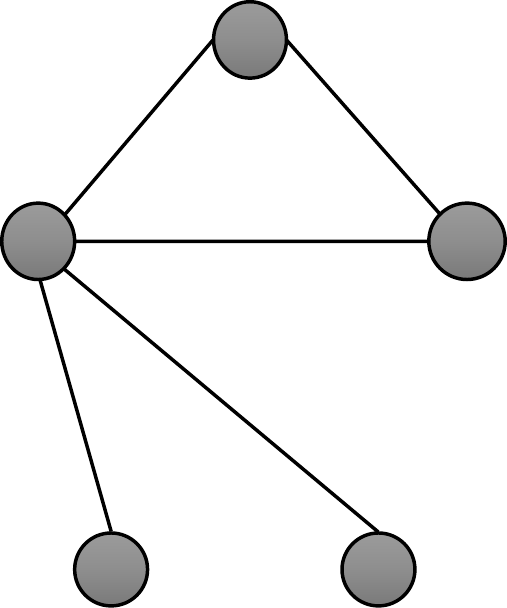}}
\subfigure[Ring.]{
\includegraphics[width=0.27\textwidth]{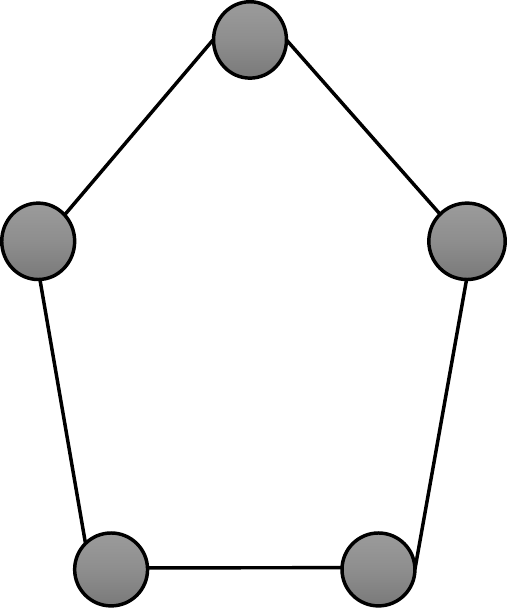}}
\caption{Different Graph Types.}\label{Fig:GraphType}
\end{figure*}

{\it Noise Robustness:} 
To see the impact of noise, we vary the noise level by increasing the number of tasks, which in turn reduces the error in the correlation matrix. 
Tasks are randomly assigned to binary classes $\pm{1}$ with total number of tasks ranging from $11$ to $330$. Skills are randomly assigned on a uniform grid between $0.8$ and $-0.3$\footnote{
The reason for this choice is to satisfy condition (i) in \cref{thm:learnability}, i.e., requiring overall skills to be positive. Aggregate skill is about $0.25$. 
} 

{\it Influence of  Different Skill-Distribution:}
We randomly assign binary classes to $T=300$ tasks and select five pairs of parameters. Average prediction errors are presented in Table~\ref{Tab:beta} averaged over $10$ independent runs. Parameters $\alpha=5,\beta=1$, correspond to reliable workers leading to small prediction error; the prediction error with parameters $\alpha=2,\beta=2$ and $\alpha=0.5,\alpha=0.5$, is almost random because of $\sum_{i\in [W]}s_{i}$ is no longer positive, which validates our theory. Similar situation arises for $\alpha=2,\beta=5$ and $\alpha=5,\beta=1$, because the skills are all flipped relative to our assumption that the sum of the skills is positive. 

\iftrue
\begin{table}
\centering
\scalebox{1}{
\begin{tabular}{c|c|c|c|c}
\hline
Type of workers                & $\alpha$  & $\beta$   & Bayes error           & Prediction error (const. noise) \\  \hline
Adversary vs. hammer            & $0.5 $    & $0.5$     &$ 0.0036\pm 0.0014$    &$ 0.5990\pm 0.4860$ \\ \hline
Asym. with more positive skills & $5 $      & $1$       &$ 0.0038\pm 0.0014$    &$ 0.0041\pm 0.0013$ \\ \hline
Asym. with more negative skills & $2 $      & $5$       &$ 0.0314\pm 0.0062$    &$ 0.9667\pm 0.0067$ \\ \hline
Hammer                          & $2 $      & $2$       &$ 0.0615\pm 0.0083$    &$ 0.4162\pm 0.4273$ \\ \hline
Spammer                         & $1 $      & $3$       &$ 0.0129\pm 0.0034$    &$ 0.9864\pm 0.0041$ \\ \hline
\end{tabular}
}
\caption{Average prediction errors with different skills distributions.}
\label{Tab:beta}
\end{table}
\fi

{\it Influence of Graph Size:}
We focus on how the graph size affects the performance of PGD algorithm. Note that graph size is associated with the number of workers. Our goal is to demonstrate that for a constant amount of noise, prediction accuracy of PGD does not degrade with graph-size. We again consider the case when the worker-interaction graph is a star-graph with an odd-cycle of length 3. We increase the size of worker-interaction graph by adding nodes to the star-graph. Skills $s$ are selected between $0.8$ and $-0.3$ uniformly. 
To fix the noise level, we define $C_{ij}=s_is_j+\xi_{ij},\forall (i,j)\in E$ where $\xi_{ij}$ is randomly selected from $[-0.2,0.2]$. Note that the noise level is quite large relative to what we expect in terms of accuracy of correlation estimates. We iteratively run PGD for $50$ times. The average prediction errors with different graph size it presented in Table~\ref{Tab:graphsize}. It can be seen that the prediction error is not sensitive to the graph size compared to the Bayes error.

\iftrue
\begin{table}
\centering
\scalebox{0.85}{
\begin{tabular}{c|c|c|c|c}
\hline
  Number of workers & $21$  & $51$ & $71$ & $91$  \\ \hline
Bayes error & $0.0425\pm{0.0042}$  & $0.0622\pm{0.0040}$ & $0.0634\pm{0.0033}$ & $0.0574\pm{0.0030}$ \\ \hline
Prediction error (const. noise)& $0.0425\pm{0.0042}$ & $0.0641\pm{0.0126}$ & $0.0662\pm{0.0072}$ & $0.0618\pm{0.0063}$\\ \hline
\end{tabular}}
\caption{Average prediction errors for different graph sizes.}
\label{Tab:graphsize}
\end{table}
\fi
{\it Influence of Weighting function:}
It is straighforward from our proof of Theorem~\ref{thm:noiselessPGD} to see that PGD algorithm converges to the global optimal for any non-negative weights. Our objective is based on weighting with number of counts in Eq.~\ref{lem:equiv}. However, there are other options that one could consider.  \cite{dalvi_aggregating_2013} has suggested using $B(N_{ij})=N_{ij}^2$, while we use $N_{ij}$. Another possibility is to use binary weights. 
\begin{table*}
\caption{Prediction errors for different weightings}
\label{Tab:weighting}
\centering
\begin{tabular}{c|c|c|c}
\hline
Worker type  & & &\\ Assigned most tasks &  $[N_{ij}>0]$  & $B(N_{ij})=N_{ij}$ & $B(N_{ij})=N_{ij}^2$\\ \hline
Spammers & $0.33\pm{0.03}$  & $0.33\pm{0.03}$ & $0.55\pm{0.17}$\\ \hline
Positive skill workers& $0.17\pm{0.06}$ & $0.09\pm{0.02}$ & $0.09\pm{0.02}$\\ \hline
\end{tabular}
\end{table*}
We iteratively run PGD $10$ times for each weighing function with $T=300$ tasks for different types of task assignements. If $N_{ij}$'s are all equal, these choices produce identical results. We consider two cases: (a) Spammers are assigned a majority of tasks; (b) Positively skilled workers are assigned most tasks. The prediction errors are compared in Table~\ref{Tab:weighting}. Note that quadratic weighting is quite bad in this case because it tends to ignore positively skilled workers. On the other hand unweighted case does not accurately estimate spammers and also results in poor choice.

We compare the average prediction error $PE=\frac{1}{T}\sum_{t=1,\ldots,T}1\{\hat{Y}_t\neq g_t\}$ with the Majority Voting (MV) algorithm, the KOS algorithm \citep{Karger2013},  Opt-D\&S algorithm \citep{Zhang2014}, the ER algorithm \citep{dalvi_aggregating_2013}, the IWMV algorithm \citep{Li2011}, and the M3V algorithm \citep{tian2015}. The KOS algorithm is based on belief propagation, Opt-D\&S uses a spectral method to initialize EM, the ER algorithm is the more successful spectral method of the paper defining it, the IWMV algorithm is an EM-style algorithm.
Each algorithm is averaged over 15 trials on each dataset. The average prediction errors are presented in Figure~2. As the number of tasks grows, the average prediction error of PGD algorithm decreases. PGD is evidently robust to missing data/sparsity and graph-type. OPT-DS, which is close to PGD performance suffers significant performance degradation on sparse graphs such as rings. We can attribute this to the fact that a tensor-based method requires at least 3 worker annotations for each task ~\cite{Zhang2014}. 
%
%
\if0
\noindent{\it Graph Weights:} We argued in Sec.~\ref{sec:theory} that PGD algorithm converges to the global optimal for any non-negative weights. It is interesting to consider the behavior with different choices. \cite{dalvi_aggregating_2013} has suggested using $B(N_{ij})=N_{ij}^2$, while we use $N_{ij}$. Another possibility is to use binary weights. 
\begin{table}
\makeatletter\def\@captype{table}\makeatother
\caption{\small Prediction errors for different weightings}
\label{Tab:weighting}
\centering
\scalebox{0.75}{
\begin{tabular}{c|c|c|c}
\hline
Worker type  & & &\\ Assigned most tasks &  $[N_{ij}>0]$  & $B(N_{ij})=N_{ij}$ & $B(N_{ij})=N_{ij}^2$\\ \hline
Spammers & $0.33\pm{0.03}$  & $0.33\pm{0.03}$ & $0.55\pm{0.17}$\\ \hline
Positive skill workers& $0.17\pm{0.06}$ & $0.09\pm{0.02}$ & $0.09\pm{0.02}$\\ \hline
\end{tabular}}
\end{table}
We iteratively run PGD $10$ times for each weighing function with $T=300$ tasks for different types of task assignments. If $N_{ij}$'s are all equal, these choices produce identical results. We consider two cases: (a) Spammers are assigned a majority of tasks; (b) Positively skilled workers are assigned most tasks. The prediction errors are compared in Table~\ref{Tab:weighting}. Note that quadratic weighting is quite bad in this case because it tends to ignore positively skilled workers. On the other hand unweighted case does not accurately estimate spammers and also results in poor choice.
\fi

\begin{table*}
\caption{Summary of Benchmark Datasets.}
\label{Tab:realdata}
\centering
\begin{tabular}{c|c|c|c|c|c}
\hline
Datasets & Tasks & Workers & Instances & Classes & Sparsity level\\ \hline
RTE1     & 800  &  164    & 8000    &   2 &0.0610\\ 
Temp     & 462   & 76      &4620      &   2 &0.1316\\ 
Dogs      & 807   & 109     & 8070    &   4&0.0917\\ 
WSD      & 177    & 34      & 1770    &   3 & 0.2947\\
Web      & 2665    & 177      & 15567    &   5&0.0033\\ 
\hline
\end{tabular}
\end{table*}
\begin{table*}
\caption{ Prediction Errors of Different Methods.}
\label{Tab:errorrate}
\centering
\begin{tabular}{c|c|c|c|c|c|c|c}
\hline
Data& PGD  & MV & Opt-D\&S  & KOS  &  ER & IWMV &M3W \\ 
\hline
RTE1  &\textbf{0.07} & 0.1031        & 0.0712       &  0.3975  & 0.14 &0.08 & 0.0813\\ 
Temp   &\textbf{0.0512}& 0.0639       &  0.0584        & 0.0628  & 0.052&0.06& 0.0606\\ 
Dogs   & \textbf{0.1660}& 0.1958  &  0.1689          & 0.3172  & 0.18&0.19 &0.1822\\
WSD    &         0.0056 & 0.0056 & N/A            &0.0056   & 0.0056&0.0056&0.0056\\ 
Web    &\textbf{0.1485}& 0.2693  &  0.1586             & 0.4293  & 0.22&0.22&0.1847\\ 
\hline
\end{tabular}
\end{table*}
{\bf \textsc{Benchmark Dataset Experiments:}}
We illustrate the performance of PGD algorithm against state-of-art algorithms. Each algorithm is executed on five data-sets, i.e. RTE1 \citep{Snow08}, Temp \citep{Snow08}, Dogs \citep{Deng2009}, WSD (Word Sense Disambiguation) \citep{Snow08},and WebSearch \citep{Zhou2012}.  A summary of these data-sets is presented in Table~\ref{Tab:realdata}. Following convention we report errors between ground-truth and recovered labels in Table~\ref{Tab:errorrate}. Note that on WSD dataset, OPT-D\&S algorithm does not converge to a equilibrium point after $1000$ iterations.

\if0
{\it Multi-Class Datasets:} For Dogs and Web (multiclass) we run our algorithm with one-vs-rest strategy for each class by assuming class-independent models determine the probability of the worker flipping the ground truth. A score function for class-conditional skill is calculated for each class $k$ using
$score(k)=\sum_{(i,t)\in A}\log{\frac{1+s_i}{1-s_i}}\bm{1}(Y_{i,t}=k)$,
where $k\in \mathcal{K}$ is the class index and $\bm{1}(\cdot)$ is a $\pm 1$ indicator. We predict the label by finding the class corresponding to the maximum of the score function. We also consider a closely related strategy \cite{Li2011} (see also Supplementary) where the flipped ground-truth label is randomly assigned to one of the other classes. The skill estimation and label estimation for this scenario is a straightforward extension of our proposed scheme since the confusion matrix is characterized by a single skill parameter. We report the best results among these two setups in Table~\ref{Tab:realdata}. PGD algorithm uniformly outperforms the state-of-art algorithms. 
\fi
\begin{figure} 
\centering
\subfigure{\label{subFig:rte}}\addtocounter{subfigure}{-1}
\subfigure[RTE1.]{
\includegraphics[width=0.4\textwidth]{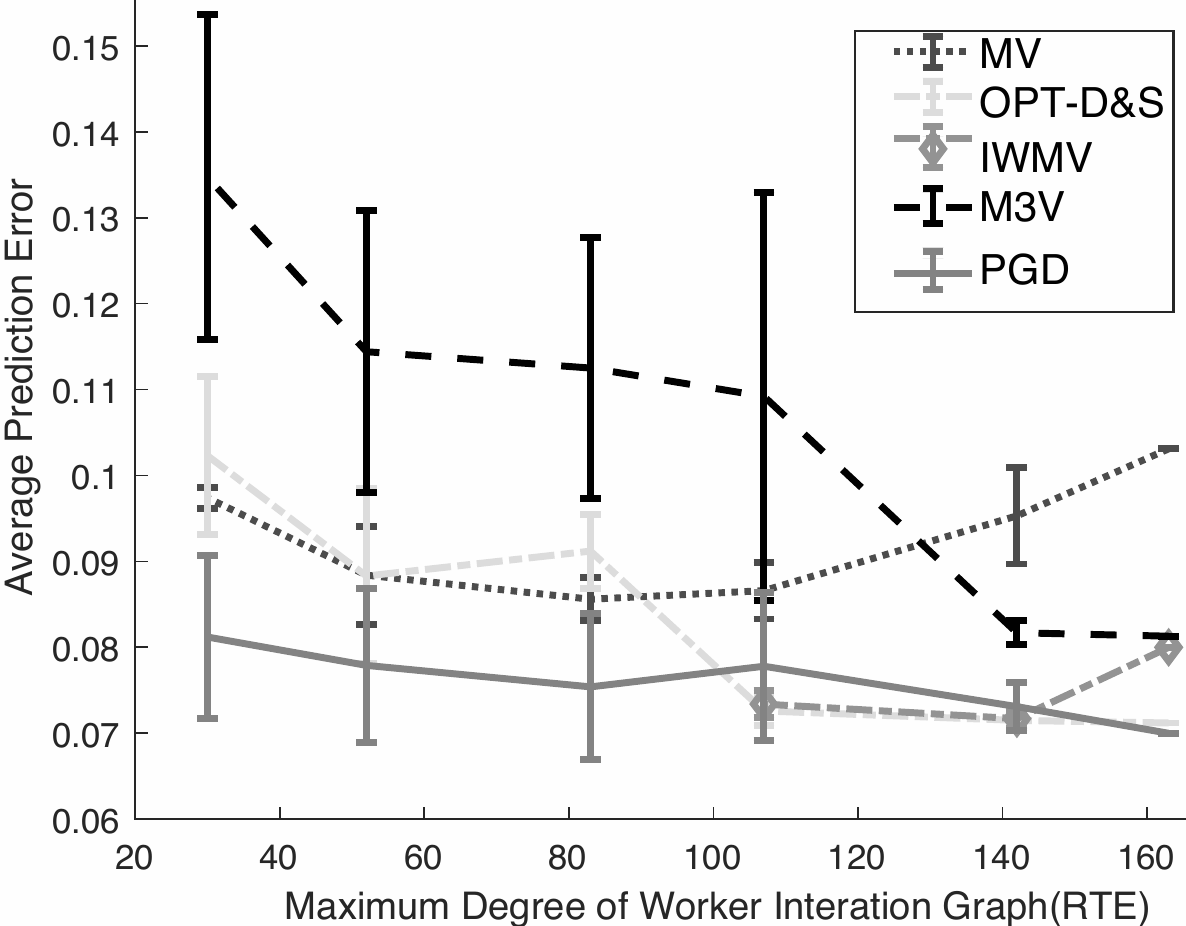}}
\subfigure{\label{subFig:temp}}\addtocounter{subfigure}{-1}
\subfigure[TEMP.]{
\includegraphics[width=0.4\textwidth]{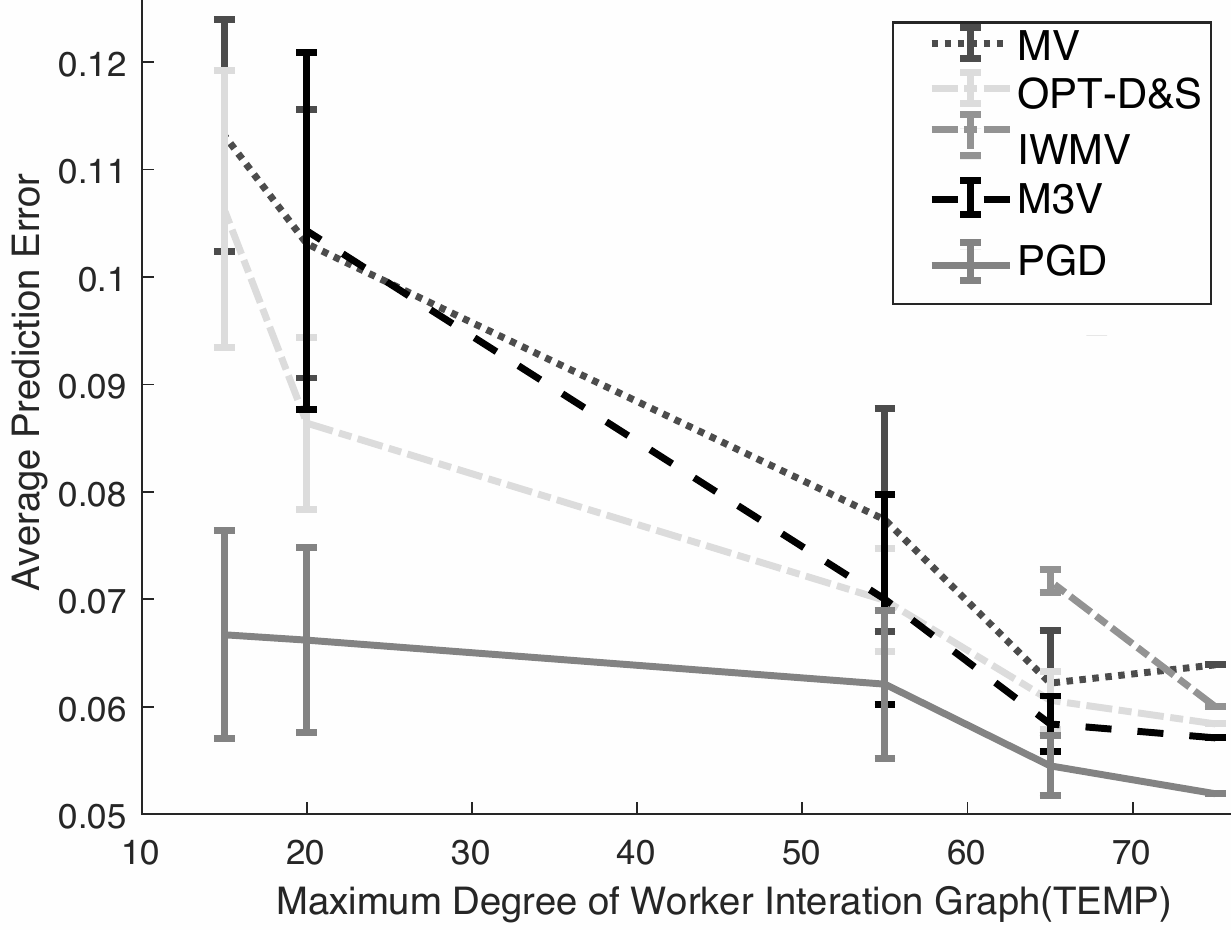}}
\label{Fig:Sparsity}
\caption{\small Impact of Graph Sparsification.}
\end{figure}

{\it Influence of Graph Sparsification:}
Here we consider the scenario where fewer workers label each task on the binary classification benchmark datasets. Binary classification tasks are aligned with our theoretical results. This experiment will highlight the performance of state-of-art algorithms under sparse task-assignments. We simulate this effect based on random sparsification. 
In particular, we sort the degree of each node on the interaction graph. To sparsify the graph we randomly delete edges starting with the highest degree node and continue this process for other nodes until we obtain an interaction graph with desired maximum degree. We also remove symmetrically remove corresponding edges of incident workers to maintain symmetry. This has the implicit effect of deleting some of the tasks as well (for instance, if a task is annotated by two workers). Higher levels of sparsification leads to fewer availability of tasks for training. We iteratively run PGD  and the other algorithms for $50$ Monte-Carlo trials with different desired maximum degrees. The average prediction errors are displayed in Figure~3. The reason IWMV performs poorly is that majority votes are no longer reliable, which IWMV relies on. Our PGD algorithm is surprisingly robust to sparsification of interactions and degrades gracefully relative to other schemes. This highlights the fact that PGD is capable of leveraging sparse interactions among workers and obtain fairly robust estimates of skill-levels required for accurate prediction. 

\bigskip

\ao{ \noindent {\bfseries Normal gradient descent vs Eq. (\ref{expgrad}):} Although we have found it convenient to use Eq. (\ref{expgrad}) for our polynomial-time convergence results, in practice we do not see much advantage of that iteration compared to normal gradient descent. Figure \ref{fig:expvsnormal} gives a comparison in the noiseless case while Figure \ref{fig:expvsnormalpert} gives a comparison in the noisy case. The results are extremely similar. Results are shown for random $s$ and three choice of graphs; each plot is the result of a single realization. In general, all the realizations we have seen look like the plots shown, with only minor differences between the methods. }

 \begin{figure}[htp]

\centering
\includegraphics[width=.3\textwidth]{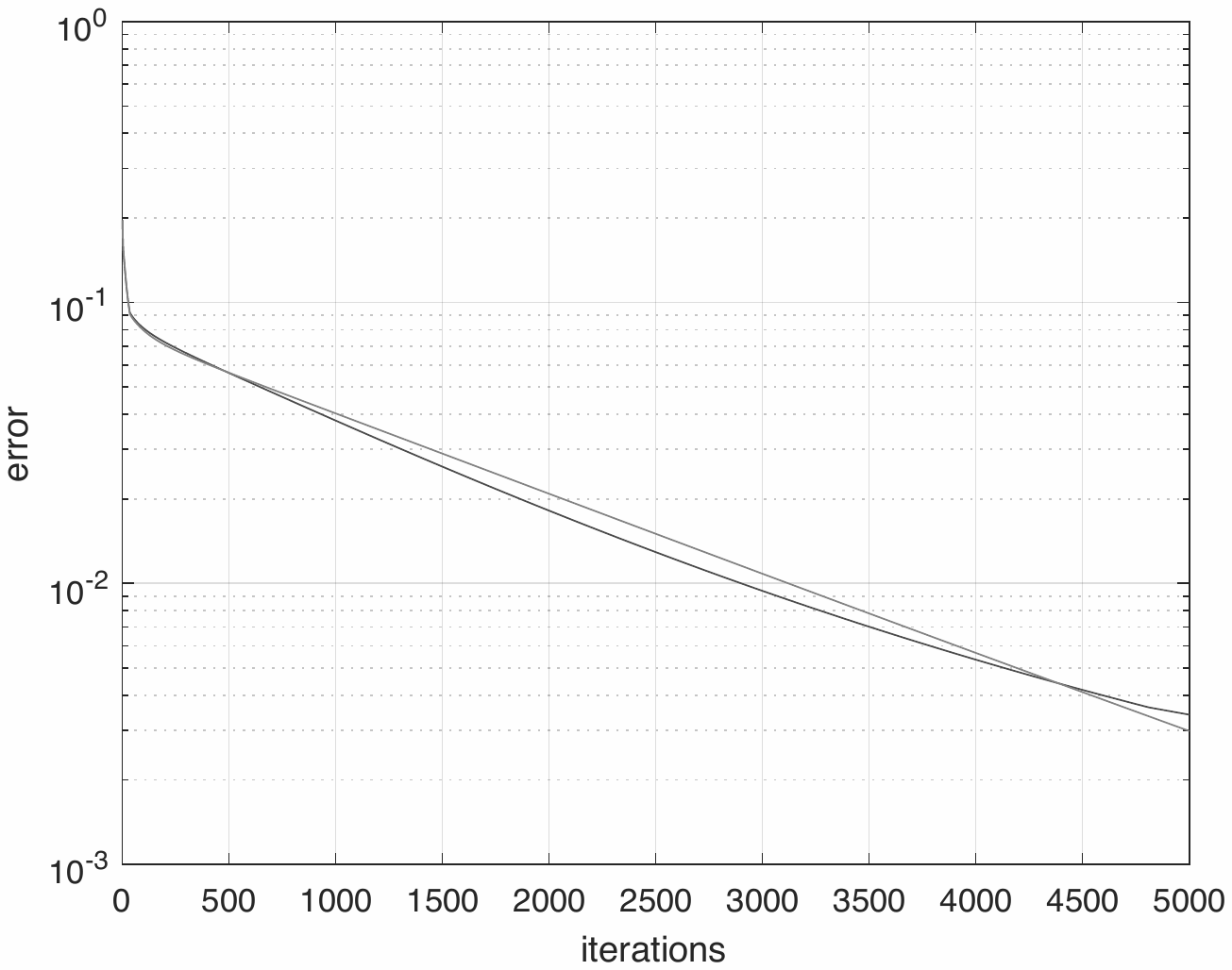}\hfill
\includegraphics[width=.3\textwidth]{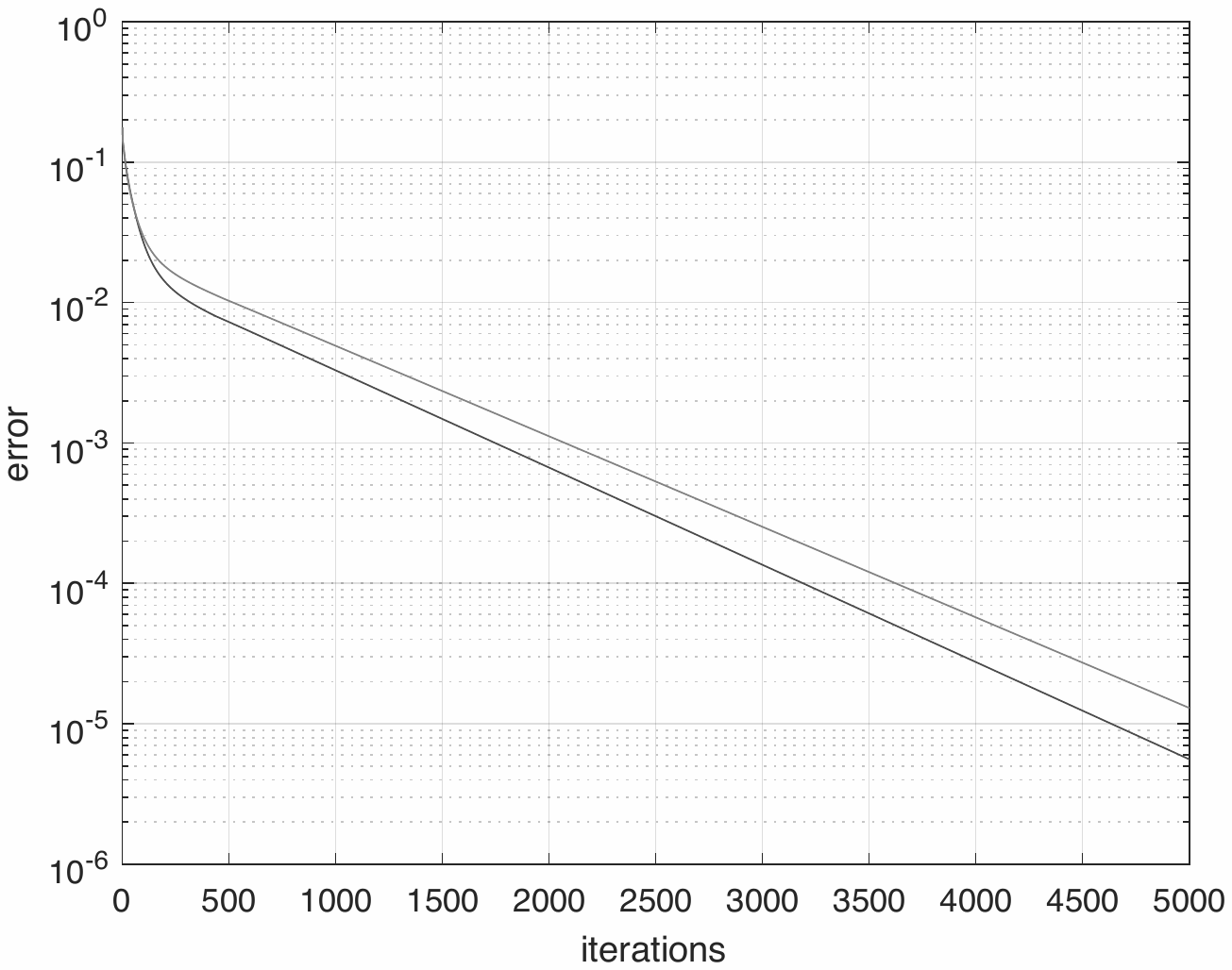}\hfill
\includegraphics[width=.3\textwidth]{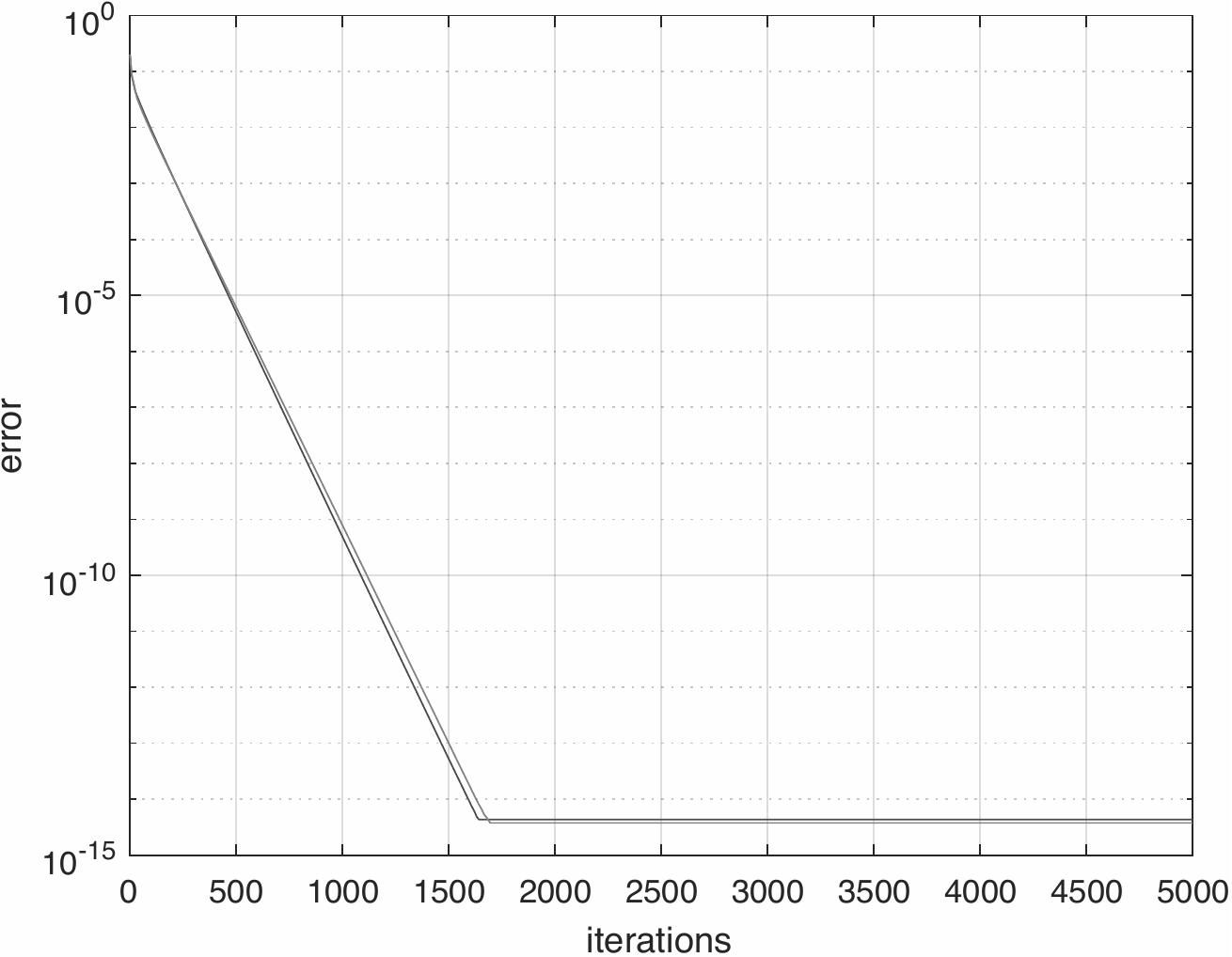}

\caption{\ao{{\bfseries Noiseless case: } the figures show gradient descent in blue vs the variant of exponentiated gradient we have proposed in Eq. (\ref{expgrad}) in red. The plots show the error $||x(t) - s||_{\infty}$ on the y-axis vs the iteration $t$ on the x-axis. The circle graph is shown on the left, the 2D grid with one extra edge in the middle, and an Erdos-Renyi random graph on the right. On all cases, the number of workers is $25$. A single edge was added to the 2D grid in order to ensure aperiodicity. All the perturbations are zero and the correct answer $s$ was generated with each component uniformly random over $[0.4,0.6]$. Starting point was $x(0)=0.6 \cdot {\bf 1}$ and stepsize was taken to be $0.01$ in both cases. }}
\label{fig:expvsnormal}

\end{figure}

 \begin{figure}[htp]

\centering
\includegraphics[width=.3\textwidth]{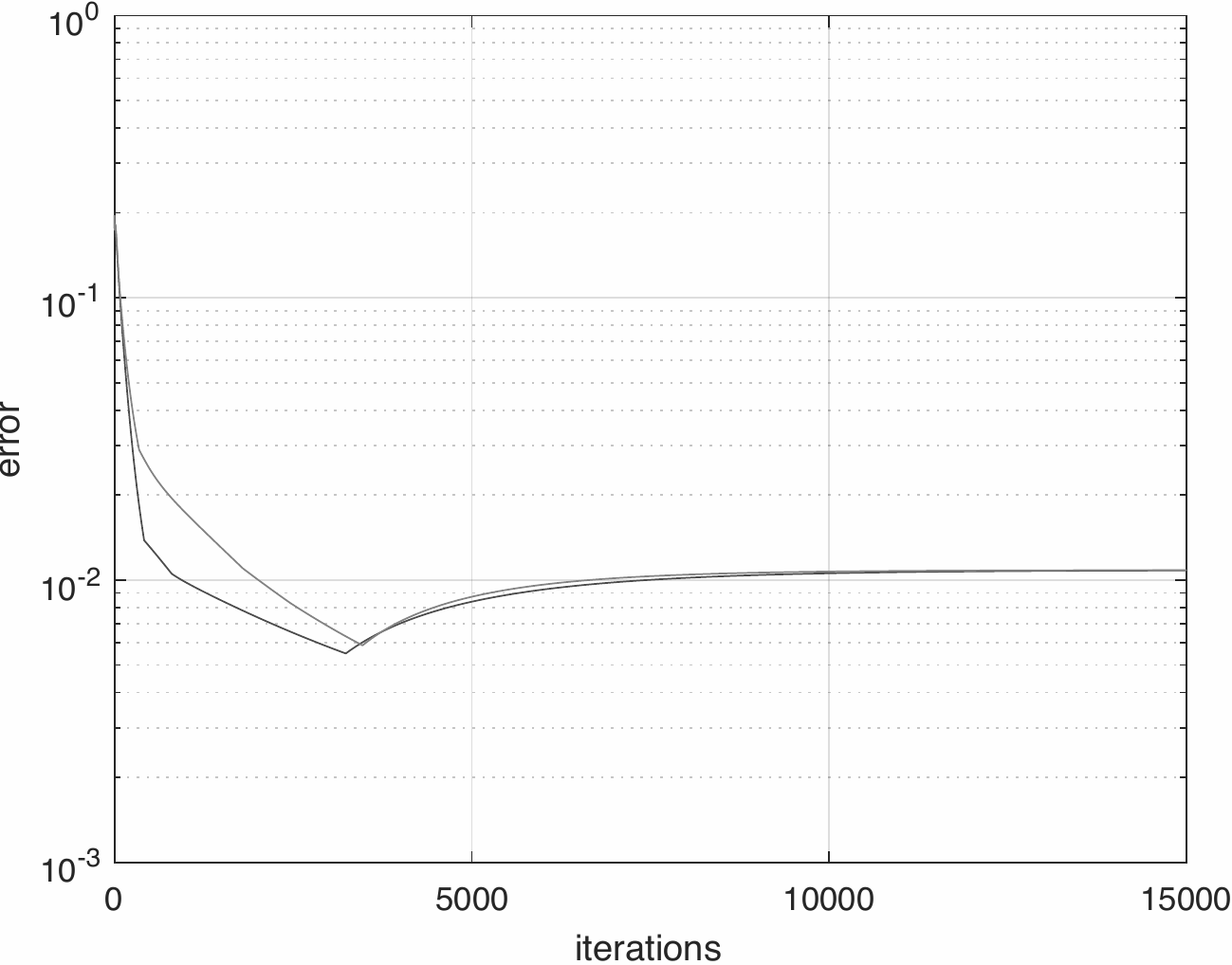}\hfill
\includegraphics[width=.3\textwidth]{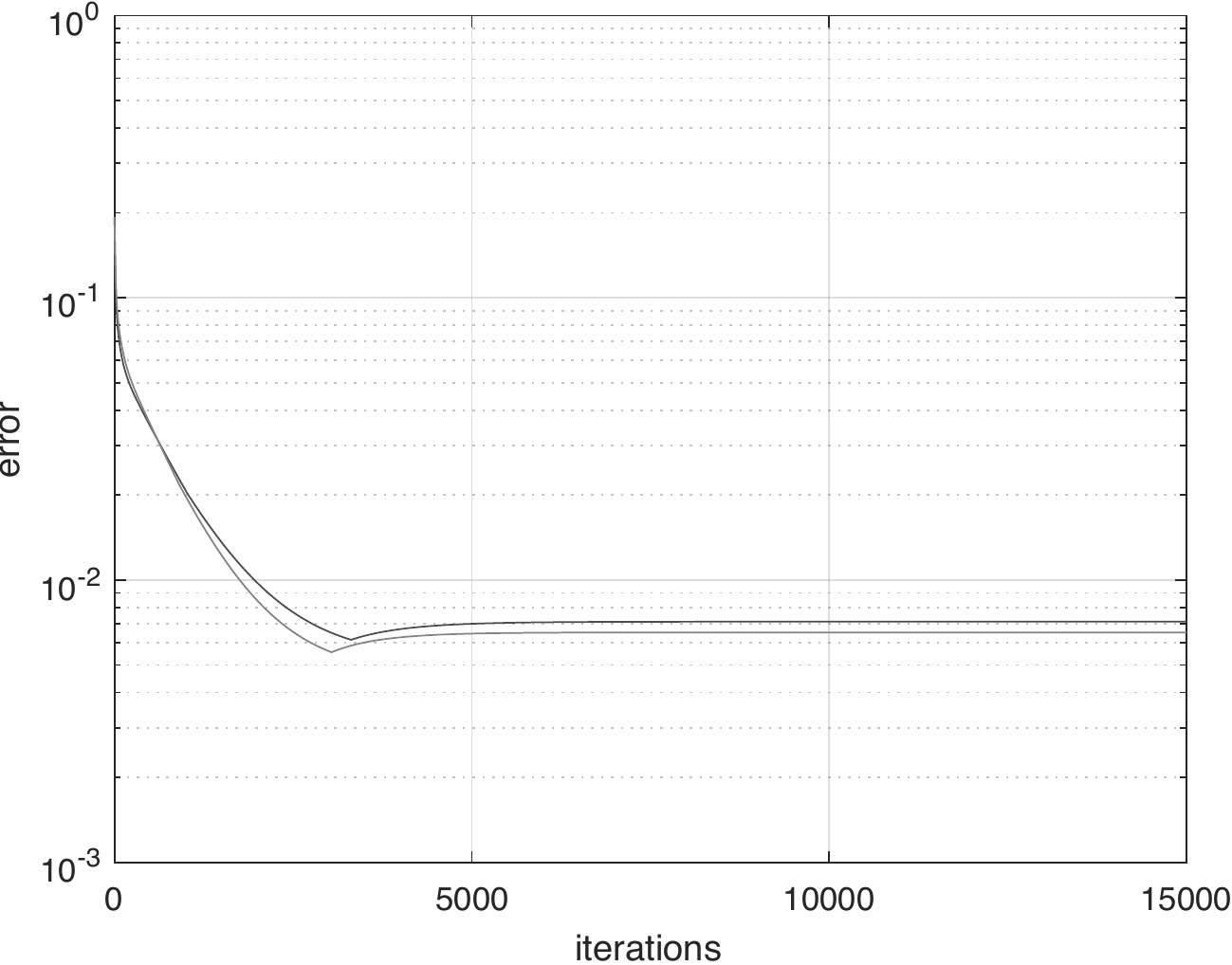}\hfill
\includegraphics[width=.3\textwidth]{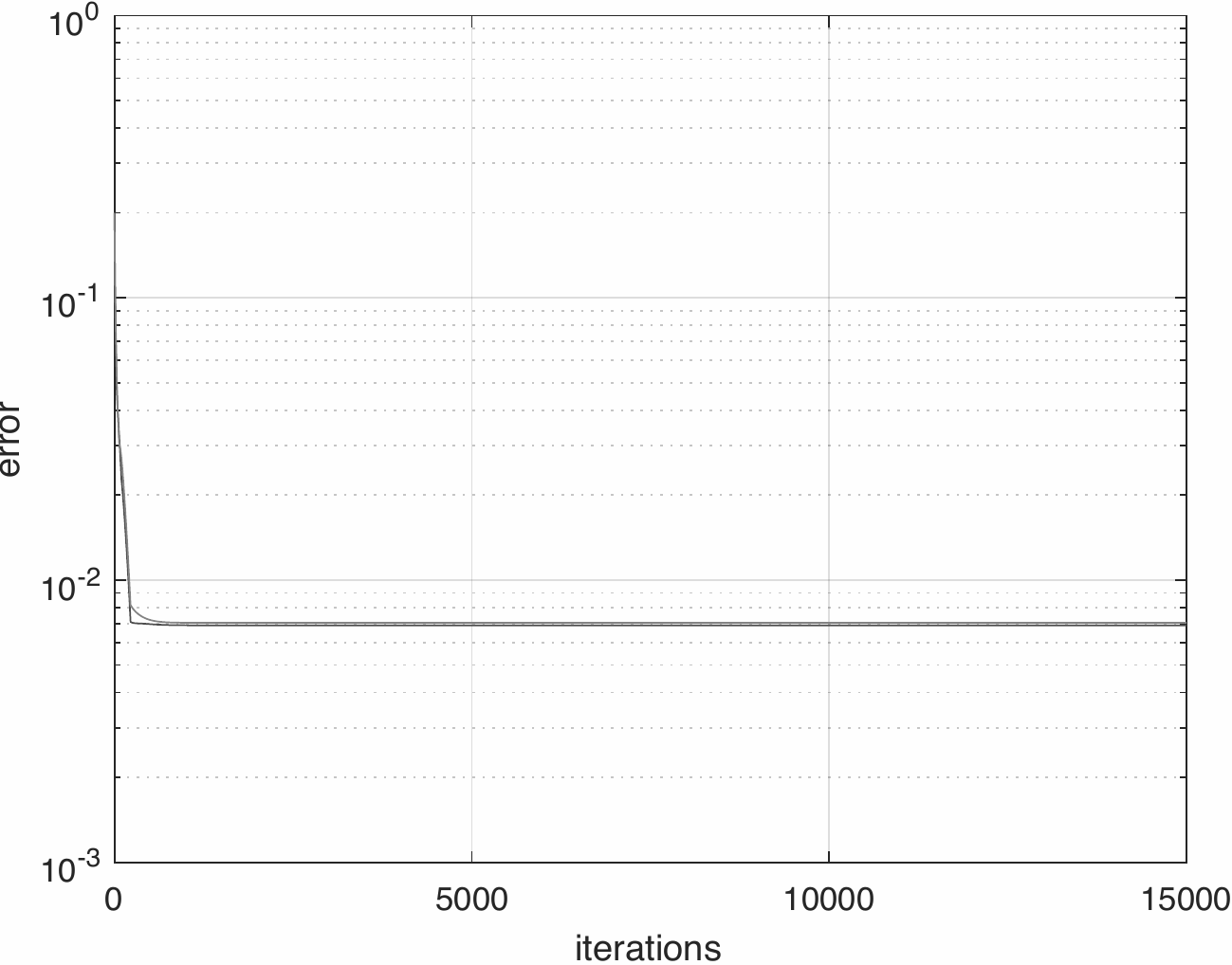}

\caption{\ao{{\bfseries Noisy case:} Everything is the same as in Figure \ref{fig:expvsnormal}, except that small independent noises of $0.01X$ where $X$ is uniform on $[-1/2,1/2]$ is added to every entry. This results in a perturbation matrix $\Delta$ with $||\Delta||_2 \sim 10^{-2}$, and all the final deviations from the correct solution $s$ are also on the order of $10^{-2}$.}}
\label{fig:expvsnormalpert}

\end{figure}

{\bf Time Complexity:} We also compare the time complexity of proposed algorithm against state-of-art algorithms. Our PGD algorithm requires fewer iterations in comparison to other iterative methods and each iteration scales linearly with $W$ and the maximum degree, $D_{max}$, of the worker-interaction graph which is bounded by $W$. Time complexity of different algorithms is summarized in Table~\ref{Tab:timecomplexity} 
\footnote{Opt-D\&S, KOS, and ER algorithms are omitted. They employ spectral factorization and have high time complexity.}.

\begin{table}
\centering
\caption{Time Complexity/Iteration of Different Methods.}
\label{Tab:timecomplexity}
\begin{tabular}{c|c|c|c}
\hline
Alg. & PGD & IWMV &M3W \\ 
\hline
Com. & $O(D_{max}W)$ & $O(TW)$ & $O(W^2T)$\\
\hline
\end{tabular}
\end{table}

\vspace{-0.05in}
\section{Conclusions}
We propose a new moment-matching approach
with weighted rank-one approximation and propose a gradient algorithm for worker skill estimation in Crowdsourcing.
In contrast to prior work, the weights are set up to correct for the spread 
of the measured worker-worker agreements accuracies which are typical in real-world
problems where who works on the same task with whom is out of control. 
Our results explicitly characterize identifiability and convergence rates
in terms of spectral graph theoretical quantities, revealing the importance of worker interaction graphs for skill estimation. The general problem studied here, is related to state estimation with intermittent and active sensor communications~\citep{dacsrv_interm06,Hanawal}, which we plan to explore in future work. 
\if0
We show that label recovery in crowdsourcing with such uncontrolled interactions is learnable
if and only if the underlying worker-worker interaction graph is connected and non-bipartite.
We also show that the objective function of the weighted rank-one approximation problem, while
is non-convex, is well behaving: We propose a gradient method and argue that 
asymptotically the gradient method is guaranteed to recover the true unknown skills.
\if0
 in multiple ways, implying that 
The only stable stationary points of the objective are the 
gradient 
This work argues that the worker-worker interaction graph which captures this information

Since worker-task assignments
are typically uncontrolled, one can expect that 
We propose a method for skill estimation for the single-coin crowdsourcing binary classification model.
We define the (worker) interaction graph whose nodes are workers and an edge between two nodes indicates 
whether or not the two workers participated in a common task. 
We show that skills are asymptotically identifiable if and only if an appropriate limiting version of the interaction graph 
is connected and is non-bipartite. 
We then formulate a weighted rank-one optimization problem to estimate skills based on observations on the interaction graph. \todoc{Maybe rather state this as weighted rank-one estimation problem?}
We propose a gradient descent scheme, and show that asymptotically \todoc{This asymptotics probably will require that $n_{ij} \sim n \rho_{ij}$ with $n \doteq \frac12 \sum_{ij} n_{ij}$.} it converges to the global minimum. 
\fi
We characterize robustness to noise in terms of spectral properties of the interaction graph. We then demonstrate that a plug-in estimator based on the estimated skills achieves state-of-art performance on a number of real-world datasets.
\fi

\bibliography{refs.bib}

\newpage
\appendix
\onecolumn
\begin{center}
\Large Appendix
\end{center}
\newcommand{\x}{{\bf x}}
\newcommand{\p}{{\bf p}}
\newcommand{\y}{{\bf y}}
\newcommand{\bpi}{{\boldsymbol{\pi}}}

\def\N{\mathcal{N}}
\def\ox{\overline{x}}
\def\ovr{\overline{\rho}}
\def\oA{\overline{A}}
\def\A{\mathcal{A}}
\def\B{\mathcal{B}}
\def\oa{\overline{a}}
\def\oG{\overline{G}}
\def\sjn{\sum_{j=1}^n}
\def\E{\mathcal{E}}
\def\oE{\mathcal{\overline{E}}}
\def\oN{\mathcal{\overline{N}}}
\def\0{{\bf 0}}
\def\1{{\bf 1}}
\def\e{{\bf e}}
\def\s{{\bf s}}
\def\R{\mathbb{R}}
\def\C{\mathbb{C}}
\providecommand{\comjh[1]}{\com{JH}{#1}}
\providecommand{\com[2]}{\begin{tt}[#1: #2]\end{tt}}
\def\A{{\mathcal A}}
\def\jsr{{\rm jsr}}
\def\ao#1{\color{red}{#1}}
\def\aoc#1{{\color{red}#1}}

\def\bang#1{\smallbreak\noindent$\triangleright$\ {\it #1}\ }
\def\bangg#1{\smallbreak\noindent$\unrhd$\ \textit{#1}\ \ }

\def\A{\mathcal{A}}
\def\B{\mathcal{B}}
\def\red#1{{\color{red}#1}}
\def\ao{}
\def\aa{}

\def\argmin{\mathop{\rm argmin}}
\def\mf{\mathbf}
\def\mb{\mathbb}
\def\mc{\mathcal}
\def\beq{\begin{equation*}}
\def\eeq{\end{equation*}}
\def\bql{\begin{equation}}
\def\eql{\end{equation}}
\def\bqn{\begin{eqnarray*}}
\def\eqn{\end{eqnarray*}}
\def\bnl{\begin{eqnarray}}
\def\enl{\end{eqnarray}}
\def\bma{\begin{bmatrix}}
\def\ema{\end{bmatrix}}
\def\bmx{\begin{matrix}}
\def\emx{\end{matrix}}
\def\ben{\begin{enumerate}}
\def\een{\end{enumerate}}
\def\bit{\begin{itemize}}
\def\eit{\end{itemize}}
\def\bei{\begin{itemize}}
\def\eei{\end{itemize}}
\def\bet{\begin{tabular}}
\def\eet{\end{tabular}}
\def\und{\underline}
\def\unb{\underbrace}
\def\Log{\mbox{Log}}
\newcommand{\allcaps}[1]{\uppercase\expandafter{#1}}
\newcommand{\yn}{\color{blue}}
\def\cb #1{{\color{blue} #1}}
\def\ao #1{{\color{red} #1}}

\section{Proof of Lemma~\ref{lem:equiv}}
\label{sec:equiv}
Recall that $\theta$ is a finite index.
For each $(i,j)$ such that $N_{ij}>0$, we have
\begin{align*}
\frac{1}{N_{ij}}\sum_{t:(i,t),(j,t)\in A}(Y_{i,t}Y_{j,t}-x_ix_j)^2
&=\frac{1}{N_{ij}}\sum_{t:(i,t),(j,t)\in A}(1-2Y_{i,t}Y_{j,t}x_ix_j+x_i^2x_j^2)\\
&=1-2\tilde{C}_{ij}x_ix_j+x_i^2x_j^2\\
&=\tilde{C}_{ij}^2-2\tilde{C}_{ij}x_ix_j+x_i^2x_j^2+1-\tilde{C}_{ij}^2\\
&=(\tilde{C}_{ij}-x_ix_j)^2+1-\tilde{C}^2_{ij}.
\end{align*}
Therefore,
\begin{align*}
\frac{1}{2}\sum_{(i,t),(j,t)\in A}(Y_{i,t}Y_{j,t}-x_ix_j)^2
&=\frac{1}{2}\sum_{i,j\in[W]}N_{ij}(\tilde{C}_{ij}-x_ix_j)^2+\sum_{i,j\in[W]}N_{ij}(1-\tilde{C}_{ij}^2)
\end{align*}
Since $\sum_{i,j\in[W]}N_{ij}(1-\tilde{C}_{ij}^2)$ is a constant, Eq.\eqref{Eq:Objective} is equivalent to the optimization problem  $\argmin_{x\in[-1,+1]^W}L(x)$.


\section{Proof of Theorem~\ref{thm:learnability}}
The proof directly follows from Lemma~\ref{lem:asylearn} and Lemma~\ref{lem:asylearn1}.
We will next prove these Lemmas.

\bigskip

\noindent
{\it Proof of Lemma~\ref{lem:asylearn}:}
Take any two workers $i,j$ that are connected in $G$. Let $t\in \N$ be a task such that $(i,t),(j,t)\in A$.
By assumption, $$Y_{i,t} Y_{j,t} = g_t^2 Z_{i,t} Z_{j,t} = Z_{i,t} Z_{j,t}.$$ 
Let us define
\[ \bar C_{ij} \doteq \lim_{T\to\infty} \frac1T \sum_{(i,t),(j,t)\in A, t\le T}  Y_{i,t} Y_{j,t}.\] By the law of large numbers, 
\[ \bar C_{ij} = E[Z_{i,t} Z_{j,t}] = s_i s_j. \] For convenience, we define $ \bar C_{ij} = 0$ when $(i,j)\not\in E$.

\if0
We note in passing that 
the above system of equations can be written compactly as
\[
M \circ s s^\top = M \circ C\,,
\]
where $M\in \{0,1\}^{W\times W}$ denotes the adjacency matrix of $G$,
$C\in [-1,1]^{W\times W}$ is the matrix formed of $(C_{ij})$ and $\circ$ denotes the entrywise (a.k.a. Hadamard, or Schur) product of matrices.
\fi

Next without loss of generality assume that workers $1,2,\dots,2k+1$ form a cycle in $G$. 
Then, 
\begin{align*}
s_1  &= \bar C_{1,2k+1} s_{2k+1}^{-1} \\
       & = C_{1,2k+1} C_{2k+1,2k}^{-1} s_{2k} \\
	   & = C_{1,2k+1} C_{2k+1,2k}^{-1} C_{2k,2k-1} s_{2k-1}^{-1} \\
	   & \quad  \vdots \\
        &  =  C_{1,2k+1} C_{2k+1,2k}^{-1} C_{2k,2k-1} \dots C_{2,1} s_{1}^{-1} \,,
\end{align*}
which implies that
\begin{align*}
|s_1| = \sqrt{C_{1,2k+1} C_{2k+1,2k}^{-1} C_{2k,2k-1} \dots C_{2,1}} \,,
\end{align*}
assuming that $C_{2,3},C_{4,5},\dots,C_{2k,2k+1}\ne 0$. This gives a method to recover 
$|s_1|$. 

Now, since $G$ is connected, for any worker $i$ there exists a path from worker $1$ to worker $i$.
If this path was given by the vertices $1,2,\dots,\ell$ then
\begin{align*}
|s_\ell| =& |C_{\ell,\ell-1}| \,|s_{\ell-1}^{-1} | = |C_{\ell,\ell-1}| \,|C_{\ell-1,\ell-2}^{-1}|\, |s_{\ell-2} |\\
= &\dots = |C_{\ell,\ell-1}| \,|C_{\ell-1,\ell-2}^{-1}| \cdots |C_{2,1}^{(-1)^{\ell}}|\, |s_1|^{(-1)^{\ell+1}}\,.
\end{align*} which shows how $|s_l|$ may be recovered. We conclude that $|s|$ can be recovered. 

It remains to show that $\mathcal{P}(s)$ can be recovered.
Let $i,j\in [W]$ be two different workers. Then, if $\pi \subset E$ is any path in $G$ from $i$ to $j$,
we have $$\Pi_{(u,v)\in E} \sgn(C_{u,v}) = \Pi_{(u,v)\in E} \sgn(s_u) \sgn(s_v) = \sgn(s_i) \sgn(s_j).$$
We emphasize this holds for all paths connecting $i$ and $j$, and in particular $\Pi_{(u,v)\in E} \sgn(C_{u,v})$ is the same for any path connecting $i$ and $j$. 

Now if  $i$ and $j$ are such that for some path $\pi$ connecting them  $\Pi_{(u,v)\in E} \sgn(C_{u,v}) =+1$,
we assign $i,j$ to the same group; otherwise we assign them to different groups.  It is easy to see that this creates exactly two groups. The resulting 
``partition'' must match $\mathcal{P}(s)$. 

\bigskip

\noindent
{\it Proof of Lemma~\ref{lem:asylearn1}:}
Take $s,\alpha$ which are used in the definition of richness of $\Theta$.
We construct two other skill vectors $s'$ and $s''$ as follows:
We set $s'_1 = \alpha s_1$ and $s''_1 = s_1/\alpha$.
Now, if worker $i$ is at an even distance from worker $1$ on some path in $G$ then 
$s'_i = \alpha s_i$ and $s''_i = s_i/\alpha$, otherwise we set $s'_i = s_i/\alpha$ and $s''_i = \alpha s_i$. 
Note that all workers can be accessed from worker $1$ because $G$ is connected.
Note that if there are multiple paths from worker $1$ to some other worker then all of these have the same parity, or the graph had an odd cycle.
Now, both $s$ and $s'$ give rise to the same products, $s_i s_j$, along any edge $(i,j)\in E$. 
Since both are in $\Theta$ by assumption, the result is proven.




\bigskip

\noindent{\it Reverse Direction for Theorem~1:} 
We prove this by contraposition.
First, assume that (i) does not hold. We want to prove that learnability fails.
If (i) does not hold, we can take $s,s'\in [-1,1]^W$ different skill vectors such that $|s| = |s'|$ and $\mathcal{P}(s) = \mathcal{P}(s')$ and $s,s'\in S(\Theta)$. It follows that $s=-s'$.
Take any $g\in \pms^W$. Note that the instances $(s,A,g)$ and $(-s,A,-g)$ lead to the same joint distribution over the observed labels. Hence, no inference schema can tell these instances apart, thus any inference schema will suffer linear regret on one of these instances. 
Now, if (ii) does not hold, Lemma~\ref{lem:asylearn1} gives two skill vectors $s,s'$ which are different and $s\ne \pm s'$, which again give the same likelihood to any data. This again leads to that any inference schema will suffer a linear regret on one of these instances.

\if0
One can define the so-called {\em signless Laplacian} matrix as \[ [L_{\rm s}]_{ij} = \begin{cases} N_{ij} & j \neq i \\ 
\sum_{k=1}^n N_{ik} & j=i  \end{cases} \] It can be shown that if the graph $G$  is not bipartite, the matrix $L_{\rm s}$ is positive definite \cite{desai1994characterization}.  We will use $\lambda$ to denote the smallest eigenvalue of the signless Laplacian matrix of a non-bipartite graph with unit weights; we remark that it as consequence of the results of \cite{desai1994characterization} that $\lambda \geq 1/n^3$. Finally, we let $N_{\rm min}$ be the smallest positive weight among $\{N_{ij}\}$. 
\fi

\if0
\subsection{Proof of Lemma~\ref{lem:contraction}}
The proof of the result is based on showing $\{L(x^t)\}_{t=1,2,\ldots}$ is a non-increasing sequence. Specifically, we have following results by the PGD update rule:
\begin{eqnarray*} 
L(x^{t+1}) - L(x^t) 
& = & \int_{0}^1\frac{\partial L(x^t-\alpha\gamma\nabla L(x^t))}{\partial\alpha}d\alpha\\
& = & \int_{0}^{1}-\gamma\nabla L(x^t)^T\nabla L(x^t-\alpha\gamma\nabla L(x^t))d\alpha\\
& \leq & -\gamma\|\nabla L(x^t)\|^2_2 + \int_0^1\gamma \|\nabla L(x^t)\|_2\|\nabla L(x^t)-\nabla L(x^t-\alpha\gamma\nabla L(x^t))\|_2d\alpha
\end{eqnarray*} 
\fi

\section{Proof of Theorem~\ref{thm:noiselessPGD}}

Our first step is to argue that, with sufficiently small step-size, the PGD method remains bounded. To that end, we have the following proposition.

\begin{proposition}\label{lem:lyapunov} Let \[ V(x) = \max_{i=1, \ldots, W} \max \left( \frac{x_i}{s_i}, \frac{s_i}{x_i} \right) \] and suppose that $x^t$ is positive and $s$ is positive and that  the positive step-size $\gamma$ is small enough so that $$\eta ||N D_s^2||_{\infty} V(x^t)^2 \leq 1.$$ Then, it holds that $V(x^{t+1}) \leq V(x^t)$.  \label{boundlem} 
\end{proposition}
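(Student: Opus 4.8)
The plan is to follow the coordinatewise ratios under one PGD step. Write $a_i \defeq x^t_i/s_i$ and $\rho \defeq V(x^t)$, so the hypothesis $x^t,s>0$ gives $a_i \in [1/\rho,\rho]$ for every $i$. The updated ratios are $a_i' \defeq x^{t+1}_i/s_i$, and the claim $V(x^{t+1})\le V(x^t)$ is exactly the statement that $a_i' \in [1/\rho,\rho]$ for all $i$.

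First I would plug the noiseless PGD update $x^{t+1}_i = x^t_i + \eta\sum_{j=1}^W N_{ij}(s_is_j - x^t_i x^t_j)x^t_j$ (i.e.\ $\tilde C_{ij}=s_is_j$) into this and factor out $s_i$. A short computation gives
\[
a_i' = (1-c_i)\,a_i + d_i,\qquad c_i \defeq \eta\sum_{j=1}^W N_{ij}s_j^2 a_j^2,\quad d_i \defeq \eta\sum_{j=1}^W N_{ij}s_j^2 a_j .
\]
The structural observation that makes the argument work is that, because $N_{ii}=0$, neither $c_i$ nor $d_i$ depends on $a_i$; hence $a_i'$ is an \emph{affine} function of $a_i$ with slope $1-c_i$. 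Two elementary facts about $c_i,d_i$ are then needed. (a) Assuming $c_i>0$, the ratio $d_i/c_i = \sum_j w_j (1/a_j)$ with weights $w_j \propto N_{ij}s_j^2 a_j^2$ summing to $1$, i.e.\ it is a convex combination of the numbers $1/a_j \in [1/\rho,\rho]$, so $d_i/c_i \in [1/\rho,\rho]$. (b) Using $a_j \le \rho = V(x^t)$ and the row-sum bound $\sum_j N_{ij}s_j^2 \le \norm{N D_s^2}_\infty$, we get $c_i \le \eta\,\rho^2\norm{N D_s^2}_\infty \le 1$ by the step-size hypothesis $\eta\norm{ND_s^2}_\infty V(x^t)^2\le 1$.

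By (b) the slope $1-c_i$ lies in $[0,1]$, so $a_i'$ is nondecreasing in $a_i$ on $[1/\rho,\rho]$; evaluating at the two endpoints, $a_i'\big|_{a_i=\rho} = (1-c_i)\rho + d_i \le \rho$ is equivalent to $d_i\le c_i\rho$, and $a_i'\big|_{a_i=1/\rho} = (1-c_i)/\rho + d_i \ge 1/\rho$ is equivalent to $d_i \ge c_i/\rho$, both of which follow from (a). Therefore $a_i'\in[1/\rho,\rho]$. Since this holds for every $i$, $V(x^{t+1})\le\rho = V(x^t)$, which is the claim. I would also dispatch the degenerate case $c_i=0$ (worker $i$ has no co-labeled task, forcing $d_i=0$ and $a_i'=a_i$), note that $a_i'>0$ so positivity of $x^{t+1}$ propagates and the induction hypothesis is maintained, and remark that an optional projection of $x^{t+1}$ onto $[-1,1]^W$ (or any box whose coordinates bracket those of $s$, since $0<s_i\le 1$) only shrinks $V$ and hence does not affect the conclusion.

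\textbf{Main obstacle.} There is no deep difficulty; the one thing to get right is the observation that $a_i'$ is affine in $a_i$ with slope controlled by the step-size condition, together with the sandwiching $d_i/c_i\in[1/\rho,\rho]$ — these two facts reduce the whole statement to checking an affine map at two endpoints. Everything else is bookkeeping (expanding the update, the row-sum bound, the degenerate and projection cases).
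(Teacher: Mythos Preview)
Your proof is correct and follows essentially the same strategy as the paper: pass to the ratios $a_i = x_i^t/s_i$, derive their one-step update, and show they remain in $[1/\rho,\rho]$ using the step-size bound. Your packaging---recognizing $a_i'$ as affine in $a_i$ with slope $1-c_i\in[0,1]$ and $d_i/c_i$ as a convex combination of the $1/a_j$---is a slightly cleaner rewriting of the paper's direct inequality chains (which parametrize $r_j^t=\beta Z$ and $r_j^t=\mu Z^{-1}$ and bound each case separately), but the underlying argument is identical.
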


\begin{proof} Let us use the notation $a./b$ for the elementwise ratio of vectors $a$ and $b$, and define $r^t = x^t./s$. As a consequence,  $V(x^t) = \max_{i=1, \ldots, n} \max \left( r_i^t, 1/r_i^{t} \right)$. Now suppose $V(x^t) = Z$, which is a positive number due to the assumed positivity of $x^t$ and $s$; this  implies that \begin{equation} \label{yineq} Z^{-1} \leq r_i^t \leq Z \mbox{ for all } i = 1, \ldots, n. \end{equation} Let us fix some index $j$. The prove the lemma we just need to prove that Eq. (\ref{yineq}) holds for $r_j^{t+1}$. 

Suppose first that $r_j^t = \beta Z$ where $\beta \in (0,1]$. Then
\begin{eqnarray*} x_j^{t+1} & = & x_j^t - \eta \sum_{k=1}^n N_{jk} x_k^t (x_k^t x_j^t - s_k s_j) \\ 
 & = & x_j^t - \eta\sum_{k=1}^n N_{jk} s_k^2 s_j r_k^t (r_k^t r_j^t - 1) 
\end{eqnarray*}
and therefore
\begin{eqnarray} r_j^{t+1} & = & r_j^t - \eta \sum_{k=1}^n N_{jk} s_k^2 r_k^t (r_k^t r_j^t - 1) \label{firstline} 
\end{eqnarray} 
Now since $r_j^t = \beta Z$ and $r_k^t \geq Z^{-1}$ for all $k$, we have 
\begin{eqnarray}
r_{j}^{t+1} & \leq & \beta Z - \eta \sum_{k=1}^n N_{jk} s_k^2 r_k^t (Z^{-1} \beta Z - 1) \nonumber \\ 
& = & \beta Z + \eta (1-\beta) \sum_{k=1}^n N_{jk} s_k^2 r_k^t  \nonumber \\ 
& \leq & \beta Z + \eta (1-\beta) ||N D_s^2||_{\infty} Z  \nonumber \\
& \leq & \beta Z + (1-\beta) Z \nonumber \\ 
& = & Z, \nonumber
\end{eqnarray} where we used our step-size bound as well as the fact that $V(x^t) \geq 1$ due to the definition of $V(\cdot)$. This proves  the upper bound we seek. 

For the other direction, suppose $r_j^t = \mu Z^{-1}$ where now $\mu \in [1,\infty)$. From Eq. (\ref{firstline}), and using the fact that $r_k \leq Z$ for all $k$, we then have 
\begin{eqnarray*} r_j^{t+1} & \geq & \mu Z^{-1} - \eta \sum_{k=1}^n N_{jk} s_k^2 r_k (Z \mu Z^{-1}  - 1) \\
& = & \mu Z^{-1} - \eta (\mu - 1) \sum_{k=1}^n N_{jk} s_k^2 r_k \\ 
& \geq & \mu Z^{-1} - \eta (\mu - 1) ||N D_s^2||_{\infty} Z \\ 
& \geq & \mu Z^{-1} - \eta(\mu - 1) Z^{-1} ||N D_s^2||_{\infty} Z^2 \\
& \geq & \mu Z^{-1} -  (\mu - 1) Z^{-1} \\
& = & Z^{-1}
\end{eqnarray*}
\end{proof}

As a consequence of this proposition, we have the following bound on how big the iterates $x^t$ can get. 

\begin{corollary} \label{corbound} Suppose $s$ and $x^0$ belong to $[\kappa, K]^W$ where $0 < \kappa \leq K$. If the step-size $\eta$ of PGD algorithm satisfies $$0 \leq \eta \leq \frac{\kappa^2}{K^2 ||N D_s^2||_{\infty}},$$ then $V(x^t) \leq K/\kappa$ and $x^t \in [\kappa^2/K,K^2/\kappa]^W$ for all $t=1,2,\ldots$. 
\end{corollary}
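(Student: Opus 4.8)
The plan is to prove both assertions at once by induction on $t$, with Proposition~\ref{lem:lyapunov} doing all the work. The induction hypothesis I would carry is that $x^t$ is coordinatewise positive and $V(x^t) \leq K/\kappa$; granting this for all $t$, the containment $x^t \in [\kappa^2/K, K^2/\kappa]^W$ falls out at the end from the definition of $V$ together with $s \in [\kappa,K]^W$.

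For the base case $t=0$: since $s_i$ and $x_i^0$ both lie in $[\kappa,K]$, each of the ratios $x_i^0/s_i$ and $s_i/x_i^0$ is at most $K/\kappa$, so $V(x^0) \leq K/\kappa$, and positivity of $x^0$ is assumed. For the inductive step, I would first check that the step-size hypothesis of Proposition~\ref{lem:lyapunov} holds at time $t$: using the induction hypothesis $V(x^t)\leq K/\kappa$ and the assumed bound $\eta \leq \kappa^2 / (K^2\, ||N D_s^2||_{\infty})$,
\[ \eta\, ||N D_s^2||_{\infty}\, V(x^t)^2 \;\leq\; \eta\, ||N D_s^2||_{\infty}\, \frac{K^2}{\kappa^2} \;\leq\; 1 . \]
Hence Proposition~\ref{lem:lyapunov} applies and gives $V(x^{t+1}) \leq V(x^t) \leq K/\kappa$, which is one half of the induction hypothesis at $t+1$. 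For the other half, positivity of $x^{t+1}$, I would note that the computation inside the proof of Proposition~\ref{lem:lyapunov} in fact establishes $x_j^{t+1}/s_j \geq V(x^t)^{-1} > 0$ for every $j$, and since $s_j>0$ this yields $x_j^{t+1}>0$; thus $V(x^{t+1})$ is well-defined and the induction continues.

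Finally, from $V(x^t)\leq K/\kappa$ and $s_i\in[\kappa,K]$ I would read off, coordinate by coordinate, that $x_i^t \leq (K/\kappa)\,s_i \leq K^2/\kappa$ and $x_i^t \geq (\kappa/K)\,s_i \geq \kappa^2/K$, which is precisely the claimed cube. I do not expect a genuine obstacle here: the only points requiring care are to invoke Proposition~\ref{lem:lyapunov} with the \emph{uniform} bound $V(x^t)\leq K/\kappa$ (so that a single, $t$-independent choice of $\eta$ works for every iteration rather than a shrinking sequence of step sizes), and to track positivity of the iterates explicitly through the induction so that $V$ stays finite and the proposition remains applicable at each step.
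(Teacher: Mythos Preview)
Your proposal is correct and follows essentially the same inductive argument as the paper: establish $V(x^0)\le K/\kappa$, use this to verify the step-size hypothesis of Proposition~\ref{lem:lyapunov} uniformly in $t$, iterate to get $V(x^t)\le K/\kappa$ for all $t$, and read off the cube containment from the definition of $V$. Your explicit tracking of positivity of $x^{t+1}$ through the lower bound $r_j^{t+1}\ge V(x^t)^{-1}$ from the proposition's proof is a detail the paper leaves implicit, but otherwise the two arguments coincide.
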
 

\begin{proof} 
Clearly, $V(x^0)^2 \leq (K/\kappa)^2$ given the fact that both $s$ and $x^0$ belong to $[\kappa,K]$. Then, $\eta$ satisfies the step-size condition of Proposition \ref{boundlem} at time $0$. Using Proposition \ref{boundlem}, we can conclude that $V(x^1) \leq V(x^0)$ which implies $V(x^1)^2 \leq V(x^0)^2\leq(K/\kappa)^2$ as well. By applying the same technique iteratively, we have $V(x^t)\leq\ldots\leq V(x^1) \leq V(x^0)\leq K/\kappa,\forall t$. Since $s \in [\kappa,K]^W$, this implies $x^t \in [\kappa^2/K,K^2/\kappa]^W$. 
\end{proof}

A consequence of the last corollary result is that the function 
\[ L_{\rm noiseless}(x) \doteq \frac{1}{2} \sum_{(i,j) \in E} N_{ij} (s_i s_j - x_i x_j)^2. \] can, for all practical purposes, be assumed to have a gradient which is Lipschitz. Of course, this is false over all of $\R^n$, but since the iterates of the PGD method stay within a compact set, the gradient of $L_{\rm noiseless}$ will be Lipschitz over the region of interest. In particular, we have the following estimate. 

\begin{proposition}\label{lem:firstorder}  For any $y,z \in [0, K^2/\kappa]^W$, we have that 
\[ ||\nabla L_{\rm noiseless}(y) - \nabla L_{\rm noiseless}(z)||_2 \leq \hat{L} ||y - z||_2, \] with 
\[ \hat{L} = 4W  ||N||_F \frac{K^4}{\kappa^2}. \] 
\end{proposition}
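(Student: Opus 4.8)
The plan is to prove this by controlling the Hessian of $L_{\rm noiseless}$ uniformly over the cube. First I would write down the gradient explicitly, reading it off (up to sign) from the PGD update rule of the paper:
\[
[\nabla L_{\rm noiseless}(x)]_k = \sum_{j} N_{kj}\,(x_k x_j - s_k s_j)\,x_j,
\]
where the sum runs over $j \ne k$ with the convention $N_{kj}=0$ whenever $(k,j)\notin E$. Differentiating once more gives the symmetric Hessian $H(x)$ with diagonal entries $H_{kk}(x) = \sum_j N_{kj} x_j^2$ and off-diagonal entries $H_{kl}(x) = N_{kl}(2 x_k x_l - s_k s_l)$. (Here I would note that $s \in [\kappa,K]^W \subset [0,K^2/\kappa]^W$ is the ambient hypothesis carried over from Theorem~\ref{thm:noiselessPGD} and Corollary~\ref{corbound}, so that both $x$ and $s$ can be treated as living in the cube.)

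Next I would bound $\|H(x)\|_2$ uniformly for $x$ in $\mathcal{Q}:=[0,K^2/\kappa]^W$. Writing $M := K^2/\kappa$, every coordinate of $x$ lies in $[0,M]$, hence $|2x_kx_l - s_ks_l| \le \max(2M^2, K^2) = 2M^2$ (using $K\ge\kappa$, so $K^2\le M^2$), giving $|H_{kl}(x)| \le 2M^2 N_{kl}$ for $l\ne k$ and $H_{kk}(x) \le M^2 \sum_j N_{kj}$. I would then pass to the Frobenius norm: the off-diagonal part satisfies $\sum_{k\ne l} H_{kl}(x)^2 \le 4M^4 \sum_{k\ne l} N_{kl}^2 = 4M^4 \|N\|_F^2$ (since $N_{kk}=0$), and, by Cauchy--Schwarz, the diagonal part satisfies $\sum_k H_{kk}(x)^2 \le M^4 \sum_k \big(\sum_j N_{kj}\big)^2 \le M^4 W \sum_{k,j} N_{kj}^2 = M^4 W \|N\|_F^2$. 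Adding these and taking square roots yields $\|H(x)\|_2 \le \|H(x)\|_F \le M^2 \|N\|_F \sqrt{W+4}$, and since $\sqrt{W+4} \le 4W$ for all $W\ge 1$ while $M^2 = K^4/\kappa^2$, this is at most $\hat L = 4W\|N\|_F K^4/\kappa^2$.

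Finally, for $y,z\in\mathcal{Q}$ the segment $[y,z]$ stays inside the convex set $\mathcal{Q}$, so the fundamental theorem of calculus along that segment gives
\[
\nabla L_{\rm noiseless}(y) - \nabla L_{\rm noiseless}(z) = \left( \int_0^1 H\big(z + t(y-z)\big)\,dt \right)(y-z),
\]
and taking $2$-norms gives $\|\nabla L_{\rm noiseless}(y) - \nabla L_{\rm noiseless}(z)\|_2 \le \sup_{x\in\mathcal{Q}}\|H(x)\|_2\,\|y-z\|_2 \le \hat L\,\|y-z\|_2$.

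I do not expect a real obstacle here; this is essentially a bookkeeping lemma. The only points needing mild care are (a) invoking $s\in[\kappa,K]^W$ as an ambient assumption rather than something stated in the proposition, and (b) keeping the arithmetic clean so the constant genuinely collapses to $4W\|N\|_F K^4/\kappa^2$ — in particular that $x_j^2 \le M^2$ (not $M^4$) and that $\sqrt{W+4}\le 4W$. If one wished to avoid the Hessian altogether, the same bound follows from the algebraic identity $y_k y_j^2 - z_k z_j^2 = (y_k-z_k)y_j^2 + z_k(y_j+z_j)(y_j-z_j)$ together with the triangle inequality and Cauchy--Schwarz, but the Hessian route is cleaner to present.
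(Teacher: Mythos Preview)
Your proof is correct. You take a different route from the paper: you compute the Hessian $H(x)$ explicitly, bound $\|H(x)\|_F$ uniformly over the cube, and invoke the integral mean-value formula along the segment. The paper instead writes $\nabla L_{\rm noiseless}(x) = P_x x$ with $P_x = N \circ (xx^\top - ss^\top)$, decomposes $P_y y - P_z z = P_y(y-z) + (P_y - P_z)z$, and bounds each piece via a separate Frobenius-norm lemma for $N \circ (yy^\top - zz^\top)$ (itself proved by the telescoping $yy^\top - zz^\top = y(y-z)^\top + (y-z)z^\top$, which is essentially the algebraic identity you mention at the end). Your Hessian route is arguably cleaner and even yields the sharper intermediate constant $M^2\|N\|_F\sqrt{W+4}$ before relaxing to $4W M^2\|N\|_F$; the paper's approach avoids second differentiation but is slightly more ad hoc. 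The care points you flag --- the ambient hypothesis $s\in[\kappa,K]^W$ (so $|s_k s_l|\le K^2\le M^2$) and the inequality $\sqrt{W+4}\le 4W$ for $W\ge 1$ --- are handled correctly.
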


\begin{proof} We begin by establishing the following claim.  {\bf Claim:} Suppose $y,z$ are vectors belonging to the cube $y,z \in [0,A]^W$. Then $$||N \circ (y y^T - z z^T)||_F \leq 2 \sqrt{W} A  ||N||_F ||y-z||_2.$$

This proposition could be obtained by a simple algebraic manipulation as
\begin{eqnarray} ||N \circ (y y^T - z z^T ) ||_F
 & = & {\rm Tr}({\rm abs} (N) {\rm abs} (y y^T - z z^T)  )\label{firstl} \\
 & \leq & ||N||_F || y y^T - z z^T||_F  \label{secondl} \\ 
 & = & ||N||_F ||y y^T - y z^T + y z^T - z z^T||_F  \label{thirdl} \\ 
 & \leq & ||N||_F \left( ||y(y-z)^T||_F + ||(y-z)z^T||_F \right) \label{fourthl} \\
 & \leq & 2 ||N||_F \sqrt{W} A ||y-z||_2  \label{fifthl}
 \end{eqnarray}
Eq. (\ref{firstl}) follows via the inequality $||A \circ B||_F \leq {\rm Tr}({\rm abs}(A) {\rm abs}(B)^T)$. Eq. (\ref{secondl}) is obtained by the Cauchy-Schwarz inequality in the form of ${\rm Tr}(AB) \leq ||A||_F |||B||_F$.  Eq. (\ref{thirdl}) and Eq. (\ref{fourthl}) are self-explanatory. Eq. (\ref{fifthl}) follows because every entry of the vectors $y,z$ is at most $A$. This concludes the proof of the claim.

Let $P_x = N \circ (x x^T - s s^T)$. Then $\nabla L(x) = P_x x$ and therefore
\begin{eqnarray*} || \nabla L_{\rm noiseless}(y) - \nabla L_{\rm noiseless}(z)||_2 & = & ||P_y y - P_z z||_2 \\ 
& = & ||P_y y - P_y z + P_y z - P_z z ||_2 \\ 
& \leq & ||P_y||_2 ||y-z||_2 + ||P_y - P_z|| ||z||_2 \\
& \leq & 2 \sqrt{W} \frac{K^2}{\kappa}  ||N||_F ||y-z||_2 ||y-z||_2\\
& &+ 2 \sqrt{W} \frac{K^2}{\kappa} ||N||_F  ||y-z||_2 ||z||_2,
\end{eqnarray*} where the last step used that the $2$-norm of a matrix is upper bounded by its Frobenius norm as well as the above claim. Now using the bound $||z||_2 \leq (K^2/\kappa) \sqrt{W}$ and the same for $||y-z||_2$, we obtain the lemma. 

\end{proof}

We now use the standard fact that, for a function $f(x)$ with $L$-Lipschitz gradient, gradient descent with step-size $h<2/L$ generates a sequence which satisfies 
\[ f(x^{t+1}) = f(x^t) - h \left( 1 - \frac{L}{2} h \right) ||\nabla f(x^t)||_2^2. \] In particular, $\nabla f(x^t) \rightarrow 0$ under these conditions if $f(x^t)$ is bounded below.

Clearly, $L_{\rm noiseless}$ is bounded below by zero. Thus, to argue that gradient descent on $L_{\rm noiseless}$ results in $x^t \rightarrow s$, we just need to argue that the only point that satisfies $\nabla L_{\rm noiseless}(x)=0$ is $x=s$. 

We complete the proof of Theorem \ref{thm:noiselessPGD} by now proving that last statement. 

Observe that 
\[ [\nabla L_{\rm noiseless}(x)]_i = \sum_{j=1}^W N_{ij} (x_i x_j - s_i s_j) x_j, \] where  the interaction matrix $N$ which is nonnegative, irreducible, symmetric, and with zero diagonal. What we need to argue is that, given $s > 0$, there does not exist $x > 0, x \neq s$ such that for each $i=1, \ldots, W$, we have
\begin{equation} \label{maineq} \sum_{j=1}^W N_{ij} (x_i x_j - s_i s_j) x_j = 0. \end{equation}

We begin by adopting the following notation. For a vector $x$, $D_x$ will refer to the diagonal matrix with $x$ on the diagonal. For a matrix $A$, ${\rm diag}\left[ A\right]$ will refer to the \emph{diagonal of $A$ stacked as a vector} (note that this is an \textbf{unusual} notation). 
Also, let us refer to the set of matrices which are nonnegative, irreducible, aperiodic, symmetric and with  zero diagonal as {\em admissible}. 

Assume that $x$ satisfies Eq. \eqref{maineq}.
Then, we can multiply the $i$th equation of \eqref{maineq} by $x_i$.
Our first observation is that we may rewrite Eq. (\ref{maineq}) as
\begin{equation} \label{rewritten} {\rm diag}\left[ D_x N D_x (x x^T - s s^T) \right] = 0. \end{equation} 
It suffices to argue that we cannot positive $x$ and admissible $F$ such that
\[ {\rm diag}\left[ F (x x^T - s s^T ) \right]  = 0. \] 
Note that we were able to drop the $D_x$s from the equation because $N$ is admissible if and only if $D_x N D_x$ is.

We proceed as follows. Since 
\[ x_i x_j - s_i s_j = s_i \left( \frac{x_i}{s_i} \frac{x_j}{s_j} - 1 \right) s_j\,, \] defining $u_i = x_i/s_i$, we have that $u$ is positive and that 
\[ x x^T - s s^T = D_{s} (u u^T - 1 1^T) D_s\,. \] We must therefore argue that it is impossible to find $u>0, u \neq 1$ and admissible $F$ such that 
\[ {\rm diag} \left[  D_s F D_s (u u^T - 1 1^T) D_s \right] = 0.\] Since $s>0$ it will suffice to argue that we cannot find $u>0, u \neq 1$ and admissible $Z$ such that  
\begin{equation} \label{zequation} {\rm diag} \left[ Z (u u^T -1 1^T) \right] = 0.\end{equation} 

Without loss of generality, we can assume that  $u_1 \leq u_2 \leq \cdots \leq u_W$; we can always relabel indices to make this hold. 

Now there are three possibilities:
\begin{enumerate}\item $u_1 u_W > 1$.
\item $u_1 u_W = 1$. 
\item $u_1 u_W < 1$. 
\end{enumerate}

We argue that in each case we cannot find a suitable $u$ that satisfies Eq. (\ref{zequation}). Indeed, let us  consider the first possibility. In that case the last column of $u u^T - 1 1^T$, with entries $u_i u_W-1$, is strictly positive, and therefore, considering that $[Z(uu^T - 1 1^T)]_{WW}=0$, we obtain that the last row of $Z$ must be zero --  contradicting irreducibility.  Similarly, in case $3$, the first column of $u u^T - 1 1^T$, with entries $u_1 u_i-1$, is negative, and, considering that $[Z(uu^T - 1 1^T)]_{11}=0$, we see that the first row of $Z$ must be zero, which can not hold true.

It remains to consider case $2$.  We may assume that $u_1  < u_W$ (ruling out the possibility that a $u$ proportional to the all-ones vector satisfies Eq. (\ref{zequation}) is trivial). We break up $\{1,\ldots, W\}$ into three blocks. The first block is all the indices $j$ such that $u_j = u_1$. The third block is all the indices $j$ such that that $u_j = u_W$. All the other indices go into block $2$. Note that block 2 may be empty, for example if every entry of $u$ is equal to $u_1$ or $u_W$.

The advantage of partitioning this way is that 
the matrix $u u^T - 1 1^T$ has the following sign structure:

\[ u u^T - 1 1^T = \left( \begin{array}{c|c|c} 
- & - & 0 \\
\hline
- & * & + \\
\hline
0 & + & + 
\end{array} \right)\] where $-$ represents a strictly negative submatrix, $+$ represents a strictly positive submatrix, while $*$ represents a submatrix that can have elements of any sign.  The strict negativity comes from the fact that $u_1 < u_W$. 

 Partitioning $Z$ in a compatible manner, we have that 
\[ {\rm diag} \left[   \left( \begin{array}{c|c|c} 
Z_{11} & Z_{12} & Z_{13} \\
\hline
Z_{21} & Z_{22} & Z_{23} \\
\hline
Z_{31} & Z_{32} & Z_{33} 
\end{array} \right) \left( \begin{array}{c|c|c} 
- & - & 0 \\
\hline
- & * & + \\
\hline
0 & + & + 
\end{array} \right) \right] = 0.\]

Considering the $(1,1)$ diagonal block of the above product,
noting that $Z\ge 0$,
 we obtain $Z_{11} = Z_{12}=0$; and considering the $(3,3)$ diagonal block of the above product we obtain  $Z_{32} =  Z_{33}=0$. 
 By symmetry, also $Z_{21} = 0$ and $Z_{33}=0$.

From here we can easily derive a contradiction. Indeed,  if the middle block is nonempty, the matrix is reducible; and if the second block is empty, it is periodic.

\section{Proof of Theorem~\ref{thm:pert}}

\aor{Let us adopt the convention that when $a=(a_1, \ldots, a_n)$ is a vector, we will understand $e^a$ to apply to it elementwise, i.e., $e^a = (e^{a_1}, \ldots, e^{a_n})$.  We will find it convenient to do our analysis in terms of the variables $z(t)$ defined through the relation $x(t) = e^{z(t)}$. In terms of these variables, we can rewrite Eq. (\ref{expgrad}) as 
\begin{equation} \label{eq:firstzeq} e^{z(t+1)} = P_{{\cal C}} \left[ e^{z(t) - \alpha \nabla_t} \right]. \end{equation}} \aor{Observe that projection of a vector $a$ onto the cube ${\cal C}$ simply thresholds each $a_i$ between $\kappa$ and $K$. Inspecting Eq. (\ref{eq:firstzeq}), we therefore see that we can move the projection inside the exponentiation if we instead project onto the cube 
$\Omega = [\ln \kappa, \ln K]^W$:  
\[ e^{z(t+1)} = e^{P_{\Omega} \left[ z(t) - \alpha \nabla_t \right]}, \]  or
\begin{equation} \label{eq:secondzeq} z(t+1) = P_{\Omega} \left[ z(t) - \alpha \nabla_t  \right]. 
\end{equation}} \aor{The trick that makes the proof possible is that we can construct a function so that Eq. (\ref{eq:secondzeq}) becomes a projected gradient descent iteration. To that end, we define
\[ g_{\Delta}(z) = \frac{1}{2} \sum_{i,j=1}^W N_{ij} e^{z_i + z_j} - \sum_{i=1}^W  z_i \sum_{j=1}^n N_{ij} (s_i s_j + \Delta_{ij}). \] The key observation is that the gradient of this function is
\begin{eqnarray*}  [\nabla g_{\Delta}(z(t))]_i & = & \sum_{j=1}^W N_{ij} e^{z_i(t) + z_j(t)} - \sum_{j=1}^n N_{ij} \left(s_i s_j + \Delta_{ij} \right) \\ 
& = & \sum_{j=1}^W N_{ij} (e^{z_i(t) + z_j(t)} - s_i s_j - \Delta_{ij}) \\ 
& = & [\nabla_t]_i,
\end{eqnarray*} where the last step used the definition of $z(t)$, i.e., $e^{z_k(t)} = x_k(t)$ for all $k=1, \ldots, W$. }

\aor{Thus we have that
\begin{equation}\label{eq:pgdz} z(t+1) = P_{\Omega} \left[ z(t) - \alpha \nabla g_{\Delta}(z(t)) \right]. 
\end{equation} Because we now have a projected gradient descent iteration on a convex function (it is, of course, immediate that $g_{\Delta}(z)$ is convex), it should now be  clear that an analysis of this iteration in terms of $z(t)$ id possible, provided by we can
upper bound the condition number of the function $g_{\Delta}(z)$ over the region $z \in [\ln \kappa, \ln K]^W$.  To analyze this condition number, we argue as follows. }

\aor{First, because $\nabla^2 g_{\Delta}(z) =  L_s((N_{ij}/2)e^{z_i+ z_j})$, where $L_s(w_{ij})$ refers to the signless Laplacian with weights $w_{ij}$, i.e., 
\[ L_s(w_{ij}) = \sum_{i,j=1}^W w_{ij} (e_i + e_j) (e_i + e_j)^T. \] It follows that $\lambda_{\rm min}(L_s(w_{ij}))$ is a monotonic function of the weights $\{ w_{ij}\}$; this then implies that  
$g_{\Delta}(z)$ is a $\mu$-strongly convex over $\Omega$, where $\mu = \kappa^2 N_{\rm min} \lambda_{\rm min}(L_s) $, where $L_s$ is the signless Laplacian of the unweighted graph corresponding to the interaction matrix $N_{ij}$ and $N_{\rm min}$ is the smallest positive $N_{ij}$. We remind the reader that it was shown in \citet{desai1994characterization} that $\lambda_{\rm min}(L_s) \geq 1/W^3$. } 

\aor{Second, we need to argue that $g_{\Delta}(z)$ has gradient that is $L$-Lipschitz, along with an estimate for $L$. To this end, we reprise the argument we used in the proof of Theorem \ref{thm:noiselessPGD} and argue that for any $a,b \in [\ln \kappa, \ln K]^W$ we have: 
\begin{eqnarray*} || \nabla g_{\Delta}(a) - \nabla g_{\Delta}(b)||_2 & = & ||{\rm diag}(e^a) N e^a - {\rm diag}(e^b) N e^b||_2 \\
& \leq & ||{\rm diag}(e^a) N (e^a - e^b)||_2  + ||({\rm diag}(e^a) - {\rm diag}(e^b)) N e^b||_2 \\   
& \leq & K ||N||_2 K ||a-b||_2 + K\max_i |a_i - b_i|  ||N||_2 K \sqrt{W}\\ 
& \leq & 2 ||N||_2 K^2 \sqrt{W} ||a-b||_{2},  
\end{eqnarray*}  where we used that for vectors $a,b \in [ \ln \kappa, \ln K]^W$ we have that 
\begin{eqnarray*} ||{\rm diag}(e^a)||_2 &  \leq  & K \\ 
||e^b||_2 & \leq & K \sqrt{W} \\ 
||e^a - e^b||_2 & \leq & K ||a-b||_2 \\ 
 ||{\rm diag}(e^a) - {\rm diag}(e^b)||_2 & \leq &  K \max_{i=1, \ldots, W} |a_i - b_i |
\end{eqnarray*}
 In conclusion, we may take $L=2 \sqrt{W} ||N||_2 K^2$.} 
 
 \aor{Having obtained bounds on $L, \mu$, we next analyze the performance of Eq. (\ref{eq:pgdz}). Defining, 
\[ z_{\Delta}^* := \arg \min_{z \in \Omega} g_{\Delta}(z), \] we have that $z(t)$ converges to $z^*$ with the choice $\alpha = 1/L$; as an immediate consequence, we have that $x(t) \rightarrow x_{\Delta}^*$ where 
\[ x_{\Delta}^* = e^{z_{\Delta}^*}. \] This proves the first assertion of the theorem, that $x(t)$ converges.} 

\aor{To prove the second and third part, we first need to establish the following technical lemma.} 

\aor{\begin{lemma} \label{lem:int} If $s$ lies in the interior of $[\kappa, K]^W$ and $\max_{i,j} |\Delta_{ij}|$ is small enough, then $z_{\Delta}^*$ lies in the interior of $\Omega = [\ln \kappa, \ln K]^W$. 
\end{lemma}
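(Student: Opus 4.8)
The plan is to first identify the minimizer in the unperturbed case $\Delta=0$, and then invoke a standard Lipschitz-stability estimate for the minimizer of a strongly convex function under a linear perturbation.

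First I would pin down $z_0^{*}$. The function $g_0$ is convex on all of $\R^W$, and its gradient vanishes at the coordinatewise logarithm $z=\ln s$: indeed $e^{\ln s_i+\ln s_j}=s_i s_j$, so $[\nabla g_0(\ln s)]_i=\sum_j N_{ij}(s_is_j-s_is_j)=0$ for every $i$. Hence $\ln s$ is a global minimizer of $g_0$ over $\R^W$; since $s$ lies in the interior of $[\kappa,K]^W$, the point $\ln s$ lies in the interior of $\Omega=[\ln\kappa,\ln K]^W$, and therefore $z_0^{*}=\ln s$. Set
\[
d \;:=\; \min_{1\le i\le W}\,\min\{\,\ln s_i-\ln\kappa,\;\ln K-\ln s_i\,\}\;>\;0,
\]
which is exactly the Euclidean distance from $\ln s$ to $\partial\Omega$, so that any point at distance less than $d$ from $\ln s$ still lies in the interior of $\Omega$.

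Next I would invoke the perturbation bound. Observe that $g_\Delta-g_0$ is exactly the linear map $z\mapsto\langle c_\Delta,z\rangle$ with $[c_\Delta]_i=-\sum_j N_{ij}\Delta_{ij}$; in particular $\nabla^2 g_\Delta\equiv\nabla^2 g_0$, so, by the argument already given in the proof of Theorem~\ref{thm:pert}, both $g_0$ and $g_\Delta$ are $\mu$-strongly convex on $\Omega$ with $\mu=\kappa^2\lambda_{\rm min}(L_s)N_{\rm min}$. Writing the first-order optimality condition for $z_\Delta^{*}=\argmin_{z\in\Omega}g_\Delta(z)$ and evaluating it at the feasible comparison point $\ln s\in\Omega$ gives $\langle\nabla g_0(z_\Delta^{*})+c_\Delta,\;\ln s-z_\Delta^{*}\rangle\ge 0$; combining this with $\nabla g_0(\ln s)=0$, the $\mu$-strong monotonicity of $\nabla g_0$ along the segment $[\ln s,z_\Delta^{*}]\subset\Omega$, and Cauchy--Schwarz on the $c_\Delta$ term yields
\[
\|z_\Delta^{*}-\ln s\|_2\;\le\;\frac{\|c_\Delta\|_2}{\mu}\;\le\;\frac{\sqrt{W}\,\|N\|_\infty}{\mu}\,\max_{i,j}|\Delta_{ij}|.
\]

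Finally, choosing $\max_{i,j}|\Delta_{ij}|$ small enough that the right-hand side above is strictly smaller than $d$, I conclude that $z_\Delta^{*}$ lies within Euclidean distance $d$ of $\ln s$, hence in the interior of $\Omega$, which is the claim. I do not expect a genuine obstacle here; the only step requiring care is the stability estimate — one must apply the constrained optimality inequality at the feasible point $\ln s$ and use strong convexity on the segment joining $\ln s$ and $z_\Delta^{*}$ (legitimate because $\Omega$ is convex and contains both endpoints), and one must keep the bookkeeping between the margin $d$ and the $\ell_2$ bound consistent.
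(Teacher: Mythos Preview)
Your argument is correct, but it takes a genuinely different route from the paper's. The paper proves the lemma by a soft continuity argument on function \emph{values}: since $g_0$ attains its unique minimum at $\ln s$ in the interior of $\Omega$, there is a ball $\mathcal{B}$ around $\ln s$ on which $g_0$ is strictly below its values on $\partial\Omega$; since $g_\Delta-g_0$ is uniformly small on the compact set $\Omega$ when $\max_{i,j}|\Delta_{ij}|$ is small, the same gap persists for $g_\Delta$, forcing its minimizer off the boundary. You instead give a direct quantitative bound on the \emph{minimizer}: using the variational inequality for the constrained optimum together with $\mu$-strong monotonicity of $\nabla g_0$ on the segment in $\Omega$, you obtain $\|z_\Delta^{*}-\ln s\|_2\le \mu^{-1}\|c_\Delta\|_2$, and then compare to the distance $d$ from $\ln s$ to $\partial\Omega$.

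Your route is the sharper one. It delivers an explicit threshold on $\max_{i,j}|\Delta_{ij}|$ and, incidentally, already contains the perturbation estimate that constitutes item~3 of Theorem~\ref{thm:pert}. The paper, by contrast, first proves the lemma qualitatively and only afterwards---using the interior conclusion to justify $\nabla g_\Delta(z_\Delta^{*})=0$---derives the same stability bound. Your use of the first-order optimality \emph{inequality} at the feasible point $\ln s$ neatly sidesteps any need to know in advance that $z_\Delta^{*}$ is interior, which is exactly what makes the argument noncircular. The only bookkeeping to keep straight---strong convexity along the segment $[\ln s,z_\Delta^{*}]\subset\Omega$, and matching the $\ell_2$ bound to the $\ell_\infty$-type margin $d$---you have handled correctly.
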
} 

\begin{proof} \aor{We first argue that the unique minimizer of $g_0(z)$ (i.e., when $\Delta=0$) is $z^* = \ln s$. Indeed, observe that $\nabla g_0(\ln s) =0$. Moreover, we have already discussed that, over the region $[\ln a, \ln b]^W$, the Hessian of $g_0$ has eigenvalues lower bounded by $\mu = a^2 N_{\rm min} \lambda_{\rm min}(L_s)$. Since the assumption that the graph corresponding to $N$ is connected and non-bipartite renders implies the signless Laplacian $L_s$ is positive definite (again, see \cite{desai1994characterization}), we obtain that $g_0(z)$ is strictly convex. Thus $z^*=\ln s$ is the unique minimizer of $g_0$ over $\Omega$.} 

\aor{It follows that the same property holds for small enough $\Delta$ ``by continuity.'' More formally, the argument is as follows. As a consequence of the function that $\ln s$ is the unique minimizer of $g_0(z)$ over $\Omega$, we have that there is a ball ${\cal B}$ of positive radius around $\ln s$ such that the function $g_0(z)$ is strictly smaller on ${\cal B}$ than it is on any point of the boundary of the cube $\Omega$. It follows that, for small enough $\Delta$, the function $g_{\Delta}$ will also be strictly smaller on ${\cal B}$ than on any point on the boundary of $\Omega$. This implies that the minimium of $g_{\Delta}$ occurs in the interior of $\Omega$.}
\end{proof} 

\aor{Having established this lemma, we now turn to an analysis of the convergence times. We have that that standard results for projected gradient descent on strongly convex functions, with the choice of of step-size $\alpha = 1/L$, we have that (see Theorem 2.4 of \cite{hazan2016introduction})
\begin{equation} \label{eq:bdecay} g_{\Delta}(z(t)) - g_{\Delta} (z^*) \leq e^{-t\mu/(4L)} \left( g_{\Delta} (z(0)) - g_{\Delta}(z^*) \right), \end{equation} where $\mu$ is the strong convexity coefficient. Our next step is to translate this into a convergence rate for $x(t)$.} 

\aor{First, using the mean value theorem, we have that for any two scalars $u,v$m
\begin{equation} \label{eq:meanvalue}
\min \{e^u, e^v\} |u-v| \leq  |e^u - e^v| \leq \max \{ e^u, e^v\} |u-v|, 
\end{equation}
and we use this in the next sequence of equations:
\begin{eqnarray*} 
||x(t) - x_{\Delta}^*||_2^2 & \leq  K^2||z(t) - z_{\Delta}^*||_2^2 & \mbox{By  Eq.}(\eqref{eq:meanvalue}). \\ 
& \leq K^2 \frac{2}{\mu} \left( g_{\Delta}(z(t)) - g_{\Delta}(z^*) \right) & \\
& \leq  K^2 \frac{2}{\mu} e^{-t\mu/(4L)} \left( g_{\Delta}(z(0)) - g_{\Delta}(z^*) \right) &  \mbox{By  Eq.} (\eqref{eq:bdecay})
\end{eqnarray*}
Now any function $h(y)$ convex over a convex region $\mathcal{R}$ with $L$-Lipschitz gradient over the same region satisfies (see proof of Lemma 1.2.3 in \cite{Nest04})
\[ h(y_1) \leq h(y_2) + \nabla h(y_2)^T (y_1 - y_2) + \frac{L}{2} ||y_1 - y_2||_2^2, \] for any $y_1, y_2 \in \mathcal{R}$. If $y_2$ is further chosen to be a point satisfying $\nabla h(y_2) = 0$, then  we have \[ h(y_1) - h(y_2) \leq \frac{L}{2} ||y_1 - y_2||_2^2 \] We next apply this to the function $g_{\Delta}$, which is convex with $L$-Lipschitz gradient over the region $\Omega$. By Lemma \ref{lem:int}, the minimizer $z_{\Delta}^*$ lies in the interior of $\Omega$, and consequently we have $\nabla g_{\Delta}(z^*)=0$. Therefore,}

\aor{ \begin{eqnarray*} 
||x(t) - x_{\Delta}^*||_2^2 & \leq  K^2 \frac{L}{\mu}  e^{-t\mu/(4L)} ||z(0) - z_{\Delta}^*||_2^2 &   \\ & \leq  \frac{K^2}{\kappa^2} \frac{L}{\mu} e^{-t\mu/(4L)} ||x(0) - x_{\Delta}^*||_2^2  & \mbox{By  Eq.} (\ref{eq:meanvalue})
\end{eqnarray*}} \aor{Since $||x(0) - x^*||_2^2 \leq W K^2$ because $|x_i(0)| \leq K$, we have that it takes $$\frac{4L}{\mu} \ln \frac{W K^4 L}{\epsilon \mu \kappa^2}$$ iterations until $||x(t) - x_{\Delta}^*||_2^2 \leq \epsilon$.  Since the quantity $L/\mu$ scales polynomially in the number of workers $W$, this proves the second assertion of the theorem, namely that convergence to any $\epsilon$-neighborhood of the limit $x_{\Delta}^*$ occurs in polynomial time. This proves the second assertion of the theorem.}

\aor{We now turn to the last assertion of the theorem, i.e., the bound on $||x_{\Delta}^*-s||_2$. For this part, start with the equation
\[ \nabla g_{\Delta} (z_{\Delta}^*) = 0, \] which we argued above will hold for small enough $\Delta$, and observe that its consequence is that 
\begin{equation} \label{firstpert} || \nabla g_0(z_{\Delta}^*) ||_2 = \left| \left|[\sum_{j=1}^W N_{ij} \Delta_{ij}]_i \right| \right|_2 \leq \sqrt{W} ||N||_{\infty} \max_{i,j} |\Delta_{ij}|. \end{equation}} 
\aor{Our final step is to argue that this implies $z_{\Delta}^*$ is close to $\ln s$. Indeed, let us define  $\phi(t) = \nabla g_0(\ln s + t (z_{\Delta}^* - \ln s))$. We thus have that 
\begin{eqnarray*}
\nabla g_0(z_{\Delta}^*) & = & \phi(1) - \phi(0) \\ 
& = & \int_0^1 \phi'(u) du \\
& = & \int_0^1 \nabla^2 g_0(u)  (z_{\Delta}^* - \ln s) ~ du
\end{eqnarray*} so, multiplying both sides by $z_{\Delta}^*- - \ln s$, we obtain  
\[ (z_{\Delta}^* - \ln s)^T \nabla g_0(z_{\Delta}^*) \geq \mu ||z_{\Delta} - \ln s||_2^2, \] where, as before, $\mu = \kappa^2 N_{\rm min} \lambda_{\rm min}(L_s) $ is a lower bound on the smallest eigenvalue of $\nabla^2 g_0(u)$ when $u \in [\ln \kappa, \ln K]^W$. Now using Cauchy-Schwarz, this implies
\[ ||\nabla g_0(z_{\Delta}^*)||_2 \geq \mu ||z_{\Delta}^* - \ln s||. \] Putting this together with Eq. (\ref{firstpert}), we obtain 
\[ ||z_{\Delta}^* - \ln s|| \leq \frac{\sqrt{W} ||N||_{\infty}}{\mu} \max_{i,j} |\Delta_{ij}|. \] 
Finally, 
\begin{eqnarray*} 
||x_{\Delta}^* - s||_2 & = & ||e^{z_{\Delta}^*} - e^{\ln s}||_2 \\ 
& \leq & K ||z_{\Delta}^* - \ln s||_2 \\ 
& \leq & K  \frac{\sqrt{W} ||N||_{\infty}}{\mu} \max_{i,j} |\Delta_{ij}|.
\end{eqnarray*} This concludes the proof of the theorem.}

\end{document}